\newtheorem{theorem}{Theorem}
\newtheorem{proposition}[theorem]{\indent Proposition}
\newtheorem{lemma}{Lemma}
\begin{document}

    \title{RDARS Empowered Massive MIMO System: Two-Timescale Transceiver Design with Imperfect CSI}
% ~\IEEEmembership{Senior Member,~IEEE}
\author{Chengzhi Ma,
        Jintao Wang,
        Xi Yang,
        Guanghua Yang, 
        Wei Zhang,
        and Shaodan Ma
        % <-this % stops a space
\thanks{C. Ma, J. Wang and S. Ma are with the State Key Laboratory of Internet of Things for Smart City and the Department
of Electrical and Computer Engineering, University of Macau, Macao SAR, China (e-mails: yc07499@um.edu.mo; 
wang.jintao@connect.um.edu.mo;
shaodanma@um.edu.mo).}% <-this % stops a space
\thanks{X. Yang is with the Shanghai Key Laboratory of Multidimensional Information Processing, School of Communication and Electronic Engineering, East China Normal University, Shanghai 200241, China (e-mail: xyang@cee.ecnu.edu.cn).}
\thanks{G. Yang is with the School of Intelligent Systems Science and Engineering and GBA and B{\&}R International Joint Research Center for Smart Logistics, Jinan University, Zhuhai 519070, China (e-mail: ghyang@jnu.edu.cn).}
\thanks{W. Zhang is with the School of Electrical Engineering \& Telecommunications, University of New South Wales, Sydney, Australia (e-mail: w.zhang@unsw.edu.au).}
%\thanks{H. Vincent Poor is with the Department of Electrical and Computer
%Engineering, Princeton University, USA (e-mail: poor@princeton.edu.)}
}
\UseRawInputEncoding

% The paper headers
% \markboth{Journal of \LaTeX\ Class Files,~Vol.~14, No.~8, August~2015}%
% {Shell \MakeLowercase{\textit{et al.}}: Bare Demo of IEEEtran.cls for IEEE Journals}

\maketitle
\begin{abstract} 

    % Due to the non-convexity of the problem and coupling of variables, we first reformulate it into an equivalent one with \textit{Fractional Programming} technique, which is then solved by using the non-convex \textit{block coordinate descent} method. 
    % This expression covers the achievable rate expression derived for RIS-aided, distributed antenna system (DAS) and conventional colocated MIMO systems as special cases. 

    In this paper, we investigate a novel reconfigurable distributed antennas and reflecting surface (RDARS) aided multi-user massive multiple-input multiple-output (MIMO) system with imperfect channel state information (CSI) and propose a practical two-timescale (TTS) transceiver design to reduce the communication overhead and computational complexity of the system. In the RDARS-aided system, not only distribution gain but also reflection gain can be obtained by a flexible combination of the distributed antennas and reflecting surface, which differentiates the system from the others and also makes the TTS design challenging. To enable the optimal TTS transceiver design, the achievable rate of the system is first derived in closed-form. The rate expression is general and covers that of the distributed antenna systems (DAS) and reconfigurable intelligent surface (RIS) aided systems as special cases. Then the TTS design aiming at the weighted sum rate maximization is considered. To solve the challenging non-convex optimization problem with high order design variables, i.e., the transmit powers and the phase shifts at the RDARS, a \textit{block coordinate descent} based method is proposed to find the optimal solutions in semi-closed forms iteratively. Specifically, two efficient algorithms are proposed with provable convergence for the optimal phase shift design, i.e., \textit{Riemannian Gradient Ascent} based algorithm by exploiting the unit-modulus constraints, and \textit{Two-Tier Majorization-Minimization} based algorithm with closed-form optimal solutions in each iteration. Simulation results validate the effectiveness of the proposed algorithm and demonstrate the superiority of deploying RDARS in massive MIMO systems to provide substantial rate improvement with a significantly reduced total number of active antennas/RF chains and lower transmit power when compared to the DAS and RIS-aided systems. 

\end{abstract}

% Note that keywords are not generally used for peer review peer review papers.
\begin{IEEEkeywords}
Reconfigurable distributed antennas and reflecting surfaces (RDARS), multiple-input multiple-out (MIMO), distributed antenna system (DAS), reconfigurable intelligent surface (RIS), performance analysis, fractional programming.
\end{IEEEkeywords}

\IEEEpeerreviewmaketitle

\section{Introduction}
    % Besides, due to the complexity of wireless propagation environments, e.g., the presence of large blocking objects, which may lead to an overall small-signal received power at end-users, making it insufficient to support emerging applications with rate requirements in current wireless architecture \cite{Kangda_Zhi2}. 

   Massive multiple-input multiple-out (MIMO), by utilizing a large number of antennas at the transceiver, particularly at the base station, to exploit rich spatial diversity and multiplexing gains for spectrum efficiency enhancement, has become a key enabling technique to be adopted in the future sixth-generation (6G) wireless communications. Combining the massive MIMO with state-of-the-art radio-over-fiber (ROF) technologies \cite{Lisu_Yu}, the massive active antennas can be deployed in a centralized or distributed manner to support diverse communication needs in various applications. When deployed in distributed manner, the system is also termed to be a distributed antenna system (DAS) \cite{Lin_Dai}. It can achieve additional distribution gain due to the richer scattering environments and thus is anticipated to play an essential role in future wireless communication systems.  
   
   %the deployment of massive active antennas in a distributed manner gives rise to the distributed antenna system (DAS) \cite{Lin_Dai2}. By reducing the average propagation distance, DAS has exhibited great potential in creating uniform coverage with micro/macro-diversity gain and increasing the capacity compared to co-located antenna deployment \cite{Feng_Wei}. As a result, the DAS architecture has been applied in various systems and scenarios, e.g., cell-free massive MIMO (CF-mMIMO)\cite{Ngo}. By leveraging the inherent macro-diversity from the DAS and the massive number of antennas, CF-mMIMO can alleviate interference and provide uniformly good service and thus is anticipated to play an essential role in future wireless communication systems.
 
   To achieve the desirable high spectrum efficiency to meet the ever-increasing communication needs with stringent quality-of-service requirements, a large number of antennas and radio-frequency chains are generally required to be deployed \cite{Lisu_Yu, Xi_Yang}. This undoubtedly introduces tremendous hardware cost and energy consumption, thus limiting the practical applications of massive MIMO. On the other hand, reconfigurable intelligent surface (RIS), also known as intelligent reflecting surface (IRS), has been proposed as a promising technology to improve spectrum efficiency (SE) and energy efficiency (EE) of the communication network while maintaining low cost and energy consumption advantages\cite{Qingqing_Wu, Chongwen_Huang, Buzzi}. Specifically, RIS is a planar array composed of many passive elements, each of which can impose an independent phase shift on the incident signals. By smartly tuning the phase shifts, the wireless propagation environment becomes controllable and programmable, which thus provides an extra degree of freedom to optimize communication performance. However, the performance gain of directly applying RIS to current wireless communication systems is limited by the ``multiplicative fading'' effect and generally only exhibited when the direct link is relatively weak or suffers from blockage, which erodes the practical benefits \cite{Cunhua_Pan_Overview}. Moreover, the gain is obtained only when the phase shifts are well aligned with the propagation channels. With a large number of passive reflecting elements, the RIS control overhead is inevitably high, particularly under time varying channel environments,  which also limits its practical applications. 
    
    % However, owing to the introduction of the tremendous number of antennas, the increase of radio frequency (RF) chains will bring out high power consumption and hardware complexity, especially when full-digital architecture is utilized for maximum performance gain. Besides, deploying many RF chains also increases the baseband processing pressure. These issues erode the practical benefits of deploying MIMO and thus motivate academics and industries to look for new disruptive technologies to combine with MIMO to alleviate the stringent requirements of the tremendous number of antennas for future cellular networks. 

    % Consequently, unlike traditional distributed antennas in DAS, some of the elements on RDARS can be programmed to act as \textit{refleciton mode} to purely reflect signal, which thus exhibits the potential to reduce the power consumption while still maintaining satisfying performance by leveraging \textit{reflection gain}.
    
    Thankfully, embedded with the merits of both DAS and RIS, reconfigurable distributed antenna and reflecting surface (RDARS) has been proposed recently as a promising new architecture for wireless communications \cite{RDARS}. 
    Specifically, a RDARS is a planar array consisting of reconfigurable elements, where each element can switch between two modes, namely, the \textit{connected mode} and the \textit{reflection mode}. When working under the \textit{connected mode}, the elements act as distributed antennas connected to the BS via dedicated wires or fibers and can receive the incoming or transmit wireless signal as in the DAS \cite{Lin_Dai}. When working under \textit{reflection mode}, the elements work as passive reflecting elements where each of them can induce a distinct phase shift to the incident signal, thus collaboratively changing the reflected signal propagation as an element in RIS \cite{Chongwen_Huang, Qingqing_Wu, Jintao_Wang}. Consequently, RDARS encompasses DAS and RIS as special cases, offering flexible trade-off between \textit{distribution gain} and \textit{reflection gain} to enhance the performance, with great potential for practical deployment. Prior work has demonstrated the superiority of RDARS-aided systems through theoretical analysis and experimental results, and initially verified its potential to greatly reduce the number of active antennas/RF chains and power consumption with a substantial gain under the single-user scenario \cite{RDARS, RDARS_demo}. 
    
    % As such, it is of great interest to further characterize the performance of RDARS-aided systems, considering more practical multi-user scenarios to unleash the potential of such novel architecture fully.

    % Moreover, from an implementation perspective, a high-quality control link for delivering the control signal for phase shift design is required to fully leverage the \textit{reflection gain} in conventional RIS. To this regard, the control link of RDARS can be integrated with the current DAS leveraging the wildly applied radio-over-fiber (ROF) technologies\cite{Lisu_Yu}. 

    Nonetheless, the potential of RDARS under multi-user scenario is not well discovered yet. Its analysis is more challenging than that for single user scenario since multi-user interference is involved. Besides, to fully unleash the potential and achieve substantial performance gain of the RDARS-aided multi-user systems, the joint transceiver design of the beamforming for combining the signals at the BS and the distributed antennas at the RDARS, users' transmit powers, and the phase shifts of the reflecting elements on RDARS is necessary and yet to be investigated. Similarly to the RIS-aided systems, channel state information (CSI) is generally required for transceiver optimization so as to achieve the promising gain in RDARS-aided systems. However, due to the involvement of a large number of elements in RDARS, the number of channel parameters to be estimated scales with the product of the number of antennas at the BS and the number of elements in the RDARS. This would generally lead to a prohibitively high pilot overhead \cite{Cunhua_Pan_Overview}. Moreover, phase shift design based on instantaneous CSI (I-CSI) requires frequent calculation and feedback for RDARS configuration in each channel coherence interval. These generally result in high computational complexity, feedback overhead, and power consumption \cite{Kangda_Zhi}.  

    To address the aforementioned practical challenges and to reap the promising gain, a two-timescale (TTS) transceiver design mechanism which was first proposed for RIS-aided systems \cite{MingMin_Zhao} is exploited in this paper. In such a TTS design, the beamformer at BS is configured based on the I-CSI of \textit{equivalent channel}, i.e., the user-BS end-to-end channel, which includes direct and reflected links. Accordingly, the dimension of the channel matrix to be estimated is almost the same as in the RDARS-free system, which is independent of the number of RDARS elements. Hence, the number of pilot signals required only scales linearly with the number of users, significantly reducing the pilot overhead. Meanwhile, such TTS mechanism aims to optimize the phase shifts based on long-term statistical CSI (S-CSI). As such, the phase shifts on the RDARS are reconfigured when the large-scale channel information changes, which generally varies much slower than I-CSI for typical applications, especially in the sub-6 GHz band \cite{Cunhua_Pan_Overview}. Feedback overhead for RDARS configuration is thus expected to be significantly reduced. The promising overhead reduction and performance gain of the TTS design have been unveiled theoretically for the RIS-aided system \cite{Kangda_Zhi}. However, due to the different architecture of RDARS from RIS, the performance gain and the optimal TTS design for RDARS-aided systems are yet to determined. Notice that even for RIS-aided systems, optimal phase shifts in the TTS design has not been found for multi-user uplink scenarios due to the challenges in high-order parameter optimization. Moreover, due to the limited training pilots, channel estimation errors are unavoidable in practice. The impact of imperfect CSI on the RDARS-aided multi-user systems with TTS transceiver design is meaningful to be discovered for practical applications. These motivate the work in this paper, which analyzes the RDARS-aided multi-user systems and investigates the optimal TTS design with imperfect CSI. 

    \vspace{-1.3cm}
    \subsection{Main Contributions}
    The main contributions of this paper can be summarized as follows:
        \begin{itemize}
            \item We propose to invoke RDARS to empower uplink transmission in multi-user massive MIMO systems. To fully discover the potential of the RDARS-aided multi-user systems under practical scenarios, we first propose a practical channel estimation scheme based on the linear minimum mean square error (LMMSE) criteria to obtain the uplink \textit{equivalent channel} state information. The channel estimator only requires a limited number of pilots, which is independent of the number of elements on RDARS, thus is appealing for practical applications. We then analyze the achievable rate of the RDARS-aided system with a maximum ratio combiner (MRC) based on the estimated CSI. The derived rate expression holds for any finite numbers of antennas at the BS and the elements on RDARS. It also generalizes the achievable rate expressions derived for the RIS-aided  MIMO system \cite{Kangda_Zhi}, the conventional colocated MIMO system \cite{SIG-093}, as well as the DAS system with two distributed antenna sets \cite{DAS}. The closed-form expression of the achievable rate not only provides an effective analytical approach to evaluate the performance and understand the behaviors of the RDARS-aided multi-user systems, but also enables transceiver optimization to further exploit the potential of the RDARS-aided systems for performance enhancement.
            
           \item Based on the derived closed-form achievable rate, a 
            weighted sum rate maximization for the joint design of users' transmit power and phase shifts on RDARS is formulated with a TTS design to reduce the overhead and complexity in the long term. The optimization problem is challenging to solve since the the objective function involves the \textit{``fourth-order"} term w.r.t. the phase shifts. Besides, the coupling of variables further complicates the problem. As a result, such a problem has not been considered in the previous works.\footnote{To the best of our knowledge, even for the special case of RDARS-aided system, i.e., RIS-aided system, the weighted sum rate optimization with joint design of phase shifts and users' transmit powers considering both imperfect CSI and MRC detector under TTS design framework is not available yet. Only phase optimization is conducted in \cite{Kangda_Zhi} to maximize the minimum user rate. As such, our proposed optimization algorithm also serves as another building block for studying RIS-aided MIMO systems.}

            \item An effective solution is proposed to tackle the challenging optimization problem. Specifically, we first reformulate the original problem into an equivalent one using the \textit{Fractional Programming}(\textit{FP}) technique. Then the non-convex \textit{block coordinate descent} (\textit{BCD}) method is invoked to decouple the variables and enable the iterative optimization of each variable. In each block, closed-form optimal solutions for the users' transmit powers and the dual variables are derived. With respect to the optimization of the phase shifts, two effective algorithms are proposed. One is a \textit{Riemannian Gradient Ascent} (RGA) algorithm. It exploits the unit-module constraints imposed on phase shifts and finds the optimal phase shifts numerically through line searching. The other is a \textit{Majorization-Minimization} (\textit{MM})-based method. It gives the closed-form solution for the optimal phase shifts after transforming the original \textit{``fourth-order"} phase shift optimization problem into a tractable form by further leveraging the inherent structure of the objective function using sophisticated matrix manipulations and transformations.  Computational complexity analysis is provided with provable convergence to stationary solutions. Moreover, the proposed algorithms are applicable to both the RIS-aided and DAS systems.
            % As such, our proposed algorithm also serves as another building block for studying RIS-aided multi-user MIMO systems under such a setup.
            
            \item We verify the effectiveness of the proposed design through simulations under various scenarios. The results show that significant performance gain can be attained through optimization with the proposed algorithms. Moreover, the results demonstrate the superiority of applying RDARS for multi-user massive MIMO systems by significantly reducing the users' transmit powers and the number of active antennas/RF chains with substantial gain when compared to its counterparts. 
        \end{itemize}
        % To the best of our knowledge, the weighted sum rate optimization regarding RIS-aided multi-user systems considering both imperfect CSI and the MRC scheme under the TTS design has yet to be available.\footnote{In \cite{Kangda_Zhi}, only the max-min problem regarding the phase shift design is considered. As such, our proposed algorithm also serves as another building block for the study of RIS-aided multi-user MIMO systems under such a setup.}  
        
    The rest of the paper is organized as follows. Section II gives the system model and the TTS framework. Then, the LMMSE channel estimation is introduced in Section III. In Section IV, the performance of the underlying system with imperfect CSI is analyzed and a closed-form achievable rate is derived. In Section V, the weighted sum rate maximization is discussed and semi-closed-form solutions for the transmit powers and phase shifts are derived based on S-CSI only. Simulation results are given in Section VI, while conclusions are drawn in Section VII.
    
    \textit{Notations:} Throughout the paper, scalars, vectors, and matrices are denoted by lower-case, bold-face lower-case, and bold-face upper-case letters, respectively. $(\cdot)^{T}$, $(\cdot)^{H}$ denote the transpose and conjugate transpose of a matrix or vector. ${\rm{diag}}(\mathbf{v})$ denotes a diagonal matrix with each diagonal element being the corresponding element in $\mathbf{v}$. Similarly, ${\rm{blk}}(\cdot)$ denotes the block-diagonalization operation. Furthermore, $|\cdot|$ denotes the modulus of the input complex number. ${\rm{Tr}}(\cdot)$, $\mathbb{E}[\cdot]$ and $\mathbb{C}[\cdot, \cdot]$ represent trace, expectation and covariance operator, respectively. The element in the $i^{th}$ row and $j^{th}$ column of matrix $\mathbf{A}$ is denoted as $\mathbf{A}(i,j)$. $\lambda_{max}(\cdot)$ retrieves the maximum eigenvalue of the matrix, and $\mathbf{I}_a$ denotes the identity matrix with dimension of $a \times a$. 

\vspace{-6pt}
\section{system Model}
% Unlike traditional RRU in DAS and RIS, the RDARS controller can dynamically configure the specific mode of each element. 

We consider a RDARS-aided massive MIMO multi-user communication system. As shown in Fig. 1, the system consists of one BS with $L$ antennas, $K$ single-antenna users, and one RDARS with $N$ elements. For simplicity, the user index set is $\mathcal{K} = \{1, 2, \cdots, K \}$. An indicator matrix $\mathbf{A} \in \mathbb{R}^{a \times N}$ is introduced to specify the modes of elements on RDARS, where $a \leq N$ is the total number of elements performing \textit{connected mode} on RDARS. Specifically, $\mathbf{A}^{H}\mathbf{A}$ is a diagonal matrix, and $\mathbf{A}^{H}\mathbf{A}(i,i) = 1$ means the $i^{th}$ element works at \textit{connected mode} on RDARS and is programmed to connect with BS through high-quality fronthaul, while $\mathbf{A}^{H}\mathbf{A}(j,j) = 0$ indicates that the $j^{th}$ element operates at \textit{reflection mode}, i.e., the $j^{th}$ element can only alter the phase of incoming signals. For clarity, we define $\mathcal{A} = \{ i | \mathbf{A}^{H}\mathbf{A}(i,i) = 1 \}$ and $\widetilde{\mathcal{A}} = \{ i | \mathbf{A}^{H}\mathbf{A}(i,i) = 0 \}$ as the index sets for elements performing \textit{connected mode} and \textit{reflection mode}, respectively. Let $\mathbf{\Theta} = {\rm{diag}}(\boldsymbol{\theta}) \in \mathbb{C}^{N \times N}$ denote the RDARS reflection-coefficient matrix whose diagonal element is expressed as $\mathbf{\Theta}(i, i) = \boldsymbol{\theta}(i) =e^{j{\rm{arg}}(\boldsymbol{\theta}(i))}, \forall i \in \widetilde{A}$, where $\boldsymbol{\theta}(i)$ denotes the $i^{th}$ element of $\boldsymbol{\theta}$ and ${\rm{arg}}(\boldsymbol{\theta}(i))$ is the phase shift the $i^{th}$ element will induce. 

With the help of RDARS, the received signal at the BS and the elements acting \textit{connected mode} on RDARS, $\mathbf{y} \in \mathbb{C}^{(L+a) \times 1}$, can be compactly written as
    \vspace{2pt}
    \begin{align} \label{y}
    \begin{footnotesize}
    \begin{aligned}
        \setlength{\abovedisplayskip}{20pt}
        \mathbf{y} = \sum\nolimits_{k=1}^{K} \sqrt{p}_k \mathbf{q}_k x_{k} + \widetilde{\mathbf{n}} = \mathbf{Q} \mathbf{x} + \widetilde{\mathbf{n}},
        \setlength{\belowdisplayskip}{20pt}
    \end{aligned}
    \end{footnotesize}
    \end{align}
    % \vspace{2pt}
    % \belowdisplayskip=12pt plus 3pt minus 9pt
\noindent where $\mathbf{Q} = [\mathbf{q}_1, \mathbf{q}_2, \cdots, \mathbf{q}_K] \in \mathbb{C}^{(L+a) \times K}$ is the aggregated \textit{equivalent channel} matrix. $\mathbf{q}_k$ is the \textit{equivalent channel} corresponding to the $k^{th}$ user and we have $\mathbf{q}_k \triangleq [\mathbf{h}_{B,k}^{T} \ \mathbf{h}_{R,k}^{T}]^{T} = [(\underline{\mathbf{h}}_{B,k} + \mathbf{d}_{k})^{T} \ (\mathbf{A}\mathbf{h}_k)^{T}]^{T} \in \mathbb{C}^{ (L+a) \times 1}$, where $\mathbf{h}_{B,k} \triangleq \underline{\mathbf{h}}_{B,K} +  \mathbf{d}_{k} \triangleq  \mathbf{H} \mathbf{B} \mathbf{h}_{k} +  \mathbf{d}_{k}$, $\mathbf{h}_{R,k} \triangleq \mathbf{A}\mathbf{h}_k$, and $\mathbf{B} \triangleq (\mathbf{I}-\mathbf{A}^{H}\mathbf{A})\mathbf{\Theta}$. Here, $\mathbf{d}_k \in \mathbb{C}^{L \times 1}$, $\mathbf{h}_k \in \mathbb{C}^{N \times 1}$ and $\mathbf{H} \in \mathbb{C}^{L \times N}$ are the channels from the $k^{th}$ user to the BS, the $k^{th}$ user to the RDARS, and the RDARS to the BS, respectively. $p_k$ denotes the transmit power of the $k^{th}$ user and $p_k \leq p_{\max}, \forall k \in \mathcal{K}$, where $p_{\max}$ denotes the maximum transmit power. $x_k$ denotes the transmit symbol of the $k^{th}$ user and we have
$\mathbf{x} = [\sqrt{p}_1 x_1, \sqrt{p}_2 x_2, \cdots, \sqrt{p}_K x_K]^T $. $\widetilde{\mathbf{n}} \triangleq [\mathbf{n}_B \ \mathbf{n}_R]^{T} \in \mathbb{C}^{(L+a)\times 1}$ represents the noise vector where $\mathbf{n}_B \sim \mathcal{CN}(\mathbf{0},\sigma_B^2\mathbf{I}_{L})$ and $\mathbf{n}_R \sim \mathcal{CN}(\mathbf{0},\sigma_R^2\mathbf{I}_{a})$ are the additive white Gaussian noises (AWGNs) received at the BS antennas and the elements performing \textit{connected mode} on RDARS, respectively. 

Clearly, the BS receives two sets of signals, one from the BS antennas directly and the other from RDARS elements operating at the \textit{connected mode}. To obtain the distribution gain, the BS then applies maximum ratio combining based on the estimate of the equivalent channel matrix  (i.e., $\hat{\mathbf{Q}}$) to recovery the user signals. The equivalent channel can be obtained by the LMMSE estimator which will be introduced later. Thus, based on \eqref{y}, the recovered signal is given as $\mathbf{r} =  \hat{\mathbf{Q}}^{H} \mathbf{Q}\mathbf{x} + \hat{\mathbf{Q}}^{H}  \widetilde{\mathbf{n}}$. Accordingly, for the $k^{th}$ user, the $k^{th}$ element of $\mathbf{r}$ can be expressed as
    \begin{align} \label{r_k 1}
    \begin{footnotesize}
    \begin{aligned}
        r_k = \sqrt{p_k} \hat{\mathbf{q}}_k^{H} \mathbf{q}_k x_{k} 
        + 
        \sum_{\substack{i=1, i \neq k}}^{K} \sqrt{p_i} \hat{\mathbf{q}}_k^{H} \mathbf{q}_i x_{i} 
        +
        \hat{\mathbf{q}}_k^{H}  \widetilde{\mathbf{n}}, \forall k \in \mathcal{K}, 
    \end{aligned}
    \end{footnotesize}
    \end{align}
where $\hat{\mathbf{q}}_k \in \mathbb{C}^{(L+a)\times 1}$ is the $k^{th}$ column of $\hat{\mathbf{Q}}$. The equivalent channel $\mathbf{q}_k$ is different from that of the DAS and RIS-aided systems, but covers them as special cases. Specifically, it has a more complex structure since multiple links are involved, i.e., the direct link from the user to the BS antennas, the distributed link from the user to the RDARS elements operating at the \textit{connected mode}, and the reflected link from the user to the BS antennas via the RDARS reflecting elements. The presence of multiple links is expected to offer not only distribution gain but also reflection gain to boost the system performance. However, it also complicates the analysis and transceiver design of the RDARS-aided systems. In the following, we first proceed with the channel estimation to obtain an estimation of $\mathbf{q}_k$ and then conduct the performance analysis to shed light on the potential of such RDARS-aided massive MIMO systems.

\begin{figure} [t]  
        \small
        \setlength{\belowcaptionskip}{-3mm}  
        \centering
        \includegraphics[width=0.80\columnwidth]{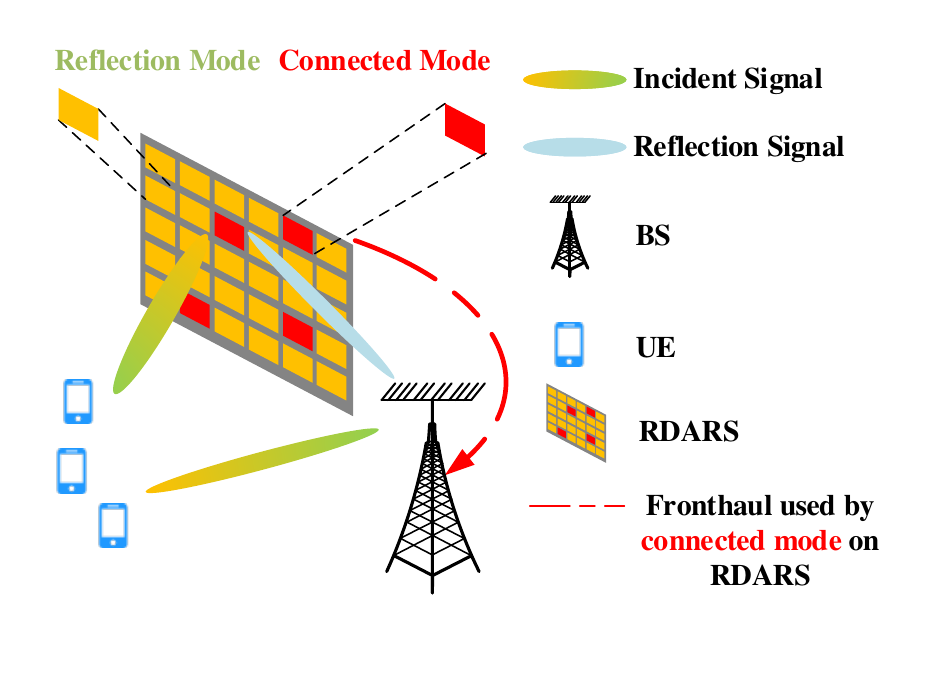} 
        \caption{ 
        A RDARS-aided uplink multi-user massive MIMO communication system.}     
        \label{Fig.1}
\end{figure}

\vspace{-7pt}
\subsection{Channel Model}
    Considering that the RDARS could be deployed to create line-of-sight (LoS) links between the user and RDARS or RDARS and BS to assist the communication when the direct user-BS link is blocked, we adopt the Rician fading model for the user-RDARS and RDARS-BS links while the Rayleigh fading model is leveraged for the direct user-BS link as in \cite{Kangda_Zhi}. The RDARS-BS, user-RDARS, and user-BS channel models are expressed as 
    $\mathbf{H} 
    =
        \sqrt{\frac{\beta}{\delta+1}} (\sqrt{\delta} \overline{\mathbf{H}} + \widetilde{\mathbf{H}})$,
    $\mathbf{h}_k 
    = 
        \sqrt{\frac{\alpha_k}{\epsilon_k+1}} (\sqrt{\epsilon_k} \overline{\mathbf{h}}_k + \widetilde{\mathbf{h}}_k)$,
    $\mathbf{d}_k 
    =
       \sqrt{\gamma_k} \widetilde{\mathbf{d}}_k$, respectively,   
    where $\delta$ and $\epsilon_k$ are the Rician factors of the RDARS-BS and user-RDARS links, respectively. $\beta$, $\alpha_k$, and $\gamma_k$ are the path-loss coefficients for RDARS-BS, user-RDARS, and user-BS links, respectively, and $\widetilde{\mathbf{H}} \in \mathbb{C}^{L \times N}$, $\widetilde{\mathbf{h}}_k \in \mathbb{C}^{N \times 1}$, and $\widetilde{\mathbf{d}}_k \in \mathbb{C}^{L \times 1}$ represent the NLoS components of the RDARS-BS, user-RDARS and user-BS links whose components are i.i.d. complex Gaussian random variables with zero mean and unit variance. $\overline{\mathbf{H}} \in \mathbb{C}^{L \times N}$ and $\overline{\mathbf{h}}_k \in \mathbb{C}^{N \times 1}$ denote the LoS components of the RDARS-BS and user-RDARS links, respectively. The structures of $\overline{\mathbf{H}}$ and $\overline{\mathbf{h}}_k$ depend on the array geometry. In this paper, we adopt uniform planar array (UPA) for both BS and RDARS with the size of $L = L_x \times L_y$ and $N = N_x \times N_y$, respectively. Therefore, we have
    $\overline{\mathbf{H}} = \mathbf{a}_L(\phi_{RB}^{a}, \phi_{RB}^{e}) 
            \mathbf{a}_N^{H}(\varphi_{RB}^{a}, \varphi_{RB}^{e})$ and
    $\overline{\mathbf{h}}_k = \mathbf{a}_N(\varphi_{UR,k}^{a}, \varphi_{UR,k}^{e})$, where $\varphi_{UR,k}^{a}$ ($\varphi_{UR,k}^{e}$) is the azimuth (elevation) angle of the arrival (AoA) of the incident signal at the RDARS from $k^{th}$ user, $\varphi_{RB}^{a}$ ($\varphi_{RB}^{e}$) is the azimuth (elevation) angle of the departure (AoD) reflected by the RDARS towards BS, and $\phi_{RB}^{a}$ ($\phi_{RB}^{e}$) is the azimuth (elevation) AoA of the signal received at BS from the RDARS, respectively. Besides, $\mathbf{a}_{X}(\psi^a, \psi^e) \in \mathbb{C}^{X \times 1}$ with $ X \in \{ L, N \}$ and $\psi \in \{ \phi_{RB}, \varphi_{RB}, \varphi_{UR}  \}$ denotes the array response vector, whose $x$-entry is
    \vspace{-3pt}
    \begin{align}
    \begin{footnotesize}
    \begin{aligned}
        [\mathbf{a}_{X}(\psi^a, \psi^e)](x) 
        &= {\rm{exp}} \bigg\{ 
            j2\pi\frac{d_s}{\lambda}
        \bigg(
                \lfloor (x-1)/X_y \rfloor \sin \psi^a \sin \psi^e
        \\
        &+
                ((x-1){\rm{mod}}X_y) \cos \psi^e
            \bigg)
        \bigg\},
    \end{aligned}
    \end{footnotesize}
    \end{align}
    \noindent where $d_s$ and $\lambda$ denote the element spacing of the antenna array and the wavelength, respectively.

\vspace{-7pt}
\subsection{Two-timescale Transceiver Design}

An illustration of TTS transceiver design framework is presented in Fig. 2. Specifically, the RDARS configuration and the uplink transmit powers are designed based on the S-CSI in the long term. In contrast, the MRC performed at the BS is implemented based on the I-CSI in the short term. As a result, the TTS transceiver design can significantly reduce the computational complexity, feedback overhead, and energy consumption, by alleviating frequent phase shift configuration and information feedback required in each channel coherence interval \cite{Cunhua_Pan_Overview}. 

% The effectiveness of the TTS design has been examined in RIS-aided massive MIMO systems with promising passive beamforming gain\cite{Kangda_Zhi}. As such, it is also envisioned to be an efficient design for RDARS-aided systems to unleash the \textit{reflection gain} while reducing the feedback and power consumption for the RDARS controller, thereby striking a balance between cost and performance. 

% As such, it is also envisioned to be an efficient design for RDARS-aided systems to unleash the passive beamforming gain while reducing the feedback and power consumption for the RDARS controller, thereby striking a balance between cost and performance. 
% compared to the design based merely on I-CSI or S-CSI.\footnote{The design based on fully I-CSI or S-CSI for RDARS-aided systems will be our future work.} 
\vspace{-10pt}
\begin{figure} [!htb]  
        \setlength{\belowcaptionskip}{-3mm}   
        \includegraphics[width=1\columnwidth]{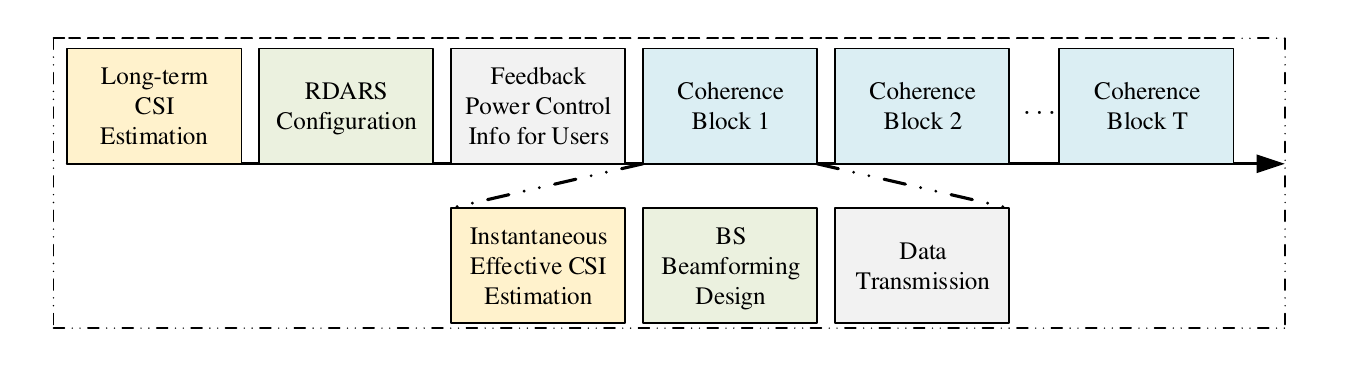} 
        \caption{ 
        The TTS transceiver design framework.}     
        \label{Fig.2}
\end{figure}
\vspace{-5pt}

\section{Channel Estimation}

In this section, the LMMSE channel estimator is introduced to estimate the \textit{equivalent channel} $\mathbf{Q}$. The orthogonal pilot sequence is leveraged for each user and the pilot sequence $\mathbf{s}_{k,p} \in \mathbb{C}^{\tau \times 1}, \forall k \in \mathcal{K},$ spans $\tau$ symbols in each coherence interval. It satisfies $\mathbf{S}^{H}\mathbf{S} = \mathbf{I}_K$ where $\mathbf{S} = [\mathbf{s}_{1,p}, \mathbf{s}_{2,p}, \cdots, \mathbf{s}_{K,p}] \in \mathbb{C}^{\tau \times K}$. Accordingly, the received pilot signal can be expressed as $\mathbf{Y}_p = [\mathbf{Y}_{p,B}^{T} \ \mathbf{Y}_{p,R}^{T}]^{T} =\sqrt{\tau p_p} \mathbf{Q} \mathbf{S}^{H} + \mathbf{N}$, where $\mathbf{Y}_{p,B} \in \mathbb{C}^{L \times \tau}$ and $\mathbf{Y}_{p,R} \in \mathbb{C}^{a \times \tau}$ are the received signals at the BS antennas and the RDARS elements acting on \textit{connected mode}, respectively. $p_p$ is the average transmit power of each pilot symbol. $\mathbf{N} = [\mathbf{N}_B \ \mathbf{N}_R]^T \in \mathbb{C}^{(L+a) \times \tau}$ is the noise matrix, and the entries of $\mathbf{N}_B$ and $\mathbf{N}_R$ are i.i.d. complex Gaussian random variables with zero mean and variances $\sigma_B^2$ and $\sigma_R^2$, respectively.

Based on the projection of $\mathbf{Y}_p$ onto $\frac{\mathbf{s}_{k,p}}{\sqrt{\tau p_p}}$, we have $\mathbf{y}_{k,p} \in \mathbb{C}^{(L+a) \times 1}$ as
    \begin{align} \label{y_k}
    \begin{footnotesize}
    \begin{aligned}
        \mathbf{y}_{k,p} = \frac{1}{\sqrt{\tau p_p}}\mathbf{Y}_p \mathbf{s}_{k,p} = \mathbf{q}_k + \frac{1}{\sqrt{\tau p_p}}\mathbf{N}\mathbf{s}_{k,p}.
    \end{aligned}
    \end{footnotesize}
    \end{align}
Then the LMMSE estimator can be derived from \eqref{y_k}. To proceed, the required statistics of the projected signals $\mathbf{y}_{k,p}$ and the equivalent channel is first provided as in the following lemma.

% \footnote{The estimation of $\mathbf{q}_k$, i.e., $\hat{\mathbf{q}}_k$ can be obtained with MMSE criterion as in convention MIMO system. However, obtaining an MMSE estimator is challenging even for a RIS-aided system when the Rician Fading model is considered for $\mathbf{h}_k$ and $\mathbf{H}$. Since the RDARS-aided system model covers the existing RIS-aided system model as a special case by configuring all the elements as passive to perform the reflection mode \cite{RDARS}, the derivation of the MMSE estimator is expected to be more challenging.} 

%   Lemma 1
\begin{lemma} \label{Lemma1}
            For $k \in \mathcal{K}$, the mean vector and covariance matrices required to obtain the LMMSE channel estimator are given by
            \begin{align}
            \begin{footnotesize}
            \begin{aligned}
                \mathbb{E}[\mathbf{q}_k] = \mathbb{E}[\mathbf{y}_{k,p}] = 
                \begin{bmatrix}
                    \sqrt{c_k \delta \epsilon_k} \overline{\mathbf{H}}\mathbf{B} \overline{\mathbf{h}}_k \\
                    \sqrt{d_k \epsilon_k} \mathbf{A}\overline{\mathbf{h}}_k
                \end{bmatrix},
            \end{aligned}  
            \end{footnotesize}
            \end{align}

            \begin{align}
            \begin{footnotesize}
            \begin{aligned}
                \mathbb{C}[\mathbf{q}_{k}, \mathbf{y}_{k,p}] = 
                \begin{bmatrix}
                a_{k1} \mathbf{a}_L \mathbf{a}_L^{H} + a_{k2} \mathbf{I}_L
                     & \mathbf{0}_{L \times a} 
                     \\
                \mathbf{0}_{a \times L} & d_k\mathbf{I}_a
                \end{bmatrix},
            \end{aligned}  
            \end{footnotesize}
            \end{align}

            \begin{align}
            \begin{footnotesize}
            \begin{aligned}
                \mathbb{C}[\mathbf{y}_{k,p}, \mathbf{y}_{k,p}] = 
                \begin{bmatrix}
                a_{k1} \mathbf{a}_L \mathbf{a}_L^{H} + (a_{k2} + \frac{\sigma_B^2}{\tau p_p}) \mathbf{I}_L
                     & \mathbf{0}_{L \times a} 
                     \\
                \mathbf{0}_{a \times L} & (d_{k} + \frac{\sigma_{R}^2}{\tau p_p}) \mathbf{I}_a
                \end{bmatrix},
            \end{aligned}  
            \end{footnotesize}
            \end{align}
            where $a_{k1} \triangleq (N -a)c_k\delta$,$a_{k2} \triangleq ((N-a)c_k(\epsilon_k+1)+\gamma_k) $, $c_{k} \triangleq \frac{\beta\alpha_{k}}{(\delta+1)(\epsilon_{k}+1)}$ and $d_{k} \triangleq \frac{\alpha_{k}}{(\epsilon_{k}+1)}$.
    \end{lemma}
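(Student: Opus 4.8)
The plan is to reduce the lemma to elementary second-moment computations built on two structural facts that are specific to the RDARS model. First, $\mathbf{H}$, $\mathbf{h}_k$, $\mathbf{d}_k$ and the pilot noise $\tfrac{1}{\sqrt{\tau p_p}}\mathbf{N}\mathbf{s}_{k,p}$ are mutually independent, with $\widetilde{\mathbf{H}},\widetilde{\mathbf{h}}_k,\widetilde{\mathbf{d}}_k$ and $\mathbf{N}$ zero-mean with i.i.d.\ unit-variance entries. Second, the indicator matrix obeys $\mathbf{A}\mathbf{A}^{H}=\mathbf{I}_a$ and $\mathbf{A}\mathbf{B}=\mathbf{A}(\mathbf{I}-\mathbf{A}^{H}\mathbf{A})\mathbf{\Theta}=\mathbf{0}$, so that the connected-mode block $\mathbf{h}_{R,k}=\mathbf{A}\mathbf{h}_k$ reads out the connected-mode coordinates of $\mathbf{h}_k$ while the reflected term $\mathbf{H}\mathbf{B}\mathbf{h}_k$ depends on $\mathbf{h}_k$ only through its reflection-mode coordinates; being disjoint coordinates of the i.i.d.\ vector $\widetilde{\mathbf{h}}_k$, these two pieces are independent. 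I will also use the Gaussian identity $\mathbb{E}[\widetilde{\mathbf{H}}\mathbf{M}\widetilde{\mathbf{H}}^{H}]={\rm{Tr}}(\mathbf{M})\mathbf{I}_L$ for deterministic $\mathbf{M}$, and the unit-modulus property of the entries of $\overline{\mathbf{h}}_k$, of the array response vector $\mathbf{a}_N$, and of the active diagonal of $\mathbf{\Theta}$, which gives $(\mathbf{I}-\mathbf{A}^{H}\mathbf{A})\mathbf{\Theta}\mathbf{\Theta}^{H}(\mathbf{I}-\mathbf{A}^{H}\mathbf{A})=\mathbf{I}-\mathbf{A}^{H}\mathbf{A}$, ${\rm{Tr}}(\mathbf{I}-\mathbf{A}^{H}\mathbf{A})=N-a$, $\|(\mathbf{I}-\mathbf{A}^{H}\mathbf{A})\mathbf{\Theta}\overline{\mathbf{h}}_k\|^{2}=N-a$, and $\mathbf{a}_N^{H}(\mathbf{I}-\mathbf{A}^{H}\mathbf{A})\mathbf{a}_N=N-a$.

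For the means, independence and the zero-mean NLoS parts give $\mathbb{E}[\mathbf{h}_{B,k}]=\mathbb{E}[\mathbf{H}]\,\mathbf{B}\,\mathbb{E}[\mathbf{h}_k]=\sqrt{\tfrac{\beta\delta}{\delta+1}}\sqrt{\tfrac{\alpha_k\epsilon_k}{\epsilon_k+1}}\,\overline{\mathbf{H}}\mathbf{B}\overline{\mathbf{h}}_k=\sqrt{c_k\delta\epsilon_k}\,\overline{\mathbf{H}}\mathbf{B}\overline{\mathbf{h}}_k$ and $\mathbb{E}[\mathbf{h}_{R,k}]=\mathbf{A}\,\mathbb{E}[\mathbf{h}_k]=\sqrt{d_k\epsilon_k}\,\mathbf{A}\overline{\mathbf{h}}_k$; since the pilot noise is zero-mean, $\mathbb{E}[\mathbf{y}_{k,p}]=\mathbb{E}[\mathbf{q}_k]$, which is the first display.

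The main work is $\mathbb{C}[\mathbf{q}_k,\mathbf{q}_k]$, written in $2\times2$ block form. The off-diagonal block $\mathbb{C}[\mathbf{h}_{B,k},\mathbf{h}_{R,k}]$ vanishes: $\mathbf{d}_k$ is independent of $\mathbf{h}_{R,k}$, and after centering $\mathbf{H}\mathbf{B}\mathbf{h}_k$ is independent of $\mathbf{A}\mathbf{h}_k$ by the disjoint-coordinates argument above. The bottom-right block is $\mathbf{A}\,\mathbb{C}[\mathbf{h}_k,\mathbf{h}_k]\,\mathbf{A}^{H}=d_k\mathbf{A}\mathbf{A}^{H}=d_k\mathbf{I}_a$. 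For the top-left block, put $\mathbf{P}\triangleq\mathbf{I}-\mathbf{A}^{H}\mathbf{A}$ and $\mathbf{v}\triangleq\mathbf{B}\mathbf{h}_k=\mathbf{P}\mathbf{\Theta}\mathbf{h}_k$, so that $\mathbb{E}[\mathbf{v}\mathbf{v}^{H}]=d_k\mathbf{P}+d_k\epsilon_k\,\mathbf{P}\mathbf{\Theta}\overline{\mathbf{h}}_k\overline{\mathbf{h}}_k^{H}\mathbf{\Theta}^{H}\mathbf{P}$. Conditioning on $\mathbf{H}$, using its Rician split and $\overline{\mathbf{H}}=\mathbf{a}_L\mathbf{a}_N^{H}$, yields $\mathbb{E}[(\mathbf{H}\mathbf{v})(\mathbf{H}\mathbf{v})^{H}]=\tfrac{\beta\delta}{\delta+1}\big(\mathbf{a}_N^{H}\mathbb{E}[\mathbf{v}\mathbf{v}^{H}]\mathbf{a}_N\big)\mathbf{a}_L\mathbf{a}_L^{H}+\tfrac{\beta}{\delta+1}{\rm{Tr}}\big(\mathbb{E}[\mathbf{v}\mathbf{v}^{H}]\big)\mathbf{I}_L$. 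The unit-modulus identities collapse the trace to $d_k(N-a)(1+\epsilon_k)=(N-a)c_k(\epsilon_k+1)$ and collapse $\mathbf{a}_N^{H}\mathbf{P}\mathbf{a}_N$ to $N-a$, so the $\mathbf{a}_L\mathbf{a}_L^{H}$ coefficient is $(N-a)\delta c_k$ plus a $\mathbf{\Theta}$-dependent remainder $\tfrac{\beta\delta}{\delta+1}d_k\epsilon_k|\mathbf{a}_N^{H}\mathbf{P}\mathbf{\Theta}\overline{\mathbf{h}}_k|^{2}$. Subtracting $\mathbb{E}[\mathbf{H}\mathbf{v}]\mathbb{E}[\mathbf{H}\mathbf{v}]^{H}$ cancels exactly this remainder, and adding $\mathbb{C}[\mathbf{d}_k,\mathbf{d}_k]=\gamma_k\mathbf{I}_L$ gives the top-left block $a_{k1}\mathbf{a}_L\mathbf{a}_L^{H}+a_{k2}\mathbf{I}_L$. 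Finally, $\mathbb{C}[\mathbf{q}_k,\mathbf{y}_{k,p}]=\mathbb{C}[\mathbf{q}_k,\mathbf{q}_k]$ since the pilot noise is independent of $\mathbf{q}_k$, and $\mathbb{C}[\mathbf{y}_{k,p},\mathbf{y}_{k,p}]=\mathbb{C}[\mathbf{q}_k,\mathbf{q}_k]+\tfrac{1}{\tau p_p}{\rm{blk}}\big(\sigma_B^2\mathbf{I}_L,\sigma_R^2\mathbf{I}_a\big)$ because $\|\mathbf{s}_{k,p}\|^{2}=1$, which produces the last two displays.

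The step I expect to be the main obstacle is the top-left block: one must propagate \emph{two} Rician decompositions (of $\mathbf{H}$ and of $\mathbf{h}_k$) through the product $\mathbf{H}\mathbf{B}\mathbf{h}_k$ at once, keep track of which terms survive the projection $\mathbf{P}$, and — crucially — notice that the only $\mathbf{\Theta}$- and $\overline{\mathbf{h}}_k$-dependent term in the second moment is exactly annihilated by the mean outer product, so the resulting covariance (hence the LMMSE estimator) does not depend on the phase-shift configuration. The remaining steps are routine bookkeeping with $\mathbf{A}\mathbf{A}^{H}=\mathbf{I}_a$, $\mathbf{A}\mathbf{B}=\mathbf{0}$, and the unit-modulus identities.
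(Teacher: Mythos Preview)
Your proposal is correct and follows essentially the same route as the paper's Appendix~A: both use the independence of the noise to get $\mathbb{E}[\mathbf{y}_{k,p}]=\mathbb{E}[\mathbf{q}_k]$, invoke $\mathbf{A}\mathbf{B}=\mathbf{0}$ (equivalently $\mathbf{B}\mathbf{A}^{H}=\mathbf{0}$) to obtain the block-diagonal structure of $\mathbb{C}[\mathbf{q}_k,\mathbf{q}_k]$, and then read off $\mathbb{C}[\mathbf{q}_k,\mathbf{y}_{k,p}]$ and $\mathbb{C}[\mathbf{y}_{k,p},\mathbf{y}_{k,p}]$ by adding the pilot-noise contribution. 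Your write-up is in fact more explicit than the paper's on the top-left block, where you spell out the cancellation of the $\mathbf{\Theta}$-dependent term $c_k\delta\epsilon_k|\mathbf{a}_N^{H}\mathbf{P}\mathbf{\Theta}\overline{\mathbf{h}}_k|^{2}\mathbf{a}_L\mathbf{a}_L^{H}$ between the second moment and the mean outer product; the paper states the block results without this intermediate step. One cosmetic point: in the sentence ``collapse the trace to $d_k(N-a)(1+\epsilon_k)=(N-a)c_k(\epsilon_k+1)$'' the equality is only valid after absorbing the prefactor $\tfrac{\beta}{\delta+1}$, so you may want to write it as $\tfrac{\beta}{\delta+1}\,{\rm{Tr}}(\mathbb{E}[\mathbf{v}\mathbf{v}^{H}])=(N-a)c_k(\epsilon_k+1)$ to avoid confusion.
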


    \begin{proof}
        The proof can be found in Appendix A.
    \end{proof}

%It is noted that the covariance required to compute the LMMSE estimator, i.e., $\mathbb{C}[\mathbf{q}_k,\mathbf{y}_k]$ and $\mathbb{C}[\mathbf{y}_k, \mathbf{y}_k]$, are both block diagonal matrix, where the off-diagonal block matrix are both block matrix with elements being zero, i.e., $\mathbf{0}_{L \times a}$ and $\mathbf{0}_{a \times L}$. This is because we consider the uncorrelated Rician channel model for $\mathbf{h}_{UR,k}, \forall k$ where the NLoS components, whose elements are (i.i.d.) complex Gaussian random variables without correlation.\footnote{When $\mathbf{h}_{B,k}$ and $\mathbf{h}_{R,k}$ are correlated, e.g., when considering correlated Rician fading model for $\mathbf{h}_k, \forall k \in \mathcal{K}$ and $\mathbf{H}$, $\mathbb{C}[\mathbf{q}_k,\mathbf{y}_k]$ would share more completed form rather than block diagonal. The study of the general channel model is postponed to future work} 

\vspace{-7pt}
Armed with the above statistics, the LMMSE estimate of the \textit{equivalent channel} $\mathbf{q}_k$ is derived in the following theorem.

    \begin{theorem} \label{Theorem 1}
        The LMMSE channel estimator of the \textit{equivalent channel} $\mathbf{q}_k$ is derived as
        \begin{align} \label{EstimatedChannel}
        \begin{footnotesize}
            \begin{aligned}
            \hat{\mathbf{q}}_{k} = [\hat{\mathbf{h}}_{B,k}^{T} 
            \ 
            \hat{\mathbf{h}}_{R,k}^{T}]^{T}, \forall k \in \mathcal{K},
        \end{aligned}  
        \end{footnotesize}
        \end{align}
        where $\hat{\mathbf{h}}_{B,k} = \hat{\underline{\mathbf{h}}}_{B,k} + \mathbf{E}_{k}  \mathbf{d}_{k} + \frac{1}{\sqrt{\tau p_p}} \mathbf{E}_{k}\mathbf{N}_{B} \mathbf{s}_{k}$, $\hat{\mathbf{h}}_{R,k} = \sqrt{d_k \epsilon_k} \mathbf{A} \overline{\mathbf{h}}_k + \sqrt{d_k}a_{k5}\mathbf{A} \widetilde{\mathbf{h}}_k + \frac{1}{\sqrt{\tau p_p}} a_{k5} \mathbf{N}_R \mathbf{s}_k$, and $\hat{\underline{\mathbf{h}}}_{B,k} = \sqrt{c_k \delta \epsilon_k} \overline{\mathbf{H}}\mathbf{B}\overline{\mathbf{h}}_k + (a_{k3}L + a_{k4}) \sqrt{c_k \delta} \overline{\mathbf{H}} \mathbf{B} \widetilde{\mathbf{h}}_k + \sqrt{c_k \epsilon_k} \mathbf{E}_k \widetilde{\mathbf{H}} \mathbf{B} \overline{\mathbf{h}}_k
        + \sqrt{c_k}\mathbf{E}_k \widetilde{\mathbf{H}} \mathbf{B} \widetilde{\mathbf{h}}_k$.
        Also, we have $a_{k3} = (a_{k1} \frac{\sigma_{B}^2}{\tau p_p} )/[(a_{k2} + \frac{\sigma_B^2}{\tau p_p}) ((a_{k2} + \frac{\sigma_{B}^2}{\tau p_p}  ) + L a_{k1} )]$,
        $a_{k4}=(a_{k2})/(a_{k2}+\frac{\sigma_{B}^2}{\tau p_p})$, $a_{k5} = d_k/(d_k + \frac{\sigma_R^2}{\tau p_p})$, and $\mathbf{E}_k = a_{k3}\mathbf{a}_L \mathbf{a}_L^{H} + a_{k4}\mathbf{I}_L$.
    \end{theorem}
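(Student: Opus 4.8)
The plan is to specialize the affine (LMMSE) estimator
\[
\hat{\mathbf{q}}_k = \mathbb{E}[\mathbf{q}_k] + \mathbb{C}[\mathbf{q}_k,\mathbf{y}_{k,p}]\,\mathbb{C}[\mathbf{y}_{k,p},\mathbf{y}_{k,p}]^{-1}\big(\mathbf{y}_{k,p}-\mathbb{E}[\mathbf{y}_{k,p}]\big)
\]
and substitute the statistics provided by Lemma~\ref{Lemma1}. Because both covariance matrices there are block-diagonal, with an $L\times L$ block for the BS antennas and an $a\times a$ block for the connected-mode elements, the estimator decouples across the two blocks and each can be handled separately. For the RDARS (connected-mode) block the $\mathbf{y}$-covariance block is the scaled identity $(d_k+\sigma_R^2/(\tau p_p))\mathbf{I}_a$, so its inverse is immediate, and multiplying by the cross-covariance block $d_k\mathbf{I}_a$ produces the scalar gain $a_{k5}=d_k/(d_k+\sigma_R^2/(\tau p_p))$.

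For the BS block, the $\mathbf{y}$-covariance block $a_{k1}\mathbf{a}_L\mathbf{a}_L^{H}+(a_{k2}+\sigma_B^2/(\tau p_p))\mathbf{I}_L$ is a rank-one perturbation of a scaled identity; I would invert it via the Sherman--Morrison identity, using $\mathbf{a}_L^{H}\mathbf{a}_L=L$ since the array-response vector has unit-modulus entries. Multiplying this inverse by the cross-covariance block $a_{k1}\mathbf{a}_L\mathbf{a}_L^{H}+a_{k2}\mathbf{I}_L$ and collecting the coefficients of $\mathbf{a}_L\mathbf{a}_L^{H}$ and $\mathbf{I}_L$ yields exactly $\mathbf{E}_k=a_{k3}\mathbf{a}_L\mathbf{a}_L^{H}+a_{k4}\mathbf{I}_L$, and this is where the stated constants $a_{k3}$ and $a_{k4}$ are read off; the key simplification is that, with $u\triangleq a_{k2}+\sigma_B^2/(\tau p_p)$, the identity $u-a_{k2}=\sigma_B^2/(\tau p_p)$ makes the $\mathbf{a}_L\mathbf{a}_L^{H}$ coefficient collapse precisely to $a_{k3}$. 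Hence $\hat{\mathbf{q}}_k=\mathbb{E}[\mathbf{q}_k]+{\rm{blk}}(\mathbf{E}_k,\,a_{k5}\mathbf{I}_a)\big(\mathbf{y}_{k,p}-\mathbb{E}[\mathbf{y}_{k,p}]\big)$.

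It then remains to expand the innovation $\mathbf{y}_{k,p}-\mathbb{E}[\mathbf{y}_{k,p}]$ using \eqref{y_k} and the channel model. Writing $\mathbf{H}\mathbf{B}\mathbf{h}_k=\sqrt{c_k}(\sqrt{\delta}\,\overline{\mathbf{H}}+\widetilde{\mathbf{H}})\mathbf{B}(\sqrt{\epsilon_k}\,\overline{\mathbf{h}}_k+\widetilde{\mathbf{h}}_k)$ and subtracting the mean $\sqrt{c_k\delta\epsilon_k}\,\overline{\mathbf{H}}\mathbf{B}\overline{\mathbf{h}}_k$ leaves a BS innovation $\sqrt{c_k\delta}\,\overline{\mathbf{H}}\mathbf{B}\widetilde{\mathbf{h}}_k+\sqrt{c_k\epsilon_k}\,\widetilde{\mathbf{H}}\mathbf{B}\overline{\mathbf{h}}_k+\sqrt{c_k}\,\widetilde{\mathbf{H}}\mathbf{B}\widetilde{\mathbf{h}}_k+\mathbf{d}_k+\tfrac{1}{\sqrt{\tau p_p}}\mathbf{N}_B\mathbf{s}_k$, and an RDARS innovation $\sqrt{d_k}\,\mathbf{A}\widetilde{\mathbf{h}}_k+\tfrac{1}{\sqrt{\tau p_p}}\mathbf{N}_R\mathbf{s}_k$. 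Applying the scalar $a_{k5}$ to the latter and adding back the mean gives $\hat{\mathbf{h}}_{R,k}$ directly. For the BS part, the crucial observation is that $\mathbf{E}_k$ reduces to the scalar $a_{k3}L+a_{k4}$ on any term carrying $\overline{\mathbf{H}}$ on its left, since $\overline{\mathbf{H}}=\mathbf{a}_L(\phi_{RB}^{a},\phi_{RB}^{e})\mathbf{a}_N^{H}(\varphi_{RB}^{a},\varphi_{RB}^{e})$ implies $\mathbf{E}_k\overline{\mathbf{H}}=(a_{k3}L+a_{k4})\overline{\mathbf{H}}$, whereas it stays intact on the $\widetilde{\mathbf{H}}$-, $\mathbf{d}_k$- and $\mathbf{N}_B\mathbf{s}_k$-terms. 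Grouping the four channel terms into $\hat{\underline{\mathbf{h}}}_{B,k}$ and keeping $\mathbf{E}_k\mathbf{d}_k+\tfrac{1}{\sqrt{\tau p_p}}\mathbf{E}_k\mathbf{N}_B\mathbf{s}_k$ separate reproduces $\hat{\mathbf{h}}_{B,k}$, and stacking the two blocks yields \eqref{EstimatedChannel}.

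The main obstacle is the Sherman--Morrison bookkeeping that produces $\mathbf{E}_k$ with the exact constants $a_{k3},a_{k4}$ (in particular verifying that the $\mathbf{a}_L\mathbf{a}_L^{H}$ coefficient simplifies to $a_{k3}$ rather than a messier expression); once the block-diagonal-plus-rank-one structure is exploited, the rest is a systematic expansion of the Rician products and repeated use of $\mathbf{a}_L^{H}\mathbf{a}_L=L$, which I would carry out in detail in the appendix.
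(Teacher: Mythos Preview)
Your proposal is correct and follows essentially the same route as the paper's proof: start from the standard affine LMMSE formula, exploit the block-diagonal structure of the covariances from Lemma~\ref{Lemma1}, invert the BS block via Sherman--Morrison/Woodbury to obtain $\mathbf{E}_k$ and the RDARS block trivially to obtain $a_{k5}$, and then expand the innovation using $\mathbf{E}_k\overline{\mathbf{H}}=(a_{k3}L+a_{k4})\overline{\mathbf{H}}$. Your write-up is in fact more detailed than the paper's terse appendix, which merely cites ``the lemma of the inverse of block matrix and Woodbury matrix identity'' and the identity $\mathbf{E}_k\overline{\mathbf{H}}=(a_{k3}L+a_{k4})\overline{\mathbf{H}}$ without spelling out the intermediate algebra.
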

    \begin{proof}
    See Appendix B.
    \end{proof}
    \vspace{-8pt}   
    Note that the equivalent channel matrix $\hat{\mathbf{Q}} \triangleq [\hat{\mathbf{q}}_1, \ldots, \hat{\mathbf{q}}_K]$ has a similar dimension as the channel matrix between the users and BS, i.e., the size of $\hat{\mathbf{Q}}$ is $(L+a) \times K$ ($a \ll N$). Thus, the required pilot symbols for obtaining $\hat{\mathbf{Q}}$ is almost the same as the conventional MIMO system without RDARS, i.e., $\tau \geq K$. Consequently, the training overhead for acquiring the I-CSI can be significantly reduced when compared to estimating the $LN$ individual channels \cite{Cunhua_Pan_Overview}.  

\vspace{-5pt}
\section{Analysis of Achievable Rate}

    Armed with $\hat{\mathbf{q}}_k$ provided in $\textbf{Theorem 1}$, we can now analyze the achievable rate of the RDARS-aided MIMO system in this section. The achievable rate is a widely adopted performance objective for system design. Its closed-form expression will enable transceiver optimization to further exploit the potential of the RDARS-aided systems for performance enhancement in Section V. 

\vspace{-10pt}
\subsection{Closed-Form Achievable Rate}

    Considering the equivalent channel $\mathbf{q}_k$ is not perfectly known, the received signal $r_k$ in (\ref{r_k 1}) can be rewritten as   
    \begin{align}
            \begin{footnotesize}
            \begin{aligned}
            r_{k} &= \underbrace{\sqrt{p_k} \mathbb{E}[\hat{\mathbf{q}}_k^{H} \mathbf{q}_k] x_k}_{ {\textit{Desired \ Signal}} }
            +
            \underbrace{\sqrt{p_k}(\hat{\mathbf{q}}_k^{H} \mathbf{q}_k - \mathbb{E}[\hat{\mathbf{q}}_k^{H} \mathbf{q}_k] ) x_{k} }_{ {\textit{Signal \ Leakage}} } 
            \\
            &+
            \underbrace{
            \sum\nolimits_{i\neq k}^{K} \sqrt{p_i} \hat{\mathbf{q}}_k^{H} \mathbf{q}_i x_i }_{ {\textit{Multi-user Interference}} }
            +
            \underbrace{
            \hat{\mathbf{q}}_k^{H} \widetilde{\mathbf{n}}}_{ {\textit{Noise}} }, \forall k \in \mathcal{K}.
            \end{aligned}
            \end{footnotesize}
    \end{align}
    \normalsize
    \noindent 
    Treating the second, the third, and the fourth terms as the \textit{effective noise}, it can be easily proved that the \textit{Desired \ Signal} is uncorrelated with the \textit{effective noise}.\footnote{Note that the phase shift configuration of RDARS configuration, i.e., $\boldsymbol{\theta}$, is fixed but unknown in this section, which will be designed by the proposed algorithms in the next section.}  As such, by applying the use-and-then-forget technique as in \cite{SIG-093, Kangda_Zhi}, the achievable rate of the $k^{th}$ user for the RDARS-aided system can be obtained as
    \begin{align} \label{Equation_Imperfect_CSI}
    \begin{footnotesize} 
    \begin{aligned}
        R_k = (\tau_c - \tau)/\tau \times  \log(1 + {\rm{SINR}}_k),
    \end{aligned}
    \end{footnotesize}
    \end{align}
    where $\tau_c$ denotes the length of the channel coherence interval, $(\tau_c - \tau)/\tau$ is the rate loss originating from the pilot overhead, and the signal-to-interference-noise-ratio (SINR) ${\rm{SINR}}_k$ can be given as \ref{EAR 1} as shown at the top of next page. To ease the analysis of the achievable rate in \ref{EAR 1}, we first characterize the auxiliary variable $\mathbf{E}_{k}$ introduced by the imperfect CSI in \textbf{Theorem 1} as follows.
    
   % Before providing the exact closed-form expression of the achievable rate in (\ref{EAR 1}), auxiliary variables $e_{k1}, e_{k2}, e_{k3}, e_{k4}$ are first introduced to characterize the performance degradation caused by imperfect CSI\footnote{Here the performance degradation refers to the rate loss introduced by the use of the estimated channel $\hat{\mathbf{h}}_k, \forall k \in \mathcal{K}$ for signal detection.}, as formalized in the following \textbf{Lemma 2}.

    \newcounter{SINR}
        \begin{figure*}[!htb]
        \footnotesize
        % Ensure that we have normal size text
        % Store the current equation number.
        \setcounter{SINR}{\value{equation}}
        \setcounter{equation}{10}
        \begin{align}
         {\rm{SINR}}_{k} = 
            \frac{
            p_k \big| \mathbb{E}[\hat{\mathbf{q}}_k^{H} \mathbf{q}_k] \big|^2
            }
            {
            p_k \bigg( \mathbb{E}\big[|\hat{\mathbf{q}}_k^{H} \mathbf{q}_k|^2\big] - \big|\mathbb{E}[\hat{\mathbf{q}}_k^{H} \mathbf{q}_k]\big|^2 \bigg)
            +
            \sum\nolimits_{i\neq k}^{K} p_i \mathbb{E}\big[|\hat{\mathbf{q}}_k^{H} \mathbf{q}_i|^2\big]
            +
            \mathbb{E}\big[ \hat{\mathbf{q}}_k^{H} {\rm{blk}}(\sigma_{B}^2 \mathbf{I}_{L}, \sigma_{R}^2 \mathbf{I}_{a} ) 
            \hat{\mathbf{q}}_k\big]
            } 
            \tag*{\normalsize (11)}
            \label{EAR 1}
        \end{align}
        % Restore the current equation number.
        \setcounter{equation}{\value{SINR}}
        % The IEEE uses it as a separator
        \hrulefill
        \end{figure*}
        \addtocounter{equation}{1}
        % \vspace{-0.1cm}

\begin{lemma} \label{Lemma2}
        We have ${\rm{Tr}}(\mathbf{E}_{k}) = L e_{k1}$, ${\rm{Tr}}(\mathbf{E}_{k}^{H} \mathbf{E}_{k}) = L e_{k3}$, ${\rm{Tr}}(\mathbf{a}_L^{H}\mathbf{E}_k \mathbf{E}_k^{H} \mathbf{a}_{L}) = L e_{k2}^2$, and $\mathbf{E}_k \overline{\mathbf{H}} = (La_{k3} + a_{k4})\overline{\mathbf{H}}=
            e_{k2} \overline{\mathbf{H}}$, $\forall k$, where
        $e_{k1} \triangleq a_{k3} + a_{k4}$, $e_{k2} \triangleq La_{k3} + a_{k4}$, $e_{k3} \triangleq L a_{k3}^2 + 2a_{k3}a_{k4} + a_{k4}^2$, and $e_{k4} \triangleq a_{k5}$.
        Besides, all the auxiliary variables are bounded within interval $[0,1]$. Notably, as $\tau p_p \rightarrow \infty$ , we obtain $e_{k1}, e_{k2}, e_{k3}, e_{k4} \rightarrow 1$. When $\tau p_p \rightarrow 0$, we obtain $e_{k1}, e_{k2}, e_{k3}, e_{k4} \rightarrow 0$.
    \end{lemma}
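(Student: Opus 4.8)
The plan is to leverage the explicit structure $\mathbf{E}_k = a_{k3}\mathbf{a}_L\mathbf{a}_L^{H} + a_{k4}\mathbf{I}_L$ from \textbf{Theorem 1} together with the fact that the array response vector $\mathbf{a}_L \triangleq \mathbf{a}_L(\phi_{RB}^{a},\phi_{RB}^{e})$ has unit-modulus entries, so $\mathbf{a}_L^{H}\mathbf{a}_L = L$. This at once yields the spectral decomposition of the Hermitian matrix $\mathbf{E}_k$: since $\mathbf{E}_k\mathbf{a}_L = a_{k3}(\mathbf{a}_L^{H}\mathbf{a}_L)\mathbf{a}_L + a_{k4}\mathbf{a}_L = (La_{k3}+a_{k4})\mathbf{a}_L$, the vector $\mathbf{a}_L$ is an eigenvector with eigenvalue $e_{k2} = La_{k3}+a_{k4}$, while every vector orthogonal to $\mathbf{a}_L$ is an eigenvector with eigenvalue $a_{k4}$ (of multiplicity $L-1$). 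From here the four identities are read off directly. The trace is the sum of eigenvalues, ${\rm{Tr}}(\mathbf{E}_k) = e_{k2} + (L-1)a_{k4} = La_{k3} + La_{k4} = Le_{k1}$. Since $\overline{\mathbf{H}} = \mathbf{a}_L(\phi_{RB}^{a},\phi_{RB}^{e})\mathbf{a}_N^{H}(\varphi_{RB}^{a},\varphi_{RB}^{e})$ has every column proportional to $\mathbf{a}_L$, applying the eigen-relation columnwise gives $\mathbf{E}_k\overline{\mathbf{H}} = (La_{k3}+a_{k4})\overline{\mathbf{H}} = e_{k2}\overline{\mathbf{H}}$. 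Using $\mathbf{E}_k = \mathbf{E}_k^{H}$, $\mathbf{a}_L^{H}\mathbf{E}_k\mathbf{E}_k^{H}\mathbf{a}_L = \mathbf{a}_L^{H}\mathbf{E}_k^{2}\mathbf{a}_L = e_{k2}^{2}(\mathbf{a}_L^{H}\mathbf{a}_L) = Le_{k2}^{2}$. Finally ${\rm{Tr}}(\mathbf{E}_k^{H}\mathbf{E}_k) = {\rm{Tr}}(\mathbf{E}_k^{2})$ is the sum of squared eigenvalues $e_{k2}^{2} + (L-1)a_{k4}^{2}$, and the elementary expansion $e_{k2}^{2} + (L-1)a_{k4}^{2} = L^{2}a_{k3}^{2} + 2La_{k3}a_{k4} + La_{k4}^{2} = L(La_{k3}^{2}+2a_{k3}a_{k4}+a_{k4}^{2}) = Le_{k3}$ finishes this case. (Equivalently, all four follow by brute-force expansion using $\mathbf{a}_L\mathbf{a}_L^{H}\mathbf{a}_L\mathbf{a}_L^{H} = L\,\mathbf{a}_L\mathbf{a}_L^{H}$.)

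For the boundedness, I would first observe that $a_{k4} = a_{k2}/(a_{k2}+\sigma_B^{2}/(\tau p_p))$ and $e_{k4} = a_{k5} = d_k/(d_k+\sigma_R^{2}/(\tau p_p))$ are of the form $x/(x+y)$ with $x>0$ and $y\ge 0$ (recalling $a_{k2}, d_k, \sigma_B^{2}, \sigma_R^{2}>0$), hence lie in $[0,1]$; likewise $a_{k1}, a_{k3}\ge 0$. The one slightly non-routine step is to simplify $e_{k2} = La_{k3}+a_{k4}$: substituting the definitions and clearing denominators collapses it to $e_{k2} = (a_{k2}+La_{k1})/(a_{k2}+La_{k1}+\sigma_B^{2}/(\tau p_p))$, again of the form $x/(x+y)$, so $e_{k2}\in[0,1]$. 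Then $e_{k1} = a_{k3}+a_{k4}$ obeys $0\le a_{k4}\le e_{k1}\le La_{k3}+a_{k4} = e_{k2}\le 1$ (using $L\ge 1$), and because $0\le a_{k4}\le e_{k2}\le 1$, the identity $e_{k3} = (e_{k2}^{2}+(L-1)a_{k4}^{2})/L$ exhibits $e_{k3}$ as a convex combination of $e_{k2}^{2}$ and $a_{k4}^{2}$, both in $[0,1]$, so $e_{k3}\in[0,1]$.

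The limiting behavior is then immediate from these closed forms. As $\tau p_p\to\infty$ one has $\sigma_B^{2}/(\tau p_p),\sigma_R^{2}/(\tau p_p)\to 0$, so $a_{k4}\to 1$, $a_{k5}\to 1$, and $a_{k3}\to 0$ (its numerator $a_{k1}\sigma_B^{2}/(\tau p_p)\to 0$ while the denominator stays bounded away from $0$); hence $e_{k1} = a_{k3}+a_{k4}\to 1$, and $e_{k2},e_{k3},e_{k4}\to 1$ as well. As $\tau p_p\to 0$ one has $\sigma_B^{2}/(\tau p_p),\sigma_R^{2}/(\tau p_p)\to\infty$, so $a_{k4}\to 0$, $a_{k5}\to 0$, and $a_{k3}\to 0$ (numerator grows like $\sigma_B^{2}/(\tau p_p)$, denominator like its square), whence $e_{k1},e_{k2},e_{k3},e_{k4}\to 0$. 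I do not expect a genuine obstacle: the only mildly delicate points are the algebraic simplification of $e_{k2}$ and tracking the harmless standing assumption $a_{k2}>0$ (guaranteed by a nonzero direct-link path loss, or by the presence of at least one reflecting element) so that the limits are well defined.
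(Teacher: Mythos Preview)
Your proposal is correct. The paper itself does not supply a proof of this lemma at all; it simply states that the argument follows by the same technique as in the cited reference and omits the details. Your explicit route via the rank-one-plus-identity structure of $\mathbf{E}_k$ and the eigen-decomposition (eigenvalue $e_{k2}$ along $\mathbf{a}_L$ and $a_{k4}$ on its orthogonal complement) is exactly the natural way to obtain all four trace/action identities, and your algebraic simplification $e_{k2} = (a_{k2}+La_{k1})/(a_{k2}+La_{k1}+\sigma_B^2/(\tau p_p))$ is the key step that makes the $[0,1]$ bounds and the two limits immediate.
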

\begin{proof}
The proof can be derived by applying a similar technique as in \cite{Kangda_Zhi}, and thus omitted here for brevity.
\end{proof}

%   Theorem for Imperfect CSI
    \begin{theorem} \label{Theorem2}
        The $\rm{SINR}$ for the $k^{th}$ user ${\rm{SINR}}_k$ is then calculated as
            \begin{align} 
            \label{SINR_Theorem2}
            \begin{footnotesize}
            \begin{aligned}
                &{\rm{SINR}}_k 
                =
                                \frac{
                    p_k E_{k}^{signal}(\mathbf{A},\mathbf{\Theta})
                    }
                    {
                    p_k E_{k}^{leak}(\mathbf{A},\mathbf{\Theta})
                    +
                    \sum_{\substack{i=1 \\ i\neq k}}^{K} p_i I_{ki}(\mathbf{A},\mathbf{\Theta})
                    +
                    E_{k}^{noise}(\mathbf{A},\mathbf{\Theta})
                    },
            \end{aligned}
            \end{footnotesize}        
            \end{align}
            
    \noindent where $E_{k}^{signal}(\mathbf{A},\mathbf{\Theta})$, $E_{k}^{noise}(\mathbf{A},\mathbf{\Theta})$, $E_{k}^{leak}(\mathbf{A},\mathbf{\Theta})$ and $I_{ki}(\mathbf{A},\mathbf{\Theta})$ are given in \ref{Theorem2Signal}, \ref{Theorem2Noise}, \ref{Theorem2Leak} and \ref{Theorem2Interference} at the top of next page, respectively, with $ f_{k}(\mathbf{A},\mathbf{\Theta}) = \mathbf{a}_N^{H}\mathbf{B}\overline{\mathbf{h}}_k = \sum_{n=1}^{N} (1-a_n) e^{j(\zeta_n^k + \boldsymbol{\theta}(n))}$, $g_{k,i}(\mathbf{A}) = \overline{\mathbf{h}}_{k}^{H} \mathbf{A}^{H} \mathbf{A} \overline{\mathbf{h}}_{i}= \sum_{n=1}^{N} a_i e^{j \varsigma_{n}^{k,i}}$, $\zeta_n^k = 2\pi\frac{d}{\lambda}\big[ \lfloor (n-1)/\sqrt{N} \rfloor \big( {\rm{sin}}(\varphi_{UR,k}^e){\rm{sin}}(\varphi_{UR,k}^a) - {\rm{sin}}(\varphi_{RB}^e){\rm{sin}}(\varphi_{RB}^a) \big) + \big( (n-1){\rm{mod}}\sqrt{N} \big) \big({\rm{cos}}(\varphi_{UR,k}^e) - {\rm{cos}}(\varphi_{RB}^e) \big) 
    \big]$, and $\varsigma_{n}^{k,i} = 2\pi\frac{d}{\lambda}\big[ \lfloor (n-1)/\sqrt{N} \rfloor \big( {\rm{sin}}(\varphi_{UR,i}^e){\rm{sin}}(\varphi_{UR,i}^a) - {\rm{sin}}(\varphi_{UR,k}^e){\rm{sin}}(\varphi_{UR,k}^a) \big) + \big( (n-1){\rm{mod}}\sqrt{N} \big) \big({\rm{cos}}(\varphi_{UR,i}^e) - {\rm{cos}}(\varphi_{UR,k}^e) \big) \big]$.                
    \end{theorem}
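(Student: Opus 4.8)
\emph{Proof sketch (plan).} The plan is to evaluate the numerator and the three denominator terms of \ref{EAR 1} one at a time, after substituting the closed-form estimator $\hat{\mathbf{q}}_k=[\hat{\mathbf{h}}_{B,k}^{T}\ \hat{\mathbf{h}}_{R,k}^{T}]^{T}$ from \textbf{Theorem 1} and the Rician decompositions of $\mathbf{H}$, $\mathbf{h}_k$, $\mathbf{d}_k$. Since $\mathbf{q}_i=[\mathbf{h}_{B,i}^{T}\ \mathbf{h}_{R,i}^{T}]^{T}$, every inner product splits as $\hat{\mathbf{q}}_k^{H}\mathbf{q}_i=\hat{\mathbf{h}}_{B,k}^{H}\mathbf{h}_{B,i}+\hat{\mathbf{h}}_{R,k}^{H}\mathbf{h}_{R,i}$, so the BS-side and the RDARS-side contributions can be computed separately and then added. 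After substitution each of $\hat{\mathbf{h}}_{B,k},\mathbf{h}_{B,i},\hat{\mathbf{h}}_{R,k},\mathbf{h}_{R,i}$ is a sum of a deterministic LoS term plus terms each linear in exactly one of the mutually independent, zero-mean arrays $\widetilde{\mathbf{H}},\widetilde{\mathbf{h}}_k,\widetilde{\mathbf{h}}_i,\widetilde{\mathbf{d}}_k,\widetilde{\mathbf{d}}_i,\mathbf{N}_B,\mathbf{N}_R$, or bilinear in two of them (as in $\mathbf{E}_k\widetilde{\mathbf{H}}\mathbf{B}\widetilde{\mathbf{h}}_k$). The whole computation then reduces to bookkeeping which monomials survive the expectation and applying a short catalogue of identities.

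First I would handle the numerator $|\mathbb{E}[\hat{\mathbf{q}}_k^{H}\mathbf{q}_k]|^2$. Taking the expectation annihilates every monomial containing an isolated zero-mean Gaussian factor, leaving only products of LoS components plus the ``diagonal'' parts of the bilinear terms via $\mathbb{E}[\widetilde{\mathbf{H}}^{H}\mathbf{M}\widetilde{\mathbf{H}}]={\rm{Tr}}(\mathbf{M})\mathbf{I}_N$ and $\mathbb{E}[\widetilde{\mathbf{h}}^{H}\mathbf{M}\widetilde{\mathbf{h}}]={\rm{Tr}}(\mathbf{M})$. Using $\overline{\mathbf{H}}=\mathbf{a}_L(\phi_{RB}^{a},\phi_{RB}^{e})\mathbf{a}_N^{H}(\varphi_{RB}^{a},\varphi_{RB}^{e})$ so that $\overline{\mathbf{H}}^{H}\overline{\mathbf{H}}=L\,\mathbf{a}_N\mathbf{a}_N^{H}$, $\overline{\mathbf{H}}\,\overline{\mathbf{H}}^{H}=N\,\mathbf{a}_L\mathbf{a}_L^{H}$, and $\mathbf{E}_k\overline{\mathbf{H}}=e_{k2}\overline{\mathbf{H}}$ from \textbf{Lemma 2}, these expectations collapse to scalar multiples of $\mathbf{a}_N^{H}\mathbf{B}\overline{\mathbf{h}}_k=f_k(\mathbf{A},\mathbf{\Theta})$ and of $\overline{\mathbf{h}}_k^{H}\mathbf{A}^{H}\mathbf{A}\overline{\mathbf{h}}_k$; writing these array-response inner products entry-by-entry with the UPA phases produces the stated $f_k$, $g_{k,i}$, $\zeta_n^k$, $\varsigma_n^{k,i}$, which in turn yields \ref{Theorem2Signal}. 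The noise term is a Gaussian second moment, $\mathbb{E}[\hat{\mathbf{q}}_k^{H}{\rm{blk}}(\sigma_B^2\mathbf{I}_L,\sigma_R^2\mathbf{I}_a)\hat{\mathbf{q}}_k]=\sigma_B^2\,\mathbb{E}[\|\hat{\mathbf{h}}_{B,k}\|^2]+\sigma_R^2\,\mathbb{E}[\|\hat{\mathbf{h}}_{R,k}\|^2]$; by orthogonality of the independent components it splits into the squared norm of the LoS part plus traces such as ${\rm{Tr}}(\mathbf{E}_k\mathbf{E}_k^{H})=Le_{k3}$, ${\rm{Tr}}(\mathbf{a}_L^{H}\mathbf{E}_k\mathbf{E}_k^{H}\mathbf{a}_L)=Le_{k2}^2$, and $\mathbb{E}[\|\overline{\mathbf{H}}\mathbf{B}\widetilde{\mathbf{h}}_k\|^2]=L(N-a)$, all supplied by \textbf{Lemma 2} together with $\mathbf{B}\mathbf{B}^{H}=\mathbf{I}-\mathbf{A}^{H}\mathbf{A}$; collecting terms gives \ref{Theorem2Noise}.

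The bulk of the effort is the two fourth-order quantities $\mathbb{E}[|\hat{\mathbf{q}}_k^{H}\mathbf{q}_k|^2]$ (which enters $E_k^{leak}$ after subtracting the already-computed $|\mathbb{E}[\hat{\mathbf{q}}_k^{H}\mathbf{q}_k]|^2$) and $\mathbb{E}[|\hat{\mathbf{q}}_k^{H}\mathbf{q}_i|^2]$ for $i\neq k$. I would expand the modulus squared, group the resulting monomials by the multiset of Gaussian arrays they contain, discard any monomial with an odd number of copies of some array (its expectation vanishes by circular symmetry), and evaluate the surviving even monomials with the standard complex-Gaussian moment formulas $\mathbb{E}[\widetilde{\mathbf{h}}^{H}\mathbf{M}\widetilde{\mathbf{h}}]={\rm{Tr}}(\mathbf{M})$, $\mathbb{E}[\widetilde{\mathbf{h}}^{H}\mathbf{M}_1\widetilde{\mathbf{h}}\;\widetilde{\mathbf{h}}^{H}\mathbf{M}_2\widetilde{\mathbf{h}}]={\rm{Tr}}(\mathbf{M}_1){\rm{Tr}}(\mathbf{M}_2)+{\rm{Tr}}(\mathbf{M}_1\mathbf{M}_2)$, and their column-wise analogues for $\widetilde{\mathbf{H}}$, repeatedly reducing with $\mathbf{E}_k\overline{\mathbf{H}}=e_{k2}\overline{\mathbf{H}}$, ${\rm{Tr}}(\mathbf{E}_k)=Le_{k1}$, ${\rm{Tr}}(\mathbf{E}_k\mathbf{E}_k^{H})=Le_{k3}$ from \textbf{Lemma 2} and with $\mathbf{B}\mathbf{B}^{H}=\mathbf{I}-\mathbf{A}^{H}\mathbf{A}$, $\overline{\mathbf{h}}_k^{H}(\mathbf{I}-\mathbf{A}^{H}\mathbf{A})\overline{\mathbf{h}}_k=N-a$, and the scalars $f_k$, $g_{k,i}$. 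This gives \ref{Theorem2Leak} and \ref{Theorem2Interference}. I expect the main obstacle to be exactly this fourth-moment bookkeeping: in $\hat{\mathbf{q}}_k^{H}\mathbf{q}_k$ both factors already contain ``doubly random'' pieces ($\mathbf{E}_k\widetilde{\mathbf{H}}\mathbf{B}\widetilde{\mathbf{h}}_k$ on one side, $\mathbf{H}\mathbf{B}\mathbf{h}_k$ on the other), so $|\hat{\mathbf{q}}_k^{H}\mathbf{q}_k|^2$ can involve up to four independent Gaussian arrays at once, and determining which of the many cross-terms between these pieces survive and correctly pairing them via the above moment identities is where care is most needed; the remaining work is routine application of the catalogue.
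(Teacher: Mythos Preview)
Your plan is essentially the paper's own proof: decompose $\hat{\mathbf{q}}_k^{H}\mathbf{q}_i$ into its BS-block and RDARS-block parts, substitute the Rician expansions and the estimator from \textbf{Theorem~1}, and reduce all surviving Gaussian moments with the trace identities of \textbf{Lemma~2} together with $\mathbf{B}\mathbf{B}^{H}=\mathbf{I}-\mathbf{A}^{H}\mathbf{A}$ and $\overline{\mathbf{H}}^{H}\overline{\mathbf{H}}=L\,\mathbf{a}_N\mathbf{a}_N^{H}$. The paper organizes exactly the same computation by listing the monomials explicitly (its equations for $I_{ki}$ and $\mathbb{E}[|\hat{\mathbf{q}}_k^{H}\mathbf{q}_k|^2]$ enumerate thirteen and eighteen terms, respectively), deferring the purely BS-side pieces to \cite{Kangda_Zhi} and working out the RDARS-side and mixed pieces by hand.

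One point where your wording is slightly dangerous: the sentence ``the BS-side and the RDARS-side contributions can be computed separately and then added'' is true for the inner product $\hat{\mathbf{q}}_k^{H}\mathbf{q}_i$ itself, but \emph{not} for its squared modulus. In $|\hat{\mathbf{h}}_{B,k}^{H}\mathbf{h}_{B,i}+\hat{\mathbf{h}}_{R,k}^{H}\mathbf{h}_{R,i}|^2$ the cross term $2\,{\rm Re}\{(\hat{\mathbf{h}}_{B,k}^{H}\mathbf{h}_{B,i})(\hat{\mathbf{h}}_{R,k}^{H}\mathbf{h}_{R,i})^{*}\}$ does not vanish in expectation, because both blocks share the deterministic LoS factors and (for $i=k$) the random $\widetilde{\mathbf{h}}_k$. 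These cross terms are precisely what produce the $f_k^{H}f_i\,g_{i,k}$ and $(\overline{\mathbf{h}}_k^{H}\overline{\mathbf{h}}_i-g_{k,i})g_{i,k}$ pieces in \ref{Theorem2Interference}, and the six mixed terms (the paper's $\widetilde{13}$--$\widetilde{18}$) in the leakage expansion. Your later description (``expand the modulus squared, group the resulting monomials'') will pick these up automatically, so this is only a caution about bookkeeping, not a gap in the method.
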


    \begin{proof}
    See Appendix C.
    \end{proof}

    \newcounter{term}
        \begin{figure*}[!htb]
        \footnotesize
        % Ensure that we have normal text
        % \normalsize
        % Store the current equation number.
        \setcounter{term}{\value{equation}}
        \setcounter{equation}{12}
        %Signal%
        \begin{align} 
            &E_{k}^{signal}(\mathbf{A},\mathbf{\Theta}) 
            = \bigg[ 
                L \big[
                    |f_{k}(\mathbf{A},\mathbf{\Theta}) |^2 c_{k} \delta \epsilon_{k}
                    +
                    (N-a) c_{k} \delta e_{k2}
                    +
                    \big( 
                        (N-a)c_{k}(\epsilon_{k} + 1) + \gamma_{k}
                    \big) e_{k1}
                \big]
            +
                a\big[ 
                     d_{k} \epsilon_{k}
                    +  e_{k4} d_{k}
                \big]
            \bigg]^2
        \tag*{\normalsize (13)}
        \label{Theorem2Signal}
        \\ 
        %Noise%
            &E_{k}^{noise} (\mathbf{A},\mathbf{\Theta}) 
            =    
                   \sigma_B^2 L \big[
                        |f_{k}(\mathbf{A},\mathbf{\Theta}) |^2 c_{k} \delta \epsilon_{k}
                        +
                        (N-a) c_{k} \delta e_{k2}
                        +
                        \big( 
                            (N-a)c_{k}(\epsilon_{k} + 1) + \gamma_{k}
                        \big) e_{k1}
                    \big]
                +
                   \sigma_R^2 a \big[ 
                     d_{k} \epsilon_{k}
                    +  e_{k4} d_{k} 
                    \big]
        \tag*{\normalsize (14)}
        \label{Theorem2Noise}
        \\
        %Leak%
        &E_{k}^{leak}(\mathbf{A},\mathbf{\Theta}) 
            = 
                L |f_{k}(\mathbf{A},\mathbf{\Theta})|^2
                    c_{k}^2 \delta \epsilon_k
                        \bigg[ 
                            (N-a) (L \delta + \epsilon_k + 1) 
                            (e_{k2}^2 + 1)
                            +
                            2(L e_{k1} + e_{k2})
                            (e_{k2} + 1)
                        \bigg]
            \notag
            \\
            &\ \ \ +
                L |f_{k}(\mathbf{A},\mathbf{\Theta})|^2
                c_k \delta \epsilon_k 
                \bigg[
                    \gamma_k 
                    + (\gamma_k + \frac{\sigma_B^2}{\tau \rho}) e_{k2}^2 
                \bigg]
            +
                L^{2} (N-a)^2 c_{k}^2 \delta^2 e_{k2}^2
            +
                2 L (N-a)^2 c_k^2 \delta (\epsilon_k + 1) e_{k2}^2
                +
                L (N-a)^2 c_k^2 (\epsilon_k + 1)^2 e_{k3}
            \notag
            \\
            &\ \ \ +
                L^2 (N-a) c_{k}^2 \bigg[ 
                    (2\epsilon_k + 1) e_{k1}^2
                    +
                    2 \delta e_{k1} e_{k2}
                \bigg] 
            +
                 L (N-a) c_{k} \bigg[ 
                    c_{k} \big( 
                        2\delta e_{k2}^2 + (2\epsilon_k + 1) e_{k3}
                    \big)
                    +
                    \big(
                        2 \gamma_k + \frac{\sigma_B^2}{\tau \rho}
                    \big)
                    (\delta e_{k2}^2 + (\epsilon_k + 1) e_{k3})
                \bigg]  
            \notag
            \\
            &\ \ \ +
                L \gamma_k 
                \bigg[ 
                    \gamma_k + \frac{\sigma_B^2}{\tau \rho}
                \bigg]
                e_{k3}
            +
                a d_k^2 \epsilon_k + a d_k^2 \epsilon_k e_{k4}^2
                +
                a d_k^2 e_{k4}^2
                +
                a \frac{\sigma_R^2}{\tau\rho} d_k e_{k4}^2 (\epsilon_k+1)
        \tag*{\normalsize (15)}
        \label{Theorem2Leak}
        \\
        %Interference%
            &I_{ki}(\mathbf{A}, \mathbf{\Theta})
            =
            %%%reflection related terms
                L^2 |f_{k}(\mathbf{A},\mathbf{\Theta})|^2 |f_{i}(\mathbf{A},\mathbf{\Theta})|^2
                c_k c_i \delta^2 \epsilon_k  \epsilon_i 
            +
                L |f_{k}(\mathbf{A},\mathbf{\Theta})|^2
                c_k \delta \epsilon_k
                \bigg[
                        c_{i} \big(
                        L (N-a) \delta
                    +
                        (N-a) \epsilon_i
                    +
                        (N-a)
                    +
                        2 L e_{k1}
                    \big)
                    +
                        \gamma_{i}
                \bigg]
            \notag
            \\
            &\ \ \ +
                L |f_{i}(\mathbf{A},\mathbf{\Theta})|^2
                c_i \delta \epsilon_i
                \bigg[ 
                    c_k e_{k2}
                    \big(
                            L (N-a) \delta e_{k2}
                        +
                            (N-a) \epsilon_k e_{k2}
                        +
                            (N-a) e_{k2}
                        +
                            2 L e_{k1}
                    \big)
                    +
                    \big(
                            \gamma_k
                        +
                            \frac{\sigma_B^2}{\tau \rho}
                    \big) e_{k2}^2
                \bigg]
            \notag
            \\
            &\ \ \ +
                    L^2 (N-a)^2 c_k c_i \delta^2 e_{k2}^2
                +
                    L (N-a)^2 c_k c_i 
                    \bigg[ 
                        \delta \big( 
                            \epsilon_k + \epsilon_i + 2
                        \big)
                        e_{k2}^2
                    +
                        (\epsilon_k + 1) (\epsilon_i + 1) e_{k3}
                    \bigg]
            +
                L^2 (N-a) c_k c_i e_{k1} 
                    \bigg[ 
                        (\epsilon_k + \epsilon_i + 1) e_{k1}
                        + 
                        2 \delta e_{k2}
                    \bigg]
            \notag
            \\
            &\ \ \ +
                L^2 c_k c_i \epsilon_k \epsilon_i e_{k1} 
                    \bigg[ 
                        |\overline{\mathbf{h}}_{k}^{H} 
                            (\mathbf{I}_{N} - \mathbf{A}^{H} \mathbf{A})
                            \overline{\mathbf{h}}_{i}
                        |^2 e_{k1}
            +
                        2 \delta {\rm{Re}} \bigg( 
                            f_{k}^{H} (\mathbf{A},\mathbf{\Theta})
                            f_{i}(\mathbf{A},\mathbf{\Theta})
                            \overline{\mathbf{h}}_i^{H}
                            (\mathbf{I} - \mathbf{A}^{H} \mathbf{A})
                            \overline{\mathbf{h}}_k
                        \bigg)
                    \bigg]
            +
                     L \gamma_i (\gamma_k + \frac{\sigma_B^2}{\tau \rho}) e_{k3}
            \notag
            \\
            &\ \ \ +
            L (N-a) 
                \bigg[ 
                    \big( \gamma_k + \frac{\sigma_B^2}{\tau \rho } \big)
                    c_i
                    \big(
                        \delta e_{k2}^2 + (\epsilon_i + 1) e_{k3}
                    \big)
                +
                    \gamma_i c_{k}
                    \big(
                        \delta e_{k2}^2 + (\epsilon_k + 1) e_{k3}
                    \big)
                \bigg]
            +
                d_{k} d_{i}
                        \epsilon_{k}\epsilon_{i}
                        \big|
                            g_{k,i} (\mathbf{A})
                        \big|^2
            \notag
            \\
            %%%distribution related terms
            &\ \ \ +
                a d_{i} \bigg[ 
                    d_k \epsilon_k + 
                        (\epsilon_i + 1) 
                        (d_k + \frac{\sigma_R^2}{\tau\rho})
                        e_{k4}^2
                \bigg]
            +
                 2 {\rm{Re}} \bigg(
                        L \sqrt{c_k c_i d_k d_i} \delta \epsilon_k \epsilon_i 
                        f_{k}^{H}(\mathbf{B}) f_{i}(\mathbf{B}) g_{i,k}(\mathbf{A})
                        +
                        L \sqrt{c_k c_i d_{k} d_{i}} \epsilon_k \epsilon_i
                          e_{k1}
                         \big(\overline{\mathbf{h}}_k^{H}\overline{\mathbf{h}}_i - g_{k,i}(\mathbf{A}) \big) g_{i,k}(\mathbf{A})
                        \bigg)
            \tag*{\normalsize (16)}            
            \label{Theorem2Interference}
           \end{align}
        % Restore the current equation number.
        \setcounter{equation}{\value{term}}
        % The IEEE uses as a separator
        \hrulefill
        \end{figure*}
        \addtocounter{equation}{4}
        \vspace{-0.2cm}

With the SINR analysis in \textbf{Theorem 2}, the achievable rate can finally be derived in closed-form as (\ref{Equation_Imperfect_CSI}). The closed-form expression of the rate does not require the calculation of inverse matrices as well as numerical integrals. In contrast to time-consuming Monte Carlo simulations, the analytical rate expression provides a simple yet efficient way for performance evaluation. Moreover, the derived achievable rate covers the existing results of the following systems as special cases.

    \begin{itemize}
            \item For $a = 0$, the RDARS-aided system reduces to the RIS-aided system and the rate (\ref{Equation_Imperfect_CSI}) reduces to that of RIS-aided MIMO systems with imperfect CSI obtained by LMMSE channel estimator as in \cite{Kangda_Zhi}.
            \item For $a = N$, the RDARS-aided system becomes a DAS with two distributed antenna sets where one set is located at the BS with $L$ antennas and the other deployed at the remote location with $N$ antennas. The rate in (\ref{Equation_Imperfect_CSI}) then reduces to that of the DAS as in \cite{DAS}.
            \item For $N = a = 0$, the RDARS-aided system reduces to a traditional massive MIMO system and the rate in (\ref{Equation_Imperfect_CSI}) reduces to that of massive MIMO systems as given in \cite{SIG-093}.
    \end{itemize}

     The closed-form rate expression not only explicitly characterizes the impacts of the system parameters including $L$, $N$, AoA and AoD, path-loss parameters, and Rician factors on the performance as demonstrated in Section VI, it also enables effective TTS transceiver optimization which will be investigated in the next section. Notice that the achievable rate depends on the statistical CSI only. Then the RDARS phase shifts and the transmit powers can be jointly designed based on the S-CSI, which thus greatly alleviates the computational complexity and reduces the feedback overhead for RDARS configuration and power control, especially when a large scale RDARS is applied.
    
    % As such, the signal processing complexity and channel estimation overhead can be significantly reduced since large-scale parameters vary slowly and remain invariant during a large period. The optimization is carried out in the following section to further reveal the potential of the RDARS-aided system.

\vspace{-5pt}
\section{Joint Optimization of Uplink Transmit Power and RDARS}

    In this section, we jointly design the users' transmit power and the RDARS to maximize the uplink performance. Particularly, we consider the weighted sum rate as the objective function for system design. As introduced, the RDARS is a reconfigurable combination of DAS and RIS. Roughly speaking, it may provide distribution gain from DAS, reflection gain from RIS, as well as selection gain from the mode selection of each RDARS element. The mode selection for each RDARS element is implemented through the indicator matrix $\mathbf{A}$. This paper is the first work on RDARS-aided multi-user systems. For simplicity, here we mainly focus on the joint optimization of uplink transmit powers and the phase shifts for RDARS elements performing \textit{reflection mode} to exploit both distribution gain and reflection gain, by considering a fixed indicator matrix $\mathbf{A}$ (and thus $a$). We expect that the mode selection, i.e., the design of $\mathbf{A}$, will provide an extra degree of freedom in optimization for RDARS-aided systems and it will be investigated in the future. Given the indicator matrix $\mathbf{A}$, the corresponding term in ${\rm{SINR}}_{k}$, i.e., $E_{k}^{signal}(\mathbf{A},\mathbf{\Theta})$, $E_{k}^{leak}(\mathbf{A},\mathbf{\Theta})$, $I_{ki}(\mathbf{A},\mathbf{\Theta})$, $E_{k}^{noise}(\mathbf{A},\mathbf{\Theta})$ are concisely rewritten as $E_{k}^{signal}(\boldsymbol{\theta})$, $E_{k}^{leak}(\boldsymbol{\theta})$, $I_{ki}(\boldsymbol{\theta})$, $E_{k}^{noise}(\boldsymbol{\theta})$, respectively.

    %\footnote{The statistical CSI is assumed to be perfectly known for tractability as in \cite{Kangda_Zhi, MingMin_Zhao}. The impact of imperfect statistical CSI is left for future work.}    
    
    % Since the obtained ergodic achievable rate in Section IV only requires the statistical CSI, we design the system based on statistical CSI. 
    
    % Such design alleviates the burdens of overhead due to frequent channel estimation and computational complexity, especially in the sub-6 GHz frequency range, resulting in lower energy consumption for both RDARS and users for power control during data transmission.
    
% \vspace{-10pt}
\vspace{-7pt}
\subsection{Weighted Sum Rate Maximization}

    Without loss of generality, the ratio $(\tau_c - \tau)/\tau$ is omitted for brevity. It will not affect the optimal solution for weighted sum rate maximization. Accordingly, the weighted sum rate maximization problem is formulated as 
    \begin{align}
    \begin{footnotesize}
            \begin{aligned}
            \mathcal{P}_o: \ \max\limits_{\boldsymbol{\theta}, \mathbf{p}}& \quad 
            f_o(\boldsymbol{\theta},\mathbf{p}) = \sum_{k=1}^{K} w_k R_k, \notag \\ 
                s.t. \ 
                &{\rm{CR1}}: \ |\boldsymbol{\theta}(i)| = 1, \forall i \in \widetilde{\mathcal{A}}, \\
                &{\rm{CR2}}: \ 0 \leq p_k \leq p_{\max}, \forall k \in \mathcal{K},
            \end{aligned}
    \end{footnotesize}
    \end{align}
   where $\mathbf{p} = [p_1, p_2, \cdots, p_K]^{T}$, $w_k$ is the weighting parameter which can be used to model the priority of the $k^{th}$ user and $R_k$ is given in \eqref{Equation_Imperfect_CSI} with ${\rm{SINR}}_k$ derived in (\ref{SINR_Theorem2}). ${\rm{CR1}}$ represents the unit-modulus constraint of the phase shifts for RDARS elements performing \textit{reflection mode}, and ${\rm{CR2}}$ refers to the power constraint of each user's transmit power. 
   % \footnote{Below, we refer to the optimization w.r.t. phase shift without further specifying the optimization is conducted only for element perform \textit{reflection mode}.}

    The problem $\mathcal{P}_o$ is difficult to tackle due to the coupling of variables, e.g., the transmit powers ${p}_k, \forall k \in \mathcal{K}$ and the phase shifts $\boldsymbol{\theta}$ are deeply coupled in the complicated expression of the SINR as shown in (\ref{SINR_Theorem2}). 
    Besides, the involvement of the fourth order terms of the phase shifts in the desired signal power \ref{Theorem2Signal} and the multi-user interference power \ref{Theorem2Interference}, and the non-convex unit-modulus constraints for $\boldsymbol{\theta}$ further hinder the optimal solution. \textit{This optimization problem is challenging and has not been considered even for the special case of RDARS-aided systems, i.e., the RIS-aided systems by setting $a = 0$.}\footnote{TTS transceiver design for RIS-aided systems with imperfect CSI and MRC was investigated in \cite{Kangda_Zhi}. However, the design aimed at the maximization of the minimum rate and only the optimization of the phase shifts was considered.} To deal with it, we first transform $\mathcal{P}_0$ into a more tractable form as $\mathcal{P}_1$. Then, the BCD method is exploited to find a stationary solution. Closed-form update rules for the transmit power design $p_k$, as well as closed-form dual variables are derived. As for the design of phase shifts $\boldsymbol{\theta}$, two algorithms are proposed. The details of the proposed joint design algorithm are provided subsequently.
    
    % to find a stationary solution for P1\mathcal{P}_1 whose effectiveness will be verified with the simulation result. Closed-formed update rules for the transmit power design pk,∀k∈K\mathbf{p}_k, \forall k \in \mathcal{K} as well as dual variables are derived in closed-form. Also, for the design of phase shift θ\boldsymbol{\theta}, two algorithms are proposed, i.e., MM-based algorithm and RGA-based algorithm.
    
\vspace{-14pt}
\subsection{Problem Reformulation}

    In order to effectively solve the problem $\mathcal{P}_0$, we resort to the FP technique to reformulate the optimization problem. To be specific, two transformations are performed sequentially as follows.

\subsubsection{Lagragian Dual Transform}
    By defining a collection of auxiliary variables $\{ \eta_k \},\forall k \in \mathcal{K}$, the objective function of the original problem $\mathcal{P}_0$ can be equivalently transformed to its Lagragian dual function
    $f_r(\boldsymbol{\theta},\mathbf{p},\boldsymbol{\eta})$ as shown in \ref{f_r} at the top of this page, where $\boldsymbol{\eta} \triangleq [\eta_1, \eta_2, \cdots, \eta_K]^{T}$ with constraints $\eta_k \geq 0, \forall k \in \mathcal{K}$.

    \newcounter{f_r}
        \begin{figure*}[!htb]
        \footnotesize
        % Ensure that we have normal text
        % \normalsize
        % Store the current equation number.
        \setcounter{f_r}{\value{equation}}
        \setcounter{equation}{16}
        \begin{align} 
            f_r(\boldsymbol{\theta},\mathbf{p},\boldsymbol{\eta}) 
            &= 
                \sum_{k=1}^{K} 
            \bigg(
                w_k \log(1 + \eta_k)
                -  w_k \eta_k 
            +  
                    \frac{ w_k (1+\eta_k) p_k E_{k}^{signal} (\boldsymbol{\theta}) }
                    {
                        p_k E_{k}^{signal} (\boldsymbol{\theta}) 
                    + 
                        p_k E_{k}^{leak} (\boldsymbol{\theta})
                    + 
                        \sum_{\substack{i = 1 \\i \neq k}}^{K} p_i I_{k,i} (\mathbf{\theta})
                    +
                        E_{k}^{noise} (\boldsymbol{\theta})
                    }
            \bigg)
            \tag*{\normalsize (17)}
            \label{f_r}
        \end{align}
        % Restore the current equation number.
        \setcounter{equation}{\value{f_r}}
        % The IEEE uses it as a separator
        \hrulefill
        \end{figure*}
        \addtocounter{equation}{1}

\subsubsection{Quadratic Transform}
    Next, the quadratic transform is conducted to tackle the remaining fractional terms in $f_r(\boldsymbol{\theta},\mathbf{p},\boldsymbol{\eta})$. Specifically, by introducing another set of auxiliary variables $\{ \chi_k\},\forall k \in \mathcal{K}$ for each ratio in the last term of \ref{f_r}, the objective function $f_r(\boldsymbol{\theta},\mathbf{p},\boldsymbol{\eta})$ can be further reformulated as $f_q(\boldsymbol{\theta},\mathbf{p},\boldsymbol{\eta},\boldsymbol{\chi})$ as given in \ref{f_q} at the top of this page, where $\boldsymbol{\chi} \triangleq [\chi_1, \chi_2, \cdots, \chi_K]^{T}$ with constraints $\chi_k \geq 0, \forall k \in \mathcal{K}$.

    \newcounter{f_q}
        \begin{figure*}[!htb]
        \footnotesize
        % Ensure that we have normal text
        % \normalsize
        % Store the current equation number.
        \setcounter{f_q}{\value{equation}}
        \setcounter{equation}{17}
        \begin{align}
            &f_q(\boldsymbol{\theta},\mathbf{p},\boldsymbol{\eta},\boldsymbol{\chi})
            \notag
            \\
            &=
            \sum_{k=1}^{K} \bigg( 
                w_k \log(1 + \eta_k) - w_k \eta_k
                +
                2 \chi_k \sqrt{ w_k (1+\eta_k) p_k E_{k}^{signal} (\boldsymbol{\theta})}
            -
                \chi_k^2 \big[ 
                     p_k E_{k}^{signal} (\boldsymbol{\theta}) 
                    + 
                        p_k E_{k}^{leak} (\boldsymbol{\theta})
                    + 
                        \sum_{\substack{i = 1 \\i \neq k}}^{K} p_i I_{k,i} (\boldsymbol{\theta})
                    +
                        E_{k}^{noise} (\boldsymbol{\theta})
                \big]
            \bigg)
             \tag*{\normalsize (18)}
            \label{f_q}
        \end{align}
        % Restore the current equation number.
        \setcounter{equation}{\value{f_q}}
        % The IEEE uses as a separator
        \hrulefill
        \end{figure*}
        \addtocounter{equation}{1}
        % \vspace{-0.05cm}

    After applying the above FP approach, the original problem has been equivalently transformed to the following problem
    \begin{align}
        \begin{footnotesize}
            \begin{aligned} 
            \mathcal{P}_1: \ \max\limits_{\boldsymbol{\theta},\mathbf{p},\boldsymbol{\eta},\boldsymbol{\chi}}& \quad 
            f_q(\boldsymbol{\theta},\mathbf{p},\boldsymbol{\eta},\boldsymbol{\chi}), \notag \\ 
                s.t. \ 
                &{\rm{CR1}}: \ |\boldsymbol{\theta}(i)| = 1  , \forall i\in \widetilde{\mathcal{A}}, \\ 
                &{\rm{CR2}}: \ 0 \leq p_k \leq p_{\max}, \forall k \in \mathcal{K}, \\
                &{\rm{CR3}}: \ 0 \leq \eta_k , \ 0 \leq \chi_k, \forall k \in \mathcal{K}.
            \end{aligned}
        \end{footnotesize}
    \end{align}
    It should be noted that the above transformations decouple the numerator and denominator of each ratio term and move the ${\rm{SINR}}_k$ outside the logarithm of the rate expression in (\ref{Equation_Imperfect_CSI}) \cite{Kaiming_Shen}. However, finding a globally optimal solution for $\mathcal{P}_1$ is still intractable due to the non-convex constraints imposed by $\boldsymbol{\theta}$ in CR1. As such, we propose to decouple these variables into four disjoint blocks and then the non-convex BCD method is invoked to update these variables in a cyclically iterative manner under the guarantee of stationary solution for $\mathcal{P}_1$.

\vspace{-10pt}   
\subsection{Update of Dual Variables $\boldsymbol{\eta}$ and $\boldsymbol{\chi}$}

    Following the BCD method, the optimal dual variables $\boldsymbol{\eta}$ and $\boldsymbol{\chi}$ in each iteration can be found by setting the corresponding first order derivatives of the objective function to zero. To be specific, given ${\boldsymbol{\theta}}$, ${p}_k$, and ${\chi}_k, \forall k \in \mathcal{K}$ obtained in the last iteration, by setting the first order derivative of the objective function \ref{f_q} w.r.t. $\eta_k$ to zero and after some simple manipulations, its optimal solution is derived as
    \begin{align} \label{Optimal_eta}
    \begin{footnotesize}
    \begin{aligned}
        \eta_k^{opt} = 
                    \frac{ \overline{{\kappa}}_{k}^2 + \overline{{\kappa}}_{k} \sqrt{ \overline{{\kappa}}_{k}^2 + 4  }}{2},
    \end{aligned}
    \end{footnotesize}
    \end{align}
     where $\overline{{\kappa}}_{k}$ is defined as
    \begin{align}
    \begin{footnotesize}
    \begin{aligned}
        \overline{{\kappa}}_{k} = \frac{1}{\sqrt{w}_k} \sqrt{ {\chi}_{k}^2 {p}_k E_{k}^{signal} ({\boldsymbol{\theta}} ) }.
    \end{aligned}
    \end{footnotesize}
    \end{align}
    Similarly, the optimal solution of $\chi_k$ is obtained as $\chi_k^{opt}$
    \begin{align} \label{Optimal_chi}
    \begin{footnotesize}
    \begin{aligned}
    % \begin{small}
            % &\chi_k^{opt} \notag
            % \\
            &= \frac{\sqrt{ w_k (1+{\eta}_k) {p}_k E_{k}^{signal} ({\boldsymbol{\theta}})}}
            {  
                     {p}_k E_{k}^{signal} ({\boldsymbol{\theta}}) 
                    + 
                        {p}_k E_{k}^{leak} ({\boldsymbol{\theta}})
                    + 
                        \sum_{i = 1 ,i \neq k}^{K} {p}_i I_{k,i} ({\boldsymbol{\theta}})
                    +
                        E_{k}^{noise} ({\boldsymbol{\theta}})
            }.
    % \end{small} 
    \end{aligned}
    \end{footnotesize}
    \end{align}

\vspace{-10pt}
\subsection{Design of Transmit Powers $\mathbf{p}$}

    Given ${\boldsymbol{\theta}},{\eta}_k, {\chi}_k, \forall k \in \mathcal{K}$, the optimal transmit power $p_k^{opt}$ is derived by setting $\frac{\partial f_q}{\partial p_k} = 0$, yielding $p_k^{opt} = \min ( p_{\max}, \widetilde{p}_{k} )$, where
    \begin{align} \label{Optimal_power}
    \begin{footnotesize}
    \begin{aligned}
        \widetilde{p}_{k} = \frac{
            w_k(1+{\eta}_k) E_k^{signal}({\boldsymbol{\theta}}) {\chi}_{k}^2
        }
        {
            \big[ 
                {\chi}_{k}^2 \big(
                    E_k^{signal}({\boldsymbol{\theta}})+
                    E_k^{leak}({\boldsymbol{\theta}})
                \big)
                +
                \sum_{\substack{i = 1 \\ i \neq k}}^{K} {\chi}_{i}^2 
                    I_{i,k}({\boldsymbol{\theta}})
            \big]^2
        }.
    \end{aligned}
    \end{footnotesize}
    \end{align}

\vspace{-10pt}
\subsection{Design of Phase Shifts $\boldsymbol{\theta}$}

    The optimization of the phase shifts is more challenging than the design of other variables due to the unit-modulus constraints and the high order involvement. Here two different algorithms for the design of $\boldsymbol{\theta}$ are proposed. Particularly, the first proposed algorithm, i.e., the ``Riemannian Gradient Ascent algorithm", adopts the traditional RGA algorithm by performing the line search on manifolds \cite{Alhujaili}, while the second algorithm, i.e., the ``Two-Tier MM-based algorithm", leverages the MM framework to find a closed-form optimal solution for $\boldsymbol{\theta}$.

    \vspace{-5pt}
    \subsubsection{Riemannian Gradient Ascent-Based Algorithm}
    Since only the phases of the elements performing \textit{reflection mode} need to be optimized, the search space of the phase shifts is the product of $N-a$ complex circles. Denoting such search space as $\mathcal{S}^{N-a} = { \boldsymbol{\theta}_{\widetilde{\mathcal{A}}} \in \mathbb{C}^{N-a}: |\boldsymbol{\theta}_{\widetilde{\mathcal{A}}}(i)| = 1, \forall i \in \widetilde{\mathcal{A}} }$, then $\mathcal{S}^{N-a}$ is known as the \textit{complex circle manifold} (CCM). As such, the optimization w.r.t. $\boldsymbol{\theta}_{\widetilde{\mathcal{A}}}$ can be solved by the CCM algorithm which has been widely applied for radar beampattern design \cite{Alhujaili} as well as phase shift design for RIS-aided system \cite{Cunhua_Pan_Overview}. Conceptually, the CCM algorithm can be treated as a class of line-search algorithms for unconstrained nonlinear optimization in $\mathbb{C}^{N-a}$ with additional steps to enforce the obtained ascent result still on the manifold. Specifically, setting $\nabla^{t} f_q$ as the Euclidean gradient of $\boldsymbol{\theta}_{\widetilde{\mathcal{A}}}^{t}$ w.r.t. the objective function, the Riemannian gradient, i.e., $\nabla_{R}^{t} f_q$, is then obtained by projecting the Euclidean gradient $\nabla^{t} f_q$ onto the tangent space of the CCM at the current point $\boldsymbol{\theta}_{\widetilde{\mathcal{A}}}^{t}$, resulting in $\nabla_{R}^{t} f_q = \nabla^{t} f_q - {\rm{Re}}\{ {\nabla^{t}} f_q \odot {\boldsymbol{\theta}_{\widetilde{\mathcal{A}}}^{t}}^{*} \} \odot \boldsymbol{\theta}_{\widetilde{\mathcal{A}}}^{t}$, where $\boldsymbol{\theta}$ is derived as in the following lemma. 
    \begin{lemma}
        The Euclidean gradient w.r.t. $\boldsymbol{\theta}_{\widetilde{\mathcal{A}}}$ is derived as \ref{RGA_gradient}, where $\widetilde{\mathbf{J}} = [(\mathbf{C}^{*} \otimes \mathbf{C})^{H}\mathbf{J}(\mathbf{C}^{*} \otimes \mathbf{C})]_{\widetilde{\mathcal{A}}_{{\rm{kron}}}}$, $\widetilde{\mathbf{K}}=[\mathbf{C}^{H}\mathbf{K}\mathbf{C}]_{\widetilde{\mathcal{A}}}$, $\mathbf{C} \triangleq \mathbf{I}-\mathbf{A}^{H}\mathbf{A} \in \mathbb{C}^{N \times N}$, and  $[\cdot]_{\widetilde{\mathcal{A}}_{{\rm{kron}}}}$ is the operator which returns the sub-matrix by taking the rows and columns of the input matrix specified by the index set corresponding to the non-zero elements of ${\rm{vec}}(\mathbf{C}^{*} \otimes \mathbf{C})$. Similarly, $[\cdot]_{\widetilde{\mathcal{A}}}$ is the operator which returns the sub-matrix by taking the rows and columns of the input matrix corresponding to the non-zero elements of ${\rm{vec}}(\mathbf{C})$. The definitions of $\mathbf{J}$ and $\mathbf{K}$ are given as in \ref{J_matrix} and \ref{K_matrix} at the top of this page, respectively.
        
    \end{lemma}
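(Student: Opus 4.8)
The plan is to turn the maximization of $f_q$ over $\boldsymbol{\theta}$ into the maximization of a real \emph{quartic-plus-quadratic} form in the reduced variable $\boldsymbol{\theta}_{\widetilde{\mathcal{A}}}$ and then to differentiate that compact form by Wirtinger calculus. The key starting observation is that $f_q$ depends on $\boldsymbol{\theta}$ only through $E_k^{signal}(\boldsymbol{\theta})$, $E_k^{leak}(\boldsymbol{\theta})$, $I_{k,i}(\boldsymbol{\theta})$ and $E_k^{noise}(\boldsymbol{\theta})$ of \textbf{Theorem 2}, and these in turn depend on $\boldsymbol{\theta}$ only through $f_k(\mathbf{A},\mathbf{\Theta})=\mathbf{a}_N^{H}\mathbf{C}\mathbf{\Theta}\overline{\mathbf{h}}_k$. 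Since $\mathbf{\Theta}={\rm{diag}}(\boldsymbol{\theta})$ and $\mathbf{C}=\mathbf{I}-\mathbf{A}^{H}\mathbf{A}$ commutes with every diagonal matrix, I would first write $f_k=\mathbf{v}_k^{H}\mathbf{C}\boldsymbol{\theta}$ with $\mathbf{v}_k\triangleq\overline{\mathbf{h}}_k^{*}\odot\mathbf{a}_N$, so that $|f_k(\mathbf{A},\mathbf{\Theta})|^{2}=\boldsymbol{\theta}^{H}\mathbf{C}^{H}\mathbf{v}_k\mathbf{v}_k^{H}\mathbf{C}\boldsymbol{\theta}$ and $f_k^{H}f_i=\boldsymbol{\theta}^{H}\mathbf{C}^{H}\mathbf{v}_k\mathbf{v}_i^{H}\mathbf{C}\boldsymbol{\theta}$ are (Hermitian, resp.\ general) quadratic forms in $\boldsymbol{\theta}$, while $|f_k|^{2}|f_i|^{2}$ is a quartic form.

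Next I would classify every term of (\ref{Theorem2Signal})--(\ref{Theorem2Interference}) by its order in $\boldsymbol{\theta}$. Because $E_k^{signal}$ in (\ref{Theorem2Signal}) is the square of an expression that is \emph{affine} in $|f_k|^{2}$, the factor $\sqrt{E_k^{signal}(\boldsymbol{\theta})}$ inside $f_q$ is affine in $|f_k|^{2}$ (hence at most quadratic in $\boldsymbol{\theta}$), whereas $E_k^{signal}(\boldsymbol{\theta})$ itself contributes one $|f_k|^{4}$ term; $E_k^{noise}$ and $E_k^{leak}$ in (\ref{Theorem2Noise})--(\ref{Theorem2Leak}) are affine in $|f_k|^{2}$ and hence purely quadratic; and $I_{k,i}$ in (\ref{Theorem2Interference}) contributes exactly one quartic term $|f_k|^{2}|f_i|^{2}$, a handful of quadratic terms ($|f_k|^{2}$, $|f_i|^{2}$, and the terms of the form ${\rm{Re}}(c\, f_k^{H}f_i)$, each of which becomes a Hermitian quadratic form after averaging with its conjugate), plus $\boldsymbol{\theta}$-independent constants. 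Collecting everything, the $\boldsymbol{\theta}$-dependent part of $f_q$ becomes $\boldsymbol{\theta}^{H}\mathbf{C}^{H}\mathbf{K}\mathbf{C}\boldsymbol{\theta}+{\rm{vec}}(\boldsymbol{\theta}\boldsymbol{\theta}^{H})^{H}(\mathbf{C}^{*}\otimes\mathbf{C})^{H}\mathbf{J}(\mathbf{C}^{*}\otimes\mathbf{C}){\rm{vec}}(\boldsymbol{\theta}\boldsymbol{\theta}^{H})$ up to a constant, where $\mathbf{K}$ gathers the quadratic coefficients (each quadratic term recast as a $\boldsymbol{\theta}^{H}\mathbf{C}^{H}\mathbf{v}_a\mathbf{v}_b^{H}\mathbf{C}\boldsymbol{\theta}$ form together with the scalar factors carrying the $w_k,\eta_k,p_k,\chi_k$ and the path-loss/Rician constants of \textbf{Theorem 2}), and $\mathbf{J}$ gathers the quartic coefficients, using ${\rm{vec}}(\boldsymbol{\theta}\boldsymbol{\theta}^{H})=\boldsymbol{\theta}^{*}\otimes\boldsymbol{\theta}$, ${\rm{vec}}(\mathbf{C}^{H}\mathbf{v}_k\mathbf{v}_k^{H}\mathbf{C})=(\mathbf{C}^{*}\otimes\mathbf{C})(\mathbf{v}_k^{*}\otimes\mathbf{v}_k)$ (recall $\mathbf{C}$, and hence $\mathbf{C}^{*}\otimes\mathbf{C}$, is a real diagonal projection, so self-adjoint) and the scalar identity $(\boldsymbol{\theta}^{H}\mathbf{M}_1\boldsymbol{\theta})(\boldsymbol{\theta}^{H}\mathbf{M}_2\boldsymbol{\theta})={\rm{vec}}(\boldsymbol{\theta}\boldsymbol{\theta}^{H})^{H}\,{\rm{vec}}(\mathbf{M}_1)\,{\rm{vec}}(\mathbf{M}_2)^{H}\,{\rm{vec}}(\boldsymbol{\theta}\boldsymbol{\theta}^{H})$ valid for Hermitian $\mathbf{M}_1$. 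Since $\mathbf{J}$ and $\mathbf{K}$ may be replaced by their Hermitian parts without changing the value, this produces the explicit expressions (\ref{J_matrix}) and (\ref{K_matrix}). Restricting to the reflection-mode coordinates — i.e.\ deleting the rows/columns that $\mathbf{C}$ (resp.\ $\mathbf{C}^{*}\otimes\mathbf{C}$) annihilates, which is exactly what $[\cdot]_{\widetilde{\mathcal{A}}}$ and $[\cdot]_{\widetilde{\mathcal{A}}_{{\rm{kron}}}}$ do — and noting that the surviving part of ${\rm{vec}}(\boldsymbol{\theta}\boldsymbol{\theta}^{H})$ equals $\mathbf{z}\triangleq{\rm{vec}}(\boldsymbol{\theta}_{\widetilde{\mathcal{A}}}\boldsymbol{\theta}_{\widetilde{\mathcal{A}}}^{H})=\boldsymbol{\theta}_{\widetilde{\mathcal{A}}}^{*}\otimes\boldsymbol{\theta}_{\widetilde{\mathcal{A}}}$, the objective reduces to $\boldsymbol{\theta}_{\widetilde{\mathcal{A}}}^{H}\widetilde{\mathbf{K}}\boldsymbol{\theta}_{\widetilde{\mathcal{A}}}+\mathbf{z}^{H}\widetilde{\mathbf{J}}\mathbf{z}$ plus a constant, with $\widetilde{\mathbf{K}}$ and $\widetilde{\mathbf{J}}$ exactly as in the lemma.

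I would then take the Euclidean gradient of this reduced form, using the convention that the Euclidean gradient of a real function is its conjugate cogradient $\partial(\cdot)/\partial\boldsymbol{\theta}_{\widetilde{\mathcal{A}}}^{*}$ (up to the usual scalar factor, which is what keeps the subsequent projection step consistent). The quadratic part contributes $\widetilde{\mathbf{K}}\boldsymbol{\theta}_{\widetilde{\mathcal{A}}}$ because $\widetilde{\mathbf{K}}$ is Hermitian. For the quartic part $\mathbf{z}^{H}\widetilde{\mathbf{J}}\mathbf{z}$, I would treat $\boldsymbol{\theta}_{\widetilde{\mathcal{A}}}$ and $\boldsymbol{\theta}_{\widetilde{\mathcal{A}}}^{*}$ as independent, use $\partial\mathbf{z}/\partial\boldsymbol{\theta}_{\widetilde{\mathcal{A}},n}^{*}=\mathbf{e}_n\otimes\boldsymbol{\theta}_{\widetilde{\mathcal{A}}}$ together with the Kronecker/${\rm{vec}}$ identities, and assemble the per-entry derivatives into ${\rm{vec}}^{-1}(\widetilde{\mathbf{J}}\mathbf{z})\,\boldsymbol{\theta}_{\widetilde{\mathcal{A}}}+{\rm{vec}}^{-1}(\widetilde{\mathbf{J}}^{H}\mathbf{z})^{H}\boldsymbol{\theta}_{\widetilde{\mathcal{A}}}$, where ${\rm{vec}}^{-1}(\cdot)$ reshapes an $(N-a)^{2}$-vector back into an $(N-a)\times(N-a)$ matrix; once $\widetilde{\mathbf{J}}$ is Hermitized the two reshaped matrices are conjugate transposes of one another. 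Adding the quadratic contribution gives the Euclidean gradient $\nabla^{t}f_q$ of (\ref{RGA_gradient}), which is precisely the quantity fed into $\nabla_{R}^{t}f_q=\nabla^{t}f_q-{\rm{Re}}\{\nabla^{t}f_q\odot\boldsymbol{\theta}_{\widetilde{\mathcal{A}}}^{t*}\}\odot\boldsymbol{\theta}_{\widetilde{\mathcal{A}}}^{t}$.

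The hard part will be the quartic term. One has to verify, term by term across the long expressions (\ref{Theorem2Signal})--(\ref{Theorem2Interference}), that nothing of order higher than four survives and that every order-four piece is exactly of the $|f_k|^{2}|f_i|^{2}$ type — in particular that $\sqrt{E_k^{signal}}$ really does drop the order down to two — then vectorize the quartic forms consistently (keeping track of $\boldsymbol{\theta}^{*}\otimes\boldsymbol{\theta}$ versus $\boldsymbol{\theta}\otimes\boldsymbol{\theta}^{*}$ and symmetrizing $\mathbf{J}$ so the gradient formula stays clean), and finally carry out the index bookkeeping that passes from the $N$- and $N^{2}$-dimensional objects down to the reflection-mode submatrices via $\widetilde{\mathcal{A}}$ and $\widetilde{\mathcal{A}}_{{\rm{kron}}}$ without dropping or double-counting cross terms. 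By contrast, gathering the many scalar coefficients of \textbf{Theorem 2} into $\mathbf{J}$ and $\mathbf{K}$ is lengthy but mechanical, and differentiating a quadratic/quartic form is routine once the representation is in place.
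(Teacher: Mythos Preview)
Your approach is essentially the paper's: Appendix~D also first rewrites $f_q$ as a quartic-plus-quadratic form
\[
(\boldsymbol{\theta}_{\widetilde{\mathcal{A}}}^{*}\otimes\boldsymbol{\theta}_{\widetilde{\mathcal{A}}})^{H}\widetilde{\mathbf{J}}(\boldsymbol{\theta}_{\widetilde{\mathcal{A}}}^{*}\otimes\boldsymbol{\theta}_{\widetilde{\mathcal{A}}})+\boldsymbol{\theta}_{\widetilde{\mathcal{A}}}^{H}\widetilde{\mathbf{K}}\boldsymbol{\theta}_{\widetilde{\mathcal{A}}}+\text{const},
\]
and then differentiates via $\nabla f=2\,\partial f/\partial\boldsymbol{\theta}_{\widetilde{\mathcal{A}}}^{*}$, obtaining exactly your two Kronecker-structured pieces from the quartic term plus the $\widetilde{\mathbf{K}}\boldsymbol{\theta}_{\widetilde{\mathcal{A}}}$ piece. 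Your ${\rm vec}^{-1}$ way of writing the quartic gradient is just a repackaging of the paper's $(\boldsymbol{\theta}^{*}\otimes\boldsymbol{\theta})^{T}\widetilde{\mathbf{J}}^{T}(\boldsymbol{\theta}\otimes\mathbf{I})+(\boldsymbol{\theta}^{*}\otimes\boldsymbol{\theta})^{H}\widetilde{\mathbf{J}}(\mathbf{I}\otimes\boldsymbol{\theta})$.

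One point where your plan diverges from the paper's actual $\mathbf{J},\mathbf{K}$ and is worth flagging: your assertion that ``every order-four piece is exactly of the $|f_k|^{2}|f_i|^{2}$ type'' describes a \emph{minimal} extension of $f_q$ off the manifold, but that is \emph{not} the extension the paper uses. Inspecting (\ref{J_matrix}) you will see additional blocks built from $\mathbf{F}_{k,i}={\rm diag}(\overline{\mathbf{h}}_k^{H}){\rm diag}(\overline{\mathbf{h}}_i)$ and cross terms $\mathbf{F}_{k,i}^{*}\otimes\mathbf{D}_{k,i}$; correspondingly (\ref{K_matrix}) carries matching $m_{k,i}^{H}\mathbf{F}_{k,i}$ and $m_{k,i}^{H}\mathbf{D}_{k,i}$ pieces. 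These arise because the paper represents the $\boldsymbol{\theta}$-independent quantities $\overline{\mathbf{h}}_k^{H}\mathbf{C}\overline{\mathbf{h}}_i$ and $g_{k,i}(\mathbf{A})=m_{k,i}-\overline{\mathbf{h}}_k^{H}\mathbf{C}\overline{\mathbf{h}}_i$ appearing in (\ref{Theorem2Interference}) via the on-manifold identity $\overline{\mathbf{h}}_k^{H}\mathbf{C}\overline{\mathbf{h}}_i=\boldsymbol{\theta}^{H}\mathbf{C}^{H}\mathbf{F}_{k,i}\mathbf{C}\boldsymbol{\theta}$ (valid when $|\boldsymbol{\theta}(n)|=1$), which promotes several constant and quadratic pieces to quartic ones. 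Your minimal extension and the paper's inflated extension agree on $\mathcal{S}^{N-a}$ and therefore give the \emph{same Riemannian gradient} after the tangent-space projection, so your route is perfectly valid for the RGA algorithm; but it will not reproduce the specific matrices (\ref{J_matrix})--(\ref{K_matrix}) verbatim. If the goal is to derive the lemma exactly as stated, you should adopt the paper's representation of $g_{k,i}(\mathbf{A})$ and $\overline{\mathbf{h}}_k^{H}\mathbf{C}\overline{\mathbf{h}}_i$ rather than treating them as constants.
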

    \begin{proof}
        See Appendix D for proof and the definitions of various parameters.
    \end{proof}

    \newcounter{gradient}
        \begin{figure*}[!htb]
        \footnotesize
        % Ensure that we have normal text
        % \normalsize
        % Store the current equation number.
        \setcounter{gradient}{\value{equation}}
        \setcounter{equation}{22}
        \begin{align}
            \nabla f_q  
            &=
            2\bigg[ 
                (\boldsymbol{\theta}_{\widetilde{\mathcal{A}}}^{*}\otimes\boldsymbol{\theta}_{\widetilde{\mathcal{A}}})^{T}\widetilde{\mathbf{J}}^{T}(\mathbf{I}\otimes\mathbf{I})(\boldsymbol{\theta}_{\widetilde{\mathcal{A}}}\otimes\mathbf{I})
                +
                (\boldsymbol{\theta}_{\widetilde{\mathcal{A}}}^{*}\otimes\boldsymbol{\theta}_{\widetilde{\mathcal{A}}})^{H}\widetilde{\mathbf{J}}(\mathbf{I}\otimes\mathbf{I})(\mathbf{I}\otimes\boldsymbol{\theta}_{\widetilde{\mathcal{A}}})
                +
                \boldsymbol{\theta}_{\widetilde{\mathcal{A}}}^{T}\widetilde{\mathbf{K}}^{T}
        \bigg]^{T}
        \tag*{\normalsize (23)}
        \label{RGA_gradient}
        \end{align}
        % Restore the current equation number.
        \setcounter{equation}{\value{gradient}}
        % The IEEE uses as a separator
        \hrulefill
        \end{figure*}
        \addtocounter{equation}{1}

    \begin{figure*}[!htb]
      \footnotesize
        \setcounter{gradient}{\value{equation}}
        \setcounter{equation}{24}
        \begin{align}
            %J_matrix
            \mathbf{J}  
                &=
                     \bigg\{
                    \sum_{k=1}^{K} \sum_{i=1, i \neq k}^{K}
                        -\chi_k^2 p_k
                    \big[ 
                         s_{k,1}^2 (\mathbf{C}_k^T \otimes \mathbf{C}_k)
                    \big]
                        -\chi_k^2 p_i
                    \big[ 
                         u_{k,i,1} (\mathbf{C}_i^T \otimes \mathbf{C}_k)
                         +
                         u_{k,i,4} (\mathbf{F}_{k,i} \otimes \mathbf{F}_{k,i}^{H})
                    +
                        u_{k,i,5}  \big(
                                    (\mathbf{F}_{k,i}^{*} \otimes \mathbf{D}_{k,i})
                                    +
                                    (\mathbf{F}_{k,i}^{*} \otimes \mathbf{D}_{k,i})^{H}
                        \big) \notag
                \label{J_matrix}
                \\
                &+
                         u_{k,i,6} (\mathbf{F}_{k,i}^{H} \otimes \mathbf{F}_{k,i})
                    -
                         u_{k,i,7}  \big( 
                                        (\mathbf{F}_{k,i}^{*} \otimes \mathbf{D}_{k,i}) 
                                    +
                                        (\mathbf{F}_{k,i}^{*} \otimes \mathbf{D}_{k,i})^{H} 
                        \big)
                    -
                         u_{k,i,8} \big( 
                                    (\mathbf{F}_{k,i}^{*} \otimes \mathbf{F}_{k,i}) 
                                    +
                                    (\mathbf{F}_{k,i}^{*} \otimes \mathbf{F}_{k,i})^{H}
                                    \big)
                    \big]
                    \bigg\}   \tag*{\normalsize (24)}
                \\
            %K_matrix
            \mathbf{K}  
                &=
                    \sum_{k=1}^{K} \sum_{i=1, i \neq k}^{K}
                \bigg\{
                        2 \chi_k \sqrt{w_k(1+\eta_k)p_k} s_{k,1} \mathbf{C}_k
                    -
                        \chi_k^2 p_k 2 s_{k,1} s_{k,2} \mathbf{C}_k
                    -
                        \chi_k^2 p_k l_{k,1} \mathbf{C}_k
                    -
                        \chi_k^2 n_{k,1} \mathbf{C}_k
                -
                    \chi_k^2 p_i \bigg[ 
                        u_{k,i,2}\mathbf{C}_k
                        +
                        u_{k,i,3}\mathbf{C}_i
                        \notag
                \label{K_matrix}  
                \\
                &-
                        u_{k,i,6}  \big( m_{k,i}^{H} \mathbf{F}_{k,i})
                        -
                        u_{k,i,6}  \big( m_{k,i}^{H} \mathbf{F}_{k,i})^{H}
                +
                        u_{k,i,7} \big( m_{k,i}^{H} \mathbf{D}_{k,i} \big)
                        +
                        u_{k,i,7} \big( m_{k,i}^{H} \mathbf{D}_{k,i} \big)^{H}
                        +
                        u_{k,i,8} \big( m_{k,i}^{H} \mathbf{F}_{k,i} \big)
                        +
                        u_{k,i,8} \big( m_{k,i}^{H} \mathbf{F}_{k,i} \big)^{H}
                    \bigg]
                    \bigg\} \tag*{\normalsize (25)}
            \end{align}
            \setcounter{equation}{\value{gradient}}
            \hrulefill
    \end{figure*}    
    \addtocounter{equation}{2}
        
Armed with the above Euclidean gradient $\nabla f_q$, the update of  $\boldsymbol{\theta}_{\widetilde{\mathcal{A}}}^{t}$ along the Riemannian gradient is obtained as 
    \begin{align} \label{decent}
    \begin{footnotesize}
    \begin{aligned}
        \overline{\boldsymbol{\theta}}_{\widetilde{\mathcal{A}}}^{t+1} = \boldsymbol{\theta}_{\widetilde{\mathcal{A}}}^{t} + \rho_t \nabla_{R}^{t} f_q,
    \end{aligned}
    \end{footnotesize}
    \end{align}
where $\rho_t$ refers to the step size. To ensure the ascent is on the CCM, a \textit{retraction} step is required to obtain the ascend result as 
    \begin{align} \label{retraction}
    \begin{footnotesize}
    \begin{aligned}
        \overline{\boldsymbol{\theta}}_{\widetilde{\mathcal{A}}}^{t+1} = \overline{\boldsymbol{\theta}}_{\widetilde{\mathcal{A}}}^{t} / |\overline{\boldsymbol{\theta}}_{\widetilde{\mathcal{A}}}^{t}|.
    \end{aligned}
    \end{footnotesize}
    \end{align}
To preserve the monotonicity of the RGA algorithm, the backtracking line search is adopted to determine the step size $\rho_t$ \cite{Jintao_Wang}. The detail of this RGA algorithm is summarized in \textbf{Algorithm 1}.

    \vspace{-5pt}
    \begin{algorithm} \label{RGA_algorithm}
    \renewcommand{\algorithmicrequire}{\textbf{Input:}}
    \renewcommand{\algorithmicensure}{\textbf{Output:}}
	\caption{RGA-based Algorithm}
	\label{alg2}
	   \begin{algorithmic}[1]
	   \footnotesize
        		\REQUIRE The optimal $\boldsymbol{\theta}_n$, $\mathbf{p}_n$, $\boldsymbol{\eta}_n$ and $\boldsymbol{\chi}_n$ at the $n^{th}$ iteration of BCD; Maximum iteration $t_{\max}$.
                \ENSURE  $\boldsymbol{\theta}_{n+1}$
                \STATE Initialization: $t = 0$, $c \in (0,1)$, $\boldsymbol{\theta}^t = \boldsymbol{\theta}_n$.
        		\REPEAT
        		\STATE Update $\boldsymbol{\theta}^{t+1}$ by (\ref{retraction});
                \WHILE {$f_q(\boldsymbol{\theta}^{t+1},\mathbf{p},\boldsymbol{\eta},\boldsymbol{\chi})<f_q(\boldsymbol{\theta}^{t},\mathbf{p},\boldsymbol{\eta},\boldsymbol{\chi})$}
                    \STATE $\rho_{t} = c\rho_{t}$;
                    \STATE Update $\boldsymbol{\theta}^{t+1}$ by (\ref{decent}).
                \ENDWHILE
                \STATE $t = t+1$;
        		\UNTIL  $f_q(\boldsymbol{\theta},\mathbf{p},\boldsymbol{\eta},\boldsymbol{\chi})$ in \ref{f_q} converges or $t > t_{\max}$;
                \RETURN $\boldsymbol{\theta}_{n+1} = \boldsymbol{\theta}^{t}$.
    	\end{algorithmic}
    \end{algorithm}

    % \vspace{-5pt}
    \subsubsection{Two-Tier MM-based Algorithm}

    Usually, closed-form solution is more attractive for practical applications. To achieve that, we further analyze the inherent structure of the objective function and resort to the MM framework for the optimal design. By utilizing the property ${\rm{Tr}}(\mathbf{X}\mathbf{Y}\mathbf{Z}\mathbf{W}) = {\rm{vec}}^{H}(\mathbf{X}^{H})(\mathbf{W}^{T}\otimes\mathbf{Y}){\rm{vec}}(\mathbf{Z})$, the objective function in \ref{f_q} can be simplified as
    \begin{equation}
        \begin{footnotesize}
        \begin{aligned}\label{f_q_1_2}
        f_q(\boldsymbol{\theta},\mathbf{p},\boldsymbol{\eta},\boldsymbol{\chi}) 
        &=
            \mathbf{v}^{H}
                \mathbf{J}_1
            \mathbf{v}
            +
                {\rm{vec}}^{H}(\boldsymbol{\Theta})
                \widetilde{\mathbf{K}}_1
                {\rm{vec}}(\boldsymbol{\Theta})
        \\
        &+
                2{\rm{Re}} \big\{ 
                    {\rm{vec}}^{H}(\mathbf{\Theta})
                    \widetilde{\mathbf{K}}_2
                    {\rm{vec}}(\mathbf{\Theta})
                \big\}
            +
                {\rm{const}}1,
        \end{aligned}
        \end{footnotesize}
    \end{equation}
    \normalsize
    
    \noindent where $\mathbf{v} \triangleq (\mathbf{\Theta}\otimes\mathbf{\Theta}^{H}) {\rm{vec}}(\mathbf{S}) \in \mathbb{C}^{N^2 \times 1}$, $\mathbf{S} \triangleq \mathbf{C}^{H}\mathbf{a}_N\mathbf{a}_N^{H}\mathbf{C} \in \mathbb{C}^{N \times N}$,
    $\widetilde{\mathbf{K}}_1 = (\mathbf{I}\otimes\mathbf{C})^{H} \mathbf{K}_1 (\mathbf{I}\otimes\mathbf{C}) \in \mathbb{C}^{N^2 \times N^2}$, $\widetilde{\mathbf{K}}_2 = (\mathbf{I}\otimes\mathbf{C})^{H} \mathbf{K}_2 (\mathbf{I}\otimes\mathbf{C}) \in \mathbb{C}^{N^2 \times N^2}$, and ${\rm{const}}1$ is a constant irrelevant to $\boldsymbol{\theta}$. The definitions of $\mathbf{J}_1 \in \mathbb{C}^{N^2 \times N^2}$, $\mathbf{K}_1 \in \mathbb{C}^{N^2 \times N^2}$, and $\mathbf{K}_2 \in \mathbb{C}^{N^2 \times N^2}$ are given in \ref{J_matrix_1}, \ref{K_matrix_1}, \ref{K_matrix_2} at the top of next page, respectively. The constants such as $s_{i}$ and $l_{k,i}$ can be derived by direct inspection of \textbf{Theorem 2} and thus omitted here for brevity.

        \newcounter{JK1K2}
        \begin{figure*}[!htb]
        % ensure that we have normalsize text
        % \normalsize
        % Store the current equation number.
        \setcounter{JK1K2}{\value{equation}}
        \setcounter{equation}{28}
        %   J_1
        \footnotesize
        \begin{align} 
        \mathbf{J}_1 
        &=
            \bigg\{
                \sum_{k=1}^{K} \sum_{i=1, i \neq k}^{K}
                    -\chi_k^2 p_k
                \big[ 
                    s_{k,1}^2 \big( (\overline{\mathbf{h}}_k \overline{\mathbf{h}}_k^{H})^{T} \otimes      (\overline{\mathbf{h}}_k\overline{\mathbf{h}}_k^{H}) \big)
                \big]
                    -\chi_k^2 p_i
                \big[
                    u_{k,i,1} \big( (\overline{\mathbf{h}}_k\overline{\mathbf{h}}_k^{H})^{T} \otimes (\overline{\mathbf{h}}_i\overline{\mathbf{h}}_i^{H}) \big)
                \big]
                \bigg\}
        \tag*{\normalsize (29)}
        \label{J_matrix_1}
            \\
        %   K_1
        \mathbf{K}_1 
            &=
                \sum_{k=1}^{K} \sum_{i=1, i \neq k}^{K}
            \bigg\{
                    \big[ 
                         2 \chi_k \sqrt{w_k(1+\eta_k)p_k} s_{k,1}
                         -
                         \chi_k^2(p_kl_{k,1} + n_{k,1} + p_iu_{k,i,2})
                    \big]
                    \big[   
                            (\overline{\mathbf{h}}_k\overline{\mathbf{h}}_k^{H})^{T} \otimes                
                            (\mathbf{a}_N\mathbf{a}_N^{H})
                    \big]
            -
                    \chi_k^2p_i u_{k,i,3}
                    \big[   
                            (\overline{\mathbf{h}}_i\overline{\mathbf{h}}_i^{H})^{T} \otimes                
                            (\mathbf{a}_N\mathbf{a}_N^{H})
                    \big]
            \bigg\}  
        \tag*{\normalsize (30)}
        \label{K_matrix_1}
        \\
        %   K_2
        \mathbf{K}_2
            &= 
                \sum_{k=1}^{K} \sum_{i=1, i \neq k}^{K}
            \bigg\{
                   -\chi_k^2 p_k
                    \big[   
                            s_{k,1} s_{k,2}
                            (\overline{\mathbf{h}}_k\overline{\mathbf{h}}_k^{H})^{T}
                                \otimes
                            (\mathbf{a}_N\mathbf{a}_N^{H})
                    \big]
            -\chi_k^2 p_i
                    \big[
                            u_{k,i,5}(\overline{\mathbf{h}}_i\overline{\mathbf{h}}_{i}^{H}\mathbf{C}\overline{\mathbf{h}}_k\overline{\mathbf{h}}_k^{H})^{T}
                            \otimes
                            (\mathbf{a}_N\mathbf{a}_N^{H})
        % \notag
        % \\               
        %     &\ \ \  +
                           + u_{k,i,7}
                            (\overline{\mathbf{h}}_i\overline{\mathbf{h}}_i^{H}(\mathbf{I}-\mathbf{C}^{H})\overline{\mathbf{h}}_k\overline{\mathbf{h}}_k^{H})^{T}
                            \otimes
                            (\mathbf{a}_N\mathbf{a}_N^{H})
                    \big]
                \bigg\}
        \tag*{\normalsize (31)}
        \label{K_matrix_2} 
        \end{align}
        % Restore the current equation number.
        \setcounter{equation}{\value{JK1K2}}
        % The IEEE uses as a separator
        \hrulefill
        \end{figure*}
        \addtocounter{equation}{3}

    % To facilitate the algorithm development, we use the following transformation from the implicit function to the explicit one w.r.t. $\boldsymbol{\theta}$.
    % As a result, the global optimal solution is not tractable. Therefore we proposed a sub-optimal algorithm leveraging on MM method. To proceed, we first explore the inherent structure to simplify the objective function.   

The objective function in (\ref{f_q_1_2}) is still complicated with not only the second-order term, i.e., $\mathbf{K}_1$ and $\mathbf{K}_2$, but also the \textit{``fourth-order"} term, i.e., $\mathbf{J}_1$. To proceed, we further explore the inherent structure of \eqref{f_q_1_2} to simplify the objective function. Specifically, the objective function can be first simplified by exploring the sparse structure of ${\rm{vec}}(\mathbf{\Theta})$. Let $\mathbf{K}_3 \in \mathbb{C}^{N \times N}$ and $\mathbf{K}_4 \in \mathbb{C}^{N \times N}$ be the matrices made up of all the elements of the $[(i-1)N + i]^{th}, \forall i \in [N]$ columns and $[(j-1)N + j]^{th}, \forall j \in [N]$ rows of $\mathbf{K}_1$ and $\mathbf{K}_2$ respectively. The selected rows and columns in fact correspond to the sparse non-zero values of ${\rm{vec}}(\mathbf{\Theta})$. Define $\hat{\mathbf{J}}_1 = {\rm{diag}}(\mathbf{s}^{H})\mathbf{J}_1{\rm{diag}}(\mathbf{s})  \in \mathbb{C}^{N^2 \times N^2}$, where $\mathbf{s} = [\mathbf{s}_1^{T}, \cdots, \mathbf{s}_N^{T}]^{T} \in \mathbb{C}^{N^2 \times 1}$ and $\mathbf{s}_i \in \mathbb{C}^{N \times 1}$ is the $i^{th}$ column of $\mathbf{S}$. We also define $\mathbf{J}_2 \in \mathbb{C}^{(N+N^2) \times (N+N^2)}$ as 
    \begin{align}
    \begin{footnotesize}
    \begin{aligned}
        \mathbf{J}_2
            =
            \begin{bmatrix}
                -\mathbf{K}_3 &\mathbf{0} \\
                \mathbf{0}    & -\hat{\mathbf{J}}_1
            \end{bmatrix}.
    \end{aligned}
    \end{footnotesize}
    \end{align}
Armed with the above auxiliary variables, the optimization of $\boldsymbol{\theta}$ can be equivalently transformed as
    \begin{align}
    \begin{footnotesize}
            \begin{aligned} 
            \mathcal{P}_{1}^{sub}: \ \min\limits_{\boldsymbol{\theta}}& \quad  
            g_0(\boldsymbol{\theta}) 
            \triangleq
            \widetilde{\boldsymbol{\theta}}^{H} \mathbf{J}_2 \widetilde{\boldsymbol{\theta}}
            -
            2{\rm{Re}}\big\{\boldsymbol{\theta}^{H}\mathbf{K}_4\boldsymbol{\theta}\big\}
            +{\rm{const}}2, 
            \notag
             \\ 
                s.t. \ 
                &{\rm{CR1}}: \ |\boldsymbol{\theta}(i)| = 1 , \forall i \in \widetilde{\mathcal{A}},
            \end{aligned}
    \end{footnotesize}
    \end{align}
where $g_0(\boldsymbol{\theta}) = -f_q(\boldsymbol{\theta},\mathbf{p},\boldsymbol{\eta},\boldsymbol{\chi})$ and $\widetilde{\boldsymbol{\theta}} = [\boldsymbol{\theta}^{T}, \boldsymbol{\theta}^{T} \otimes \boldsymbol{\theta}^{H}]^{T} \in \mathbb{C}^{(N+N^{2})\times 1}$. Notice that $g_0(\boldsymbol{\theta})$ is still non-convex w.r.t. $\boldsymbol{\theta}$, and the coupling between $\widetilde{\boldsymbol{\theta}}$ and $\boldsymbol{\theta}$ hinders the optimal solution. To tackle such a challenging problem, a two-tier MM technique is applied which admits a closed-form solution in each iteration.

Notably, $\mathbf{J}_2$ is a Hermitian matrix, i.e., $\mathbf{J}_2^{H} = \mathbf{J}_2$. As such, a surrogate function which is a locally tight upper bound of  $\widetilde{\boldsymbol{\theta}}^{H}\mathbf{J}_2\widetilde{\boldsymbol{\theta}}$ at given $\boldsymbol{\theta}_t$ (and equivalently given $\widetilde{\boldsymbol{\theta}}_t$) is derived as \cite[(Lemma 12)]{Ying_Sun}
    \begin{align} \label{MM_surrogate_1}
    \begin{footnotesize}
    \begin{aligned}
        \widetilde{\boldsymbol{\theta}}^{H}\mathbf{J}_2\widetilde{\boldsymbol{\theta}} 
        \leq
        \widetilde{\boldsymbol{\theta}}^{H}\mathbf{\Lambda}_1\widetilde{\boldsymbol{\theta}}
        +
        2 {\rm{Re}} \big\{ 
            \widetilde{\boldsymbol{\theta}}^{H}(\mathbf{J}_2 - \mathbf{\Lambda}_1) \widetilde{\boldsymbol{\theta}}_t
        \big\}
        -
        \widetilde{\boldsymbol{\theta}}_{t}^{H}
        (\mathbf{J}_2 - \mathbf{\Lambda}_1)
        \widetilde{\boldsymbol{\theta}}_t,
    \end{aligned}
    \end{footnotesize}
    \end{align}
where $\mathbf{\Lambda}_1 = \lambda_{max}(\mathbf{J}_2)\mathbf{I}$.
It is readily inferred that the first and the third terms in (\ref{MM_surrogate_1}) are constant given $\widetilde{\boldsymbol{\theta}}_t$. Then, a surrogate function for the objective function in $\mathcal{P}_{1}^{sub}$ at $\boldsymbol{\theta}_t$ is given by 
\begin{align}
\begin{footnotesize}
\begin{aligned}
    g_0(\boldsymbol{\theta}) \leq
    g_1(\boldsymbol{\theta};\boldsymbol{\theta}_{t})
            \triangleq
            2 {\rm{Re}} \big\{  
                \boldsymbol{\theta}^{H}\widetilde{\mathbf{u}}_t
            +
                \boldsymbol{\theta}^{T}\widetilde{\mathbf{V}}_t\boldsymbol{\theta}^{*}    
            \big\}+{\rm{const}}3,
\end{aligned}
\end{footnotesize}
\end{align}
where $\widetilde{\mathbf{u}}_t = [-\mathbf{K}_3 - \lambda_{\max}(\mathbf{J}_2)\mathbf{I}]
\boldsymbol{\theta}_t$, $\widetilde{\mathbf{V}}_t = \hat{\mathbf{V}}_t - \mathbf{K}_4^{T}$, ${\rm{vec}}(\hat{\mathbf{V}}_t) = \mathbf{v}_t$, and we have $\mathbf{v}_t = [-\hat{\mathbf{J}}_1 - \lambda_{\max}(\mathbf{J}_2)\mathbf{I}](\boldsymbol{\theta}_t \otimes \boldsymbol{\theta}_t^{*})$. 

The above process constructs an effective surrogate function
for $g_0(\boldsymbol{\theta})$ and completes the transformation from the implicit function to the explicit one w.r.t. $\boldsymbol{\theta}$. Nevertheless, it is clear that $g_1(\boldsymbol{\theta};\boldsymbol{\theta}_{t})$ is a quadratic function where a closed-form solution is still not available. As such, the MM technique is applied again to further construct a tractable surrogate function for $g_1(\boldsymbol{\theta};\boldsymbol{\theta}_t)$. Specifically, let $\overline{\mathbf{V}}_t \triangleq \widetilde{\mathbf{V}}_t^{*} + \widetilde{\mathbf{V}}_t^{T}$, and we have $2 {\rm{Re}} \big\{\boldsymbol{\theta}^{T}\widetilde{\mathbf{V}}_t\boldsymbol{\theta}^{*}\big\} = \boldsymbol{\theta}^{H}\overline{\mathbf{V}}_t\boldsymbol{\theta}$. Accordingly, a surrogate function for $\boldsymbol{\theta}^{H}\widetilde{\mathbf{V}}_t\boldsymbol{\theta} $ w.r.t. $\boldsymbol{\theta}$ can be derived as $\boldsymbol{\theta}^{H}\overline{\mathbf{V}}_t\boldsymbol{\theta}
        \leq
        \boldsymbol{\theta}^{H} \mathbf{\Lambda}_2 \boldsymbol{\theta}
        +
            2 {\rm{Re}} \big\{ 
                \boldsymbol{\theta}^{H}
                (\overline{\mathbf{V}}_t - \mathbf{\Lambda}_2)
                \boldsymbol{\theta}_t
            \big\}
        -
            \boldsymbol{\theta}_t^{H}
                (\overline{\mathbf{V}}_t - \mathbf{\Lambda}_2)
            \boldsymbol{\theta}_t$ 
    % \begin{small}
    % \begin{align} \label{MM_surrogate_2}
    %     \mathbf{\theta}^{H}\overline{\mathbf{V}}_t\mathbf{\theta}
    %     \leq
    %     \mathbf{\theta}^{H} \mathbf{\Lambda}_2 \mathbf{\theta}
    %     +
    %         2 {\rm{Re}} \big\{ 
    %             \mathbf{\theta}^{H}
    %             (\overline{\mathbf{V}}_t - \mathbf{\Lambda}_2)
    %             \mathbf{\theta}_t
    %         \big\}
    %     -
    %         \mathbf{\theta}_t^{H}
    %             (\overline{\mathbf{V}}_t - \mathbf{\Lambda}_2)
    %         \mathbf{\theta}_t
    % \end{align}
    % \end{small}
where $\mathbf{\Lambda}_2 = \lambda_{max}(\overline{\mathbf{V}}_t)\mathbf{I}$. 
After these two majorization operations, an alternative upper bound optimization problem w.r.t. phase shift $\boldsymbol{\theta}$ is formulated as
    \begin{align}
    \begin{footnotesize}
            \begin{aligned} 
            \mathcal{P}_{1,MM}^{sub}: \ \min\limits_{\boldsymbol{\theta}}& \quad
            g_2(\boldsymbol{\theta};\boldsymbol{\theta}_{t})
            \notag
             \\ 
                s.t. \ 
                &{\rm{CR1}}: \ |\boldsymbol{\theta}(i)| = 1, 
            \end{aligned}
    \end{footnotesize}
    \end{align}
where $g_2(\boldsymbol{\theta};\boldsymbol{\theta}_{t})
            \triangleq
            2 {\rm{Re}} \big\{  
                \boldsymbol{\theta}^{H}\mathbf{f}_t   
            \big\}+ {\rm{const}}4$ and
$\mathbf{f}_t = (\overline{\mathbf{V}}_t - \lambda_{\max}(\mathbf{J}_2))\boldsymbol{\theta}_t + \widetilde{\mathbf{u}}_t$. Finally, the optimal solution for $\mathcal{P}_{1,MM}^{sub}$ admits the closed-form solution as
    \begin{align} \label{optimal_theta_MM}
    \begin{footnotesize}
    \begin{aligned}
        \boldsymbol{\theta}^{opt} =  e^{-j {\rm{arg}}(\mathbf{f}_t) }.
    \end{aligned}
    \end{footnotesize}
    \end{align}
The two-tier MM-based algorithm is summarized in \textbf{Algorithm 2}.

    \vspace{-5pt}  
    \begin{algorithm} \label{MM_algorithm}

    \renewcommand{\algorithmicrequire}{\textbf{Input:}}
    \renewcommand{\algorithmicensure}{\textbf{Output:}}
	\caption{Two-Tier MM-based Algorithm}
	\label{alg1}
	   \begin{algorithmic}[1]
	   \footnotesize
        		\REQUIRE The optimal $\boldsymbol{\theta}_n$, $\mathbf{p}_n$, $\boldsymbol{\eta}_n$ and $\boldsymbol{\chi}_n$ at the $n^{th}$ iteration of BCD; Maximum iteration $t_{\max}$.
                \ENSURE  $\boldsymbol{\theta}_{n+1}$
                \STATE Initialization: $t = 0$, $\boldsymbol{\theta}^{t} = \boldsymbol{\theta}_n$.
        		\REPEAT
        		\STATE Update $\boldsymbol{\theta}^{t+1}$ by (\ref{optimal_theta_MM});
                \STATE $t = t+1$;
        		\UNTIL  $f_q(\boldsymbol{\theta},\mathbf{p},\boldsymbol{\eta},\boldsymbol{\chi})$ in \ref{f_q} converges or $t > t_{\max}$;
                \RETURN $\boldsymbol{\theta}_{n+1} = \boldsymbol{\theta}^{t}$.
    	\end{algorithmic}  	
    \end{algorithm}

\vspace{-10pt}   
\subsection{Proposed Joint Design Algorithm}

    \vspace{-5pt}
    \begin{algorithm} \label{BCD_algorithm}

    \renewcommand{\algorithmicrequire}{\textbf{Input:}}
    \renewcommand{\algorithmicensure}{\textbf{Output:}}
	\caption{TTS Joint Transceiver Design Algorithm}
	\label{alg3}
	   \begin{algorithmic}[1]
	   \footnotesize
        		\REQUIRE Maximum iteration $n_{\max}$.
                \ENSURE  $\mathbf{p}^{Opt}$, $\boldsymbol{\theta}^{Opt}$
                \STATE Initialize: $\mathbf{p}_{0}$, $\boldsymbol{\theta}_{0}$ to feasible values.
                \STATE Initialize: $\boldsymbol{\eta}_{0}$, $\boldsymbol{\chi}_{0}$ by (\ref{Optimal_eta}), and (\ref{Optimal_chi}); $n = 0$.
        		\REPEAT
        		\STATE Update $\boldsymbol{\eta}_{n}$, $\boldsymbol{\chi}_{n}$, $\mathbf{\mathbf{p}}_{n}$;
                \STATE Update $\boldsymbol{\theta}_{n}$ with \textbf{Algorithm 1} or \textbf{Algorithm 2};
                \STATE $n = n+1$;
        		\UNTIL  $f_q(\boldsymbol{\theta},\mathbf{p},\boldsymbol{\eta},\boldsymbol{\chi})$ in \ref{f_q} converges or $n > n_{\max}$;
                \RETURN $\mathbf{p}^{Opt} = \mathbf{p}_{n}$, $\boldsymbol{\theta}^{Opt} = \boldsymbol{\theta}_{n}$.
    	\end{algorithmic}  	
    \end{algorithm}

    Now the proposed joint design algorithm can be finally summarized in \textbf{Algorithm 3} with the block update rule as follows
    \begin{align} \label{BCD_rule}
    \begin{footnotesize}
    \begin{aligned}
        \cdots \boldsymbol{\eta} \rightarrow \boldsymbol{\chi} \rightarrow \mathbf{p} \rightarrow \boldsymbol{\chi} \rightarrow \boldsymbol{\theta} \rightarrow \boldsymbol{\eta} \cdots
    \end{aligned}
    \end{footnotesize}
    \end{align}

\vspace{-10pt}  
\subsection{Convergence and Complexity Analysis}
    The convergence of the proposed joint design algorithm is illustrated in \textbf{Proposition 3}.
    
    \begin{proposition}
        Suppose the solution sequence generated by the proposed \textbf{Algorithm 3} is $\{\boldsymbol{\theta}_t, \mathbf{p}_t, \boldsymbol{\eta}_t, \boldsymbol{\chi}_t \}_{t = 0,1,\cdots,\infty}$. Then, the objective value $f_q(\boldsymbol{\theta}, \mathbf{p}, \boldsymbol{\eta}, \boldsymbol{\chi})$ keeps monotonically non-decreasing and finally converges to a stationary solution.
    \end{proposition}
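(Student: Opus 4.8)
The plan is to establish the proposition in two stages: first that the sequence of objective values $\{f_q(\boldsymbol{\theta}_n,\mathbf{p}_n,\boldsymbol{\eta}_n,\boldsymbol{\chi}_n)\}$ is monotonically non-decreasing and bounded above, hence convergent; and second that every limit point of the iterate sequence is a stationary point of $\mathcal{P}_1$, which, by the exactness of the \textit{FP} reformulation, projects onto a stationary point of the original problem $\mathcal{P}_0$. For the monotonicity I would argue block by block along the cyclic schedule in \eqref{BCD_rule}. The updates \eqref{Optimal_eta} and \eqref{Optimal_chi} are the unique global maximizers of $f_q$ over the $\boldsymbol{\eta}$- and $\boldsymbol{\chi}$-blocks, with the remaining variables held fixed, because $f_q$ restricted to each $\eta_k$ is strictly concave (a sum of $w_k\log(1+\eta_k)$, a linear term and the concave term $2\chi_k\sqrt{w_k(1+\eta_k)p_kE_k^{signal}}$) and restricted to each $\chi_k$ is a concave quadratic; hence these two steps cannot decrease $f_q$. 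The power update $p_k^{opt}=\min(p_{\max},\widetilde p_k)$ in \eqref{Optimal_power} is the exact constrained maximizer of $f_q$ over the box ${\rm CR2}$, since $f_q$ restricted to $p_k$ has the form $A_k\sqrt{p_k}-B_kp_k+{\rm const}$ with $A_k,B_k\ge 0$, which is concave on $[0,\infty)$, so clipping the unconstrained optimizer onto $[0,p_{\max}]$ is exact and non-decreasing. For the phase-shift block, both proposed routines are ascent methods for $f_q$: in \textbf{Algorithm 1} the backtracking inner loop accepts $\boldsymbol{\theta}^{t+1}$ only when $f_q(\boldsymbol{\theta}^{t+1})\ge f_q(\boldsymbol{\theta}^{t})$, which it can always achieve because the retracted Riemannian-gradient step \eqref{decent}--\eqref{retraction} is an ascent step on the CCM whenever $\nabla_R^{t}f_q\neq\mathbf{0}$; in \textbf{Algorithm 2}, $g_1$ and $g_2$ are built via \eqref{MM_surrogate_1} and the subsequent majorization so that $-g_2(\cdot;\boldsymbol{\theta}_t)\le -g_1(\cdot;\boldsymbol{\theta}_t)\le f_q(\cdot)$ globally with equality at $\boldsymbol{\theta}_t$, whence the closed-form minimizer \eqref{optimal_theta_MM} yields $f_q(\boldsymbol{\theta}^{t+1})\ge -g_2(\boldsymbol{\theta}^{t+1};\boldsymbol{\theta}_t)\ge -g_2(\boldsymbol{\theta}_t;\boldsymbol{\theta}_t)=f_q(\boldsymbol{\theta}_t)$. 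Chaining these inequalities around one full outer cycle shows that $f_q$ does not decrease over the cycle.

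For the upper bound, observe that immediately after the $\boldsymbol{\eta},\boldsymbol{\chi}$ updates these auxiliary variables are optimal, and at such points $f_q$ equals the weighted sum rate $f_o=\sum_{k}w_kR_k$ by the defining property of the Lagrangian-dual and quadratic transforms; moreover $f_o$ is uniformly bounded, since each $R_k$ in \eqref{Equation_Imperfect_CSI} is $\log(1+{\rm SINR}_k)$ with ${\rm SINR}_k$ in \eqref{SINR_Theorem2} bounded: $\mathbf{p}$ lies in the compact box ${\rm CR2}$, the large-scale statistics, the array responses and the auxiliary constants of \textbf{Lemma 2} are all fixed and finite, and the denominator is bounded below by the noise power $E_k^{noise}(\boldsymbol{\theta})$, which is bounded away from zero uniformly in $(\boldsymbol{\theta},\mathbf{p})$. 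A non-decreasing sequence bounded above converges, which completes the first stage.

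For the second stage I would combine the standard convergence theory of block-coordinate maximization with the fixed-point properties of the inner solvers. The feasible set of $\mathcal{P}_1$ is the Cartesian product of the per-block sets --- the box for $\mathbf{p}$, the nonnegative orthants for $\boldsymbol{\eta}$ and $\boldsymbol{\chi}$, and the compact complex circle manifold $\mathcal{S}^{N-a}$ for $\boldsymbol{\theta}$ --- and $f_q$ is continuously differentiable on a neighbourhood of this compact set, so the bounded iterate sequence admits limit points. At any limit point $(\boldsymbol{\theta}^\star,\mathbf{p}^\star,\boldsymbol{\eta}^\star,\boldsymbol{\chi}^\star)$ the block-wise first-order optimality conditions hold: for the exactly maximized blocks $\boldsymbol{\eta},\boldsymbol{\chi},\mathbf{p}$ the corresponding KKT conditions pass to the limit by continuity, and for $\boldsymbol{\theta}$ a fixed point of the inner loop has vanishing Riemannian gradient of $f_q$ on $\mathcal{S}^{N-a}$, because the MM surrogate is first-order tangent to $g_0$ at $\boldsymbol{\theta}_t$ by construction while the RGA step reduces at convergence to the same projected-gradient stationarity condition. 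Assembling the per-block conditions over the product constraint set gives stationarity of $f_q$ for $\mathcal{P}_1$, and since $(\boldsymbol{\eta}^\star,\boldsymbol{\chi}^\star)$ are optimal inner variables, $f_q$ and $f_o$ coincide and share the same gradient in $(\boldsymbol{\theta},\mathbf{p})$ there, so the limit point projects onto a stationary point of $\mathcal{P}_0$.

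The step I expect to be the main obstacle is precisely this stationarity argument, for two reasons. First, one must justify that a single MM (or RGA) sweep per outer iteration suffices, or equivalently state the claim with the inner loops run to their convergence criteria --- which is what \textbf{Algorithm 1} and \textbf{Algorithm 2} actually implement --- and then verify that the resulting $\boldsymbol{\theta}$-subiterates indeed inherit the required fixed-point/stationarity property on the non-convex manifold $\mathcal{S}^{N-a}$. Second, one must dispose of the usual block-coordinate caveat concerning non-uniqueness of block maximizers; this is benign here because the $\boldsymbol{\eta}$, $\boldsymbol{\chi}$ and $\mathbf{p}$ updates are the \textit{unique} maximizers of their respective blocks, which is exactly the regularity needed to upgrade coordinate-wise stationarity to genuine stationarity of $\mathcal{P}_1$.
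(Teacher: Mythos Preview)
Your proposal is correct and follows essentially the same route as the paper: block-wise monotone ascent of $f_q$ (exact maximizers for $\boldsymbol{\eta},\boldsymbol{\chi},\mathbf{p}$ and MM/RGA ascent for $\boldsymbol{\theta}$), boundedness, and then an appeal to standard BCD convergence theory for stationarity. In fact you supply more justification than the paper's own appendix, which establishes the same chain of inequalities for the MM surrogate, invokes the power constraint for boundedness, and simply cites a BCD convergence reference for the stationarity conclusion.
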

    \begin{proof}
        See Appendix E. 
    \end{proof}

    In the sequel, we analyze the complexity of the proposed \textbf{Algorithm 3}. 
    To start with, it is clear that the values of \ref{Theorem2Signal}, \ref{Theorem2Noise}, \ref{Theorem2Leak}, \ref{Theorem2Interference} are fixed for given $\mathbf{A}$ and $\boldsymbol{\theta}$. They only need to be calculated once in each BCD iteration with a computation complexity of $\mathcal{O}(K^2 N^2)$. As a result, the complexity of updating dual variables, i.e., $\boldsymbol{\eta}$ and $\boldsymbol{\chi}$, as well as the user transmit powers $\mathbf{p}$ can be ignored since they have closed-form expressions which only depend on \ref{Theorem2Signal}, \ref{Theorem2Noise}, \ref{Theorem2Leak}, \ref{Theorem2Interference}, respectively. Next, for phase shift design, \textbf{Algorithm 1} and \textbf{Algorithm 2} are proposed. When \textbf{Algorithm 1} is applied for \textbf{Algorithm 3}, the complexity lies in the calculation of the Euclidean gradient in \ref{RGA_gradient} and the update of search step size. As such, the computation complexity is $\mathcal{O}(I_0(K^2N^2 + T_{RGA}(N^4 + T_{\rho} ))  )$, where $T_{RGA}$ is the iteration number for RGA updates and $T_{\rho}$ is the number of iterations for searching the step size $\rho_t$. On the other hand, the computation complexity of \textbf{Algorithm 2} depends on the calculation of the objective function $f_q$ in \ref{f_q} and the calculation of the largest eigenvalue of the corresponding matrix, which can be implemented by the low-cost power iteration method \cite{EVD}. To sum up, applying \textbf{Algorithm 2} for \textbf{Algorithm 3} has the complexity of $\mathcal{O}(I_0(K^2N^2 + T_{MM}(k_1 N^2 + k_2 (N+N^2)^2))  )$ where $I_0$, $T_{MM}$, $k_1$ and $k_2$ represent the number of iterations for BCD, \textbf{Algorithm 2}, and the iterations for calculating the largest eigenvalues of $\mathbf{J}_2$ and $\overline{\mathbf{V}}_t$, respectively.

\vspace{-10pt}    
\section{Simulation Results}
    In this section, numerical results are presented to validate the proposed algorithms' effectiveness and to demonstrate the superiority of the RDARS-aided massive MIMO systems.
    
\vspace{-10pt}   
\subsection{Simulation Setup}

    We consider the RDARS-aided uplink communication system where a RDARS is deployed to assist the communication between $4$ single-antenna users and a multi-antenna BS. As a common setting, a 3D-cartesian coordinate is adopted where the BS and the RDARS are located as (0m, 0m, 10m) and (0m, 0m, 20m), respectively. The four users are uniformly distributed in a circle centered at ($100$m, $-20$m, $1.5$m) with a radius of $10$m. 
    % \footnote{In the simulation, the users' locations are ($104.17$m, $-18.53$m,$1.5$m), ($107.20$m,$-19.07$m,$1.5$m), ($100.00$m, $-18.13$m,$1.5$m), ($103.02$m,$-16.54$m,$1.5$m).} 
    The log-distance large-scale path loss models from the 3GPP propagation environment \cite[Table B.1.2.1-1]{3GPP} are adopted as
    \begin{align}
        PL &= C_0 + 10 \alpha_{e}\log_{10}d, PL \in \{ \gamma_k, \alpha_k, \beta_k \}, \forall k \in \mathcal{K},
    \end{align}
    where $C_0$ is the path loss at reference distance, $d$ is the distance of the corresponding links, and $\alpha_{e}$ is the path loss exponent.  Unless specified otherwise, $C_0 = 30$, and the path-loss exponents of user-RDARS, RDARS-BS, and user-BS are set as $\{ \alpha_{e, UR}, \alpha_{e, RB}, \alpha_{e, UB} \} = \{ 2.3, 2.0, 3.5 \}$. Furthermore, the transmit and noise powers are set as $p_{\max} = 0$ dBm and $\sigma^2 = \sigma_B^2 = \sigma_R^2 =  -80$ dBm. The Rician factors are $\delta = 10$ and $\epsilon = 1$. Also, the AoA and AoD are generated randomly from [0,2$\pi$] \cite{Kangda_Zhi} and are fixed once generated. The weights $w_k, \forall k \in \mathcal{K}$ are chosen inversely proportional to the direct-link path loss, i.e., $\gamma_k, \forall k \in \mathcal{K} $, and then normalized by $\sum w_k = 1$. The number of symbols in each channel coherence time interval is $\tau_c = 196$ where $\tau = 8$ symbols are utilized for channel estimation\cite{Kangda_Zhi}.  We set $p_p = p_{\max}$ for simplicity. The space between the RDARS elements is set as $d_s = \lambda/ 2$. 
    Since the design of $\mathbf{A}$ is out of the scope of this paper, without loss of generality, the indicator matrix $\mathbf{A}$ is fixed and is configured by setting the top $a$ diagonal elements as 1, i.e., $\mathbf{A}^{H}\mathbf{A}(i, i) =1, \forall i \in \{ 1, \ldots, a\}$ which means the top $a$ elements on RDARS perform \textit{connected mode}.
    % In contrast, the remaining $N-a$ element acts as \textit{reflection mode}. 

    % \vspace{-10pt}   
    % \vspace{-1cm}
    \vspace{-3pt}    
    \subsection{Weighted Sum Rate Maximization}

    In this subsection, we first verify the effectiveness and the convergence of the proposed TTS optimal design, i.e., \textbf{Algorithm 3}, denoted as (\textit{RDARS/ \ RIS Joint}). To this end, the following baselines are included for comparison: 
    \begin{itemize}
            \item  (\textit{RDARS/\ RIS RGA}), $p_i = p_{\max}, \forall i$, and phase shift $\boldsymbol{\theta}$ is optimized by \textbf{Algorithm 1}.
            \item  (\textit{RDARS/\ RIS MM}), $p_i = p_{\max}, \forall i$, and phase shift $\boldsymbol{\theta}$ is optimized by \textbf{Algorithm 2}.
            \item  (\textit{RDARS/\ RIS}), $\boldsymbol{\theta} = \mathbf{1}$, $p_i = p_{\max}, \forall i$.
            \item  (\textit{DAS Power}), $N = a$, and $p_i, \forall i$ is updated by \textbf{Algorithm 3}.
            \item  (\textit{W.O. RDARS Power}), $N = 0$, $a = 0$, and $p_i, \forall i$ is updated by \textbf{Algorithm 3}.
    \end{itemize}

    % \begin{figure} [!htb]  
    %     \small
    %     \setlength{\belowcaptionskip}{-3mm}  
    %     \centering
    %     \includegraphics[width=0.85\columnwidth]{./FigV2/convergence.eps} 
    %     \caption{ 
    %     Convergence behavior of the proposed algorithms and different baselines. $L = 128$, $N = 32$, $a = 2$, and $p_{\max} = 0$ dBm.}     
    % \end{figure}

    % This demonstrates the effectiveness of the proposed \textbf{Algorithm 3}.
    
    To begin with, Fig. 3 shows the convergence behavior of the proposed algorithms and other baselines. It is clear that with the optimal joint design, RDARS-aided system can achieve higher sum rate than the DAS and RIS-aided systems. This is mainly due to the joint exploration of the distribution and reflection gains in RDARS. Moreover, the proposed ``\textit{RDARS Joint}" outperforms both ``\textit{RDARS MM}'' and ``\textit{RDARS RGA}'' in terms of the weighted sum rate through the joint design of both phase shifts and user transmit powers. Besides, the ``Two-tier MM-based algorithm'' and the ``RGA-based algorithm'' for phase shift design in RDARS-aided or RIS-aided systems can achieve almost the same converged value. In contrast, the ``RGA-based algorithm'' has a faster convergence speed than the  ``Two-tier MM-based algorithm'' in RIS-aided systems. In addition, it is noted that the proposed algorithms converge monotonically within $10$ iterations, which further corroborates their fast convergence speed. Based on the effectiveness and the convergence of the proposed \textbf{Algorithm 3} illustrated in Fig. 3, we only consider ``\textit{RDARS Joint}", ``\textit{RIS Joint}", ``\textit{DAS Power}'' and ``\textit{W.O. RDARS Power}'' in the following simulations for simplicity.

    \begin{figure} [t]  
        \small
        \setlength{\belowcaptionskip}{-3mm}  
        \centering
        \includegraphics[width=0.80\columnwidth]{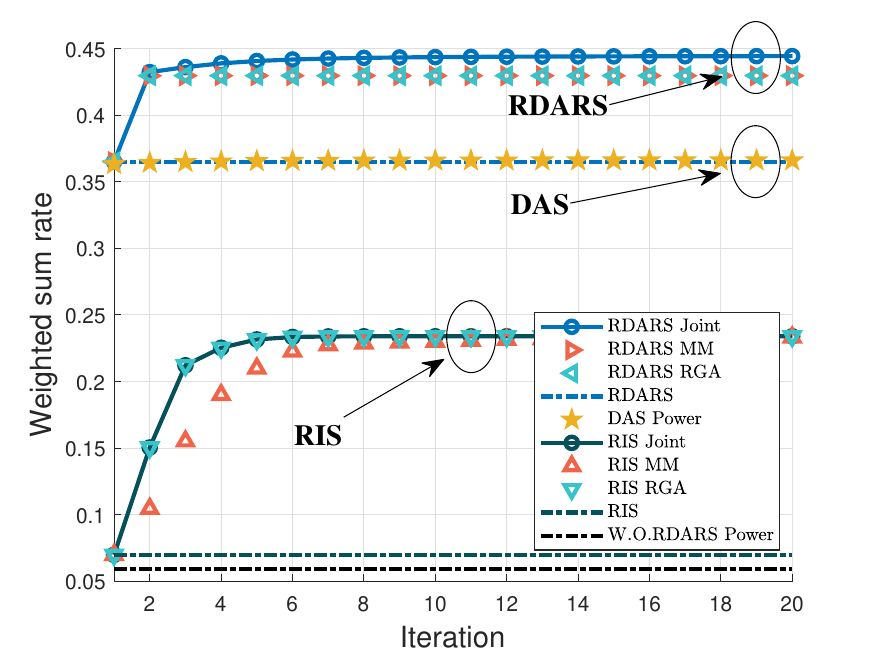} 
        \caption{ 
        Convergence behavior of the proposed algorithms and different baselines. $L = 128$, $N = 32$, $a = 2$ and $p_{\max} = 0$ dBm.}     
    \end{figure}
    
    \begin{figure} [t]  
        \small
        \setlength{\belowcaptionskip}{-3mm}  
        \centering
        \includegraphics[width=0.80\columnwidth]{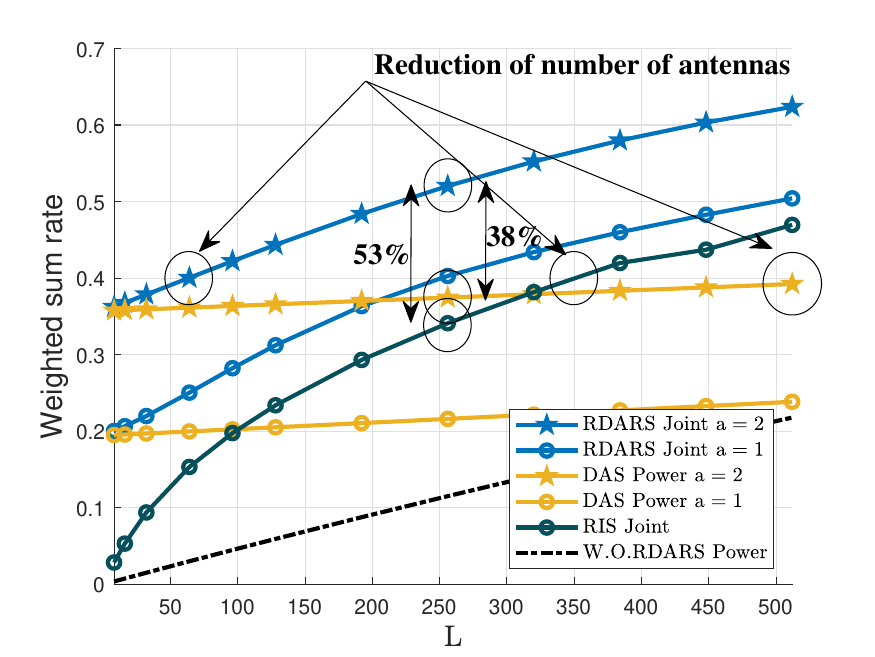} 
        \caption{ 
        The weighted sum rate [bps/Hz] versus the number of BS antennas $L$ when $N = 32$ and $p_{\max} = 0$ dBm.}     
    \end{figure}

     In Fig. 4, the weighted sum rate is provided against the number of BS antennas $L$. It is clear that with the help of RDARS, we can achieve the same rate as its counterparts but with a much-reduced number of active antennas/RF chains. For example, the rate obtained by RIS with a $350$-antennas BS can be obtained by a $64$-antennas BS in an RDARS-aided massive MIMO system with only $2$ elements acting \textit{connected mode} on a $32$-elements RDARS. Besides, RDARS attains $53$\% and $38$\% rate increments compared to RIS-aided and DAS systems with a $256$-antennas BS. Since the cost and the energy consumption of elements acting \textit{reflection mode} are much lower than that of active antennas applied on conventional massive MIMO systems, incorporating RDARS into massive MIMO is a promising and cost-effective solution for future wireless communication.

    %  \begin{figure} [htb]
    % 	\centering
    %         \vspace{-0.1cm}
    %         \setlength{\belowcaptionskip}{-0.1cm}
    %         \includegraphics[width=0.46\textwidth]{./FigV2/p.eps}
    % 	\caption{The weighted sum rate [bps/Hz] versus the maximum transmit power pmaxp_{\max} [dBm] when L=64L = 64, N=32N = 32.}
    % \end{figure}

    \vspace{-0.2cm}
    \begin{figure} [!htb]  
        \small
        \setlength{\belowcaptionskip}{-3mm}  
        \centering
        \includegraphics[width=0.80\columnwidth]{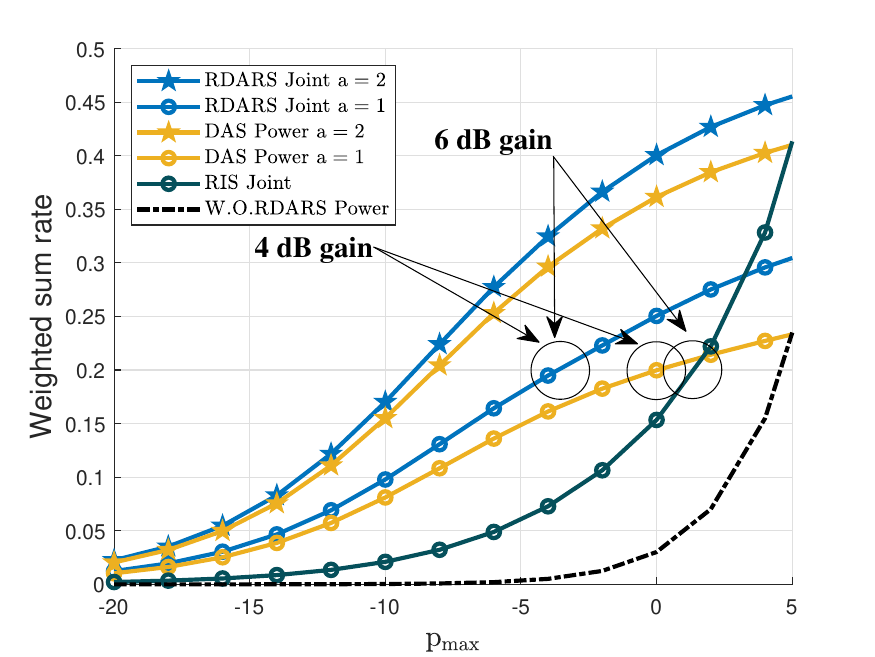} 
        \caption{ 
        The weighted sum rate [bps/Hz] versus the maximum transmit power $p_{\max}$ [dBm] when $L = 64$ and $N = 32$.}     
    \end{figure}

     Fig. 5 illustrates the weighted sum rate versus the maximum transmit power $p_{\max}$. First, the RDARS-aided system outperforms its counterparts given a wide range of $p_{\max}$. For example, to achieve a $0.2$ bps/Hz weighted sum rate, the required transmit powers for DAS and RIS-aided systems are about $0$ dBm and $2$ dBm, respectively. In contrast, the RDARS-aided system with only single element acting \textit{connected mode} only requires $-4$ dBm. This demonstrates $4$ dB and $6$ dB power-savings provided by the RDARS-aided system compared to the DAS and the RIS-aided systems, respectively. Besides, the results also indicate that using \textit{connected mode} on RDARS is favorable in a low transmit power regime, when comparing the rates of RDARS-aided and RIS-aided systems. But with the increase of the transmit power, the superiority of RDARS with connected-mode elements over RIS is decreasing. The rationale behind this phenomenon is that the rate will be interference-limited in a high transmit power regime and the connected-mode elements on RDARS may not always contribute positively due to the high interference received.  Nevertheless, with the additional benefits from ``reconfigurability" of RDARS, which allows for dynamically configuring different modes for each element \cite{RDARS}, RDARS has more freedom and flexibility to adapt to the communication environments and then boost the performance further. The reconfigurability is achieved through the design of the indicator matrix $\mathbf{A}$ and  will be exploited deeply in our future work. Finally, Fig. 6 shows the weighted sum rate versus the number of RDARS elements $N$. It is noted that a remarkable performance gain is attained by RDARS when compared to its counterparts.

    \vspace{-0.3cm}
    \begin{figure} [!htb]  
        \small
        \setlength{\belowcaptionskip}{-3mm}  
        \centering
        \includegraphics[width=0.80\columnwidth]{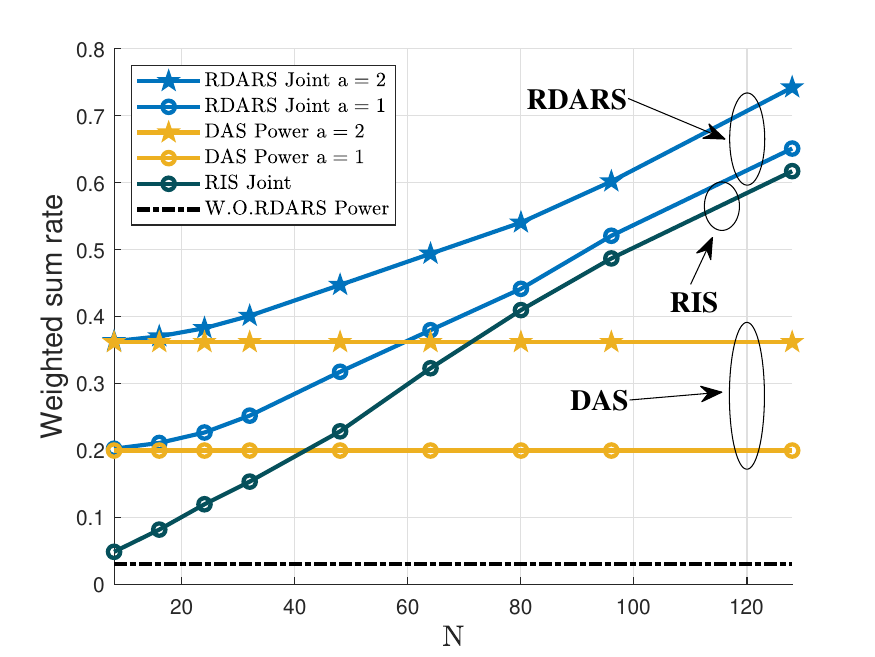} 
        \caption{ 
        The weighted sum rate [bps/Hz] versus the number of RDARS elements $N$ when $L = 64$, $a = 2$ and $p_{\max} = 0$ dBm.}     
    \end{figure}
    \vspace{-0.05cm}

\vspace{-3pt}   
\section{Conclusion}
    This paper investigated a RDARS-aided massive MIMO uplink communication scenario with the TTS transceiver design. The achievable rate under MRC and imperfect CSI was analyzed in closed-form expression, which then enabled optimal TTS transceiver design with reduced overhead and power consumption. An efficient algorithm with guaranteed convergence to the stationary solution was proposed to jointly optimize the users' transmit powers and RDARS phase shifts to maximize the ergodic weighted sum rate. Simulation results validated the effectiveness of the proposed algorithm and demonstrated the superiority of deploying RDARS in MIMO systems to provide substantial rate improvement with a significantly reduced total number of active antennas/RF chains and lower transmit power when compared to its counterparts.

    This paper is an initial research on RDARS-aided multi-user communications. It is believed that RDARS can provide not only distribution gain and reflection gain but also selection gain through the mode selection for each element. The analysis of the selection gain and the optimal design of the element working modes are to be further exploited. The potential of RDARS under other scenarios, e.g., downlink transmission, localization, and integrated sensing and communications, is also worth investigating in the future.

\vspace{-10pt}   
\begin{appendices}
    \section{Proof of \textbf{Lemma 1}}
    Using the independence of $\mathbf{N}$ and $\mathbf{Q}$, the first order moment of $\mathbf{y}_{k,p}$ in \eqref{y_k} can be easily obtained as
    %%  y_k and g_k
    \begin{align}
    \begin{footnotesize}
    \begin{aligned}
        \mathbb{E}[\mathbf{y}_{k,p}] 
        &=
        \mathbb{E}[\mathbf{q}_k] 
       + \frac{1}{\sqrt{\tau p_p}}\mathbb{E}[\mathbf{N}]\mathbf{s}_{k,p} 
       % =
       % \mathbb{E}[\mathbf{q}_k]
       = \mathbb{E}[\mathbf{q}_k]
       =
            \begin{bmatrix}
                \sqrt{c_k \delta \epsilon_k} \overline{\mathbf{H}}\mathbf{B} \overline{\mathbf{h}}_k \\
                \sqrt{d_k \epsilon_k} \mathbf{A}\overline{\mathbf{h}}_k
            \end{bmatrix}.
    \end{aligned}
    \end{footnotesize}
    \end{align}
    Also, we have $\mathbb{C}[\mathbf{y}_{k,p}, \mathbf{y}_{k,p}] = \mathbb{E}[ ( \mathbf{y}_{k,p} - \mathbb{E}[\mathbf{y}_{k,p}]) (\mathbf{y}_{k,p} - \mathbb{E}[\mathbf{y}_{k,p}])^{H}]$. Recalling the definitions of $\mathbf{q}_k$, $\mathbf{A}$ and $\mathbf{B}$ in Section II, and leveraging identity $\mathbf{B}\mathbf{A}^{H} = \mathbf{0}$, it is shown that the covariance matrix for $\mathbf{q}_k$ is a block diagonal matrix. Thus, we have
    \begin{align}
    \begin{footnotesize}
    \begin{aligned}
        \mathbb{C}[\mathbf{q}_k, \mathbf{y}_{k,p}]
       &=
       \mathbb{C}[\mathbf{q}_k, \mathbf{q}_k]
       =
       \begin{bmatrix}
            \mathbb{C}[\mathbf{h}_{B,k}, \mathbf{h}_{B,k}] & \mathbf{0} & \\
            \mathbf{0} &
            \mathbb{C}[\mathbf{h}_{R,k}, \mathbf{h}_{R,k}] 
        \end{bmatrix},
    \end{aligned}
    \end{footnotesize}
    \end{align}
    where $\mathbb{C}[\mathbf{h}_{B,k}, \mathbf{h}_{B,k}] = (N-a) c_k \delta \mathbf{a}_L \mathbf{a}_L^{H} + ( (N-a)c_k(\epsilon_k + 1) + \gamma_k ) \mathbf{I}_L$, and $\mathbb{C}[\mathbf{h}_{R,k},\mathbf{h}_{R,k}^{H}] =\mathbb{E}[ (\mathbf{h}_{R,k} - \mathbb{E}[\mathbf{h}_{R,k} ]) (\mathbf{h}_{R,k} - \mathbb{E}[\mathbf{h}_{R,k} ])^{H} ] = d_k \mathbf{I}_a$. The proof is thus completed. 

    \vspace{-10pt}  
    \section{Proof of \textbf{Theorem 1}}
        The LMMSE estimate of \textit{equivalent channel} $\mathbf{q}_k$ based on the received pilot signal $\mathbf{y}_{k,p}$ is obtained as \cite{SIG-093}
        \begin{align}
        \begin{footnotesize}
        \begin{aligned}
            \hat{\mathbf{q}_k} = 
            \mathbb{E}[\mathbf{q}_k] +
            \mathbb{C}[\mathbf{q}_k,\mathbf{y}_{k,p}] \mathbb{C}[\mathbf{y}_{k,p},\mathbf{y}_{k,p}]^{-1} (\mathbf{y}_{k,p} - \mathbb{E}[\mathbf{y}_{k,p}]).
        \end{aligned}
        \end{footnotesize}
        \end{align}
    Using the lemma of the inverse of block matrix and Woodbury matrix identity \cite{MatrixCookbook}, we obtain
        \begin{align}
        \begin{footnotesize}
        \begin{aligned}
            \mathbb{C}[\mathbf{q}_k, \mathbf{y}_{k,p}]\mathbb{C}[\mathbf{y}_{k,p}, \mathbf{y}_{k,p}]^{-1} 
            &=
            \begin{bmatrix}
                \mathbf{E}_{k} &  \mathbf{0}_{L \times a} \\
                \mathbf{0}_{a \times L} & a_{k5} \mathbf{I}_a
            \end{bmatrix}.
        \end{aligned}
        \end{footnotesize}    
        \end{align}
    By using $\mathbb{E}[\mathbf{q}_k] = \mathbb{E}[\mathbf{y}_{k,p}]$, the LMMSE channel estimate is obtained as
    \begin{align}
    \begin{footnotesize}
    \begin{aligned}
        \hat{\mathbf{q}}_k 
        \triangleq
        \mathbf{C}_k \mathbf{y}_{k,p} + \mathbf{D}_k,
    \end{aligned}
    \end{footnotesize}
    \end{align}
    where $\mathbf{C}_k$ and $\mathbf{D}_k$ are defined as
    \begin{align}
    \begin{footnotesize}
    \begin{aligned}
        \mathbf{C}_k 
        &=
            \begin{bmatrix}
                \mathbf{E}_k & \mathbf{0}_{L \times a} \\
                \mathbf{0}_{a \times L} & a_{k5}\mathbf{I}_a
            \end{bmatrix}
        ,
        \mathbf{D}_k 
        =
            \begin{bmatrix}
                (\mathbf{I}_L - \mathbf{E}_k)\sqrt{c_k \delta \epsilon_k} \overline{\mathbf{H}}\mathbf{B}\overline{\mathbf{h}}_k \\
                (1- a_{k5}) \sqrt{d_k\epsilon_k} \mathbf{A}\overline{\mathbf{h}}_k
            \end{bmatrix}.
    \end{aligned}
    \end{footnotesize}
    \end{align} 
    
    As such, we obtain the channel estimate using the LMMSE estimator in (\ref{EstimatedChannel})
    where the equality $\mathbf{E}_k \overline{\mathbf{H}} = (a_{k3}L + a_{k4})\overline{\mathbf{H}}$ is utilized. This completes the proof.

    \vspace{-10pt}  
    \section{Proof of \textbf{Theorem 2}}
    Here we derive $E_{k}^{signal}$, $E_{k}^{Noise}$, $E_{k}^{Leak}$ and $I_{k,i}$ in \ref{Theorem2Signal}, \ref{Theorem2Noise}, \ref{Theorem2Leak} and \ref{Theorem2Interference}, respectively, as follows.
    \paragraph{Signal and Noise Terms}

    Leveraging the unbiasness and orthogonality of the LMMSE estimator, i.e., $\mathbb{E}[\mathbf{e}_k\mathbf{y}_k^{H}] = \mathbf{0}$ and $\mathbb{E}[\mathbf{e}_k] = \mathbf{0}$, we have
    $\mathbb{E}[\hat{\mathbf{q}}_k^{H} \mathbf{q}_k] =\mathbb{E}[\hat{\mathbf{q}}_k^{H}(\mathbf{e}_{k} + \hat{\mathbf{q}}_k) ]$. Accordingly, we have $\sqrt{E_{k}^{signal}(\mathbf{A},\boldsymbol{\Theta})} = \mathbb{E}[\hat{\mathbf{q}}_k^{H} \mathbf{q}_k]$, and thus the signal and noise terms can be derived by identifying the non-zero terms following similar procedure as in \cite{Kangda_Zhi}.
    
    \paragraph{Interference Term}

    \newcounter{Interference_Appendix}
        \begin{figure*}[!htb] 
        % ensure that we have normalsize text
        % \normalsize
        % Store the current equation number.
        \scriptsize
        \setcounter{Interference_Appendix}{\value{equation}}
        \setcounter{equation}{43}
        \begin{align} 
            &I_{ki} (\mathbf{A},\mathbf{\Theta})
            =  
                \underbrace{
                    \mathbb{E}\bigg[ \bigg|
                        \hat{\underline{\mathbf{h}}}_{B,k}^{H}
                        \underline{\mathbf{h}}_{B,i}
                    \bigg|^2 \bigg]
                }_{\widetilde{1}}
            +
                \underbrace{
                    \mathbb{E}\bigg[ \bigg|
                         \hat{\underline{\mathbf{h}}}_{B,k}^{H}
                         \mathbf{d}_{i}
                    \bigg|^2 \bigg]
                }_{\widetilde{2}}
            +   
                \underbrace{
                    \mathbb{E}\bigg[ \bigg|
                         \mathbf{d}_{k}^{H}
                         \mathbf{E}_{k}^{H}
                         \underline{\mathbf{h}}_{B,i}
                    \bigg|^2 \bigg]
                }_{\widetilde{3}}
            +
                \underbrace{
                    \mathbb{E}\bigg[ \bigg|
                         \mathbf{d}_{k}^{H}
                         \mathbf{E}_{k}^{H}
                         \mathbf{d}_{i}
                    \bigg|^2 \bigg]
                }_{\widetilde{4}}
            +
                \underbrace{
                    \frac{1}{\tau p_p}
                    \mathbb{E}\bigg[ \bigg|
                         \mathbf{s}_{k}^{H}
                         \mathbf{N}_{B}^{H}
                         \mathbf{E}_{k}^{H}
                         \underline{\mathbf{h}}_{B,i}
                    \bigg|^2 \bigg]
                }_{\widetilde{5}}
            +   
                \underbrace{
                    \frac{1}{\tau p_p}
                    \mathbb{E}\bigg[ \bigg|
                         \mathbf{s}_{k}^{H}
                         \mathbf{N}_{B}^{H}
                         \mathbf{E}_{k}^{H}
                         \mathbf{d}_{i}
                    \bigg|^2 \bigg]
                }_{\widetilde{6}}
            \notag
            \\
            &+
                \underbrace{
                    d_{k}\epsilon_{k}
                    d_{i}\epsilon_{i}
                    \mathbb{E}\bigg[ \bigg|
                         \overline{\mathbf{h}}_{k}^{H}\mathbf{A}^{H}
                         \mathbf{A} \overline{\mathbf{h}}_{i}
                    \bigg|^2 \bigg]
                }_{\widetilde{7}}
            +
                \underbrace{
                    d_{k}\epsilon_{k}d_{i}
                    \mathbb{E}\bigg[ \bigg|
                        \overline{\mathbf{h}}_{k}^{H}\mathbf{A}^{H}
                        \mathbf{A} \widetilde{\mathbf{h}}_{i}
                    \bigg|^2 \bigg]
                }_{\widetilde{8}}
            +
                \underbrace{
                     d_{k} d_{i}\epsilon_{i} e_{k4}^{2} 
                    \mathbb{E}\bigg[ \bigg|
                        \widetilde{\mathbf{h}}_{k}^{H} \mathbf{A}^{H} 
                        \mathbf{A} \overline{\mathbf{h}}_{i}
                    \bigg|^2 \bigg]
                }_{\widetilde{9}}
            +
                \underbrace{
                    d_{k} d_{i} e_{k4}^2
                    \mathbb{E}\bigg[ \bigg|
                        \widetilde{\mathbf{h}}_{k}^{H} \mathbf{A}^{H} 
                        \mathbf{A} \widetilde{\mathbf{h}}_{i}
                    \bigg|^2 \bigg]
                }_{\widetilde{10}}
            \notag
            \\
            &+
                \underbrace{
                    \frac{1}{\tau p_p} d_{i}\epsilon_{i} e_{k4}^2 
                    \mathbb{E}\bigg[ \bigg|
                        \mathbf{s}_{k}^{H} \mathbf{N}_{R}^{H} 
                          \mathbf{A} \overline{\mathbf{h}}_{i}
                    \bigg|^2 \bigg]
                }_{\widetilde{11}}
            +
                \underbrace{
                    \frac{1}{\tau p_p} d_{i} e_{k4}^2 
                    \mathbb{E}\bigg[ \bigg|
                        \mathbf{s}_{k}^{H} \mathbf{N}_{R}^{H}
                        \mathbf{A} \widetilde{\mathbf{h}}_{i}
                    \bigg|^2 \bigg]
                }_{\widetilde{12}}
            +
                 \underbrace{
                    2 {\rm{Re}} \bigg\{
                    \sqrt{d_{k}\epsilon_{k}}
                    \sqrt{d_{i}\epsilon_{i}}
                    \mathbb{E}\bigg[ 
                        \hat{\underline{\mathbf{h}}}_{B,k}^{H}
                        \underline{\mathbf{h}}_{B,i}
                        \overline{\mathbf{h}}_{i}^{H}
                        \mathbf{A}^{H}\mathbf{A}
                        \overline{\mathbf{h}}_{k}
                     \bigg]
                    \bigg\}
                }_{\widetilde{13}}
                \tag*{\normalsize (44)}
                \label{Interference_Appendix}
        \end{align}
        % Restore the current equation number.
        \setcounter{equation}{\value{Interference_Appendix}}
        % The IEEE uses as a separator
        \hrulefill
        \end{figure*}
        \addtocounter{equation}{1}
        % \vspace{-2cm}

    Recalling that $I_{ki} = \sum\nolimits_{i\neq k}^{K} p_i \mathbb{E}\big[|\hat{\mathbf{q}}_k^{H} \mathbf{q}_i|^2\big]$, we can rewrite the expression as \ref{Interference_Appendix}, where the terms $\widetilde{1}$ to $\widetilde{6}$ can be obtained following similar approach as in \cite{Kangda_Zhi}. As such, we focus on the derivations of $\widetilde{7}$ to $\widetilde{13}$. Specifically, we have 
    $\widetilde{7} = 
                        d_{k} d_{i} \epsilon_{k}\epsilon_{i} |g_{k,i} (\mathbf{A}) |^2$,
    $\widetilde{8}
                    =
                        a d_{k}\epsilon_{k}d_{i}$,
    $\widetilde{9}
                    =
                        a d_{k} d_{i}\epsilon_{i} e_{k4}^{2}$,
    $\widetilde{10}
                    =
                        a d_{k} d_{i} e_{k4}^2$,
    $\widetilde{11}
                =
                    a \frac{\sigma_{R}^2}{\tau \rho} d_{i}\epsilon_{i} e_{k4}^2$,
    $\widetilde{12}
                =
                    a \frac{\sigma_{R}^2}{\tau \rho} d_{i} e_{k4}^2$,
    $\widetilde{13}
                =
                    2 {\rm{Re}} \{
                    L \sqrt{c_k c_i d_k d_i} \delta \epsilon_k \epsilon_i 
                    f_{k}^{H}(\mathbf{B}) f_{i}(\mathbf{B}) g_{i,k}(\mathbf{A})
                    +
                    L \sqrt{c_k c_i d_{k} d_{i}} \epsilon_k \epsilon_i
                      e_{k1}
                     (\overline{\mathbf{h}}_k^{H}\overline{\mathbf{h}}_i - g_{k,i}(\mathbf{A}) ) g_{i,k}(\mathbf{A})
                    \}$.

    \paragraph{Signal Leakage Term}
    
    The signal leakage term is calculated as $E_{k}^{leak}(\mathbf{A},\mathbf{\Theta})
     = \mathbb{E}[ |\hat{\mathbf{q}}_{k}^{H} \mathbf{q}_k|^2] - |\mathbb{E}[\hat{\mathbf{q}}_{k}^{H} \mathbf{q}_k]|^2$ where we already have $\mathbb{E}[\hat{\mathbf{q}}_{k}^{H} \mathbf{q}_k]$. As such, the remaining calculation is on the term $\mathbb{E}[ |\hat{\mathbf{q}}_{k}^{H} \mathbf{q}_k|^2]$ as in  \ref{Leakage_Appendix}. Here the terms $\widetilde{1}$ to $\widetilde{7}$ can also be obtained following similar approach as in \cite{Kangda_Zhi}. Therefore, we focus on the derivations of the terms $\widetilde{8}$ to $\widetilde{18}$ as follows. Specifically, we have 
     $\widetilde{8}
        =
            a d_k^2 \epsilon_k $,
    $\widetilde{9}
        =
            a d_k^2 \epsilon_k e_{k4}^2$,
    $\widetilde{10}
        =
            a^2 d_k^2 e_{k4}^2 + a d_k^2 e_{k4}^2$,
    $\widetilde{11}
        =
            a \frac{\sigma_R^2}{\tau \rho} d_k \epsilon_k e_{k4}^2 $,
    $\widetilde{12}
        =
            a\frac{\sigma_R^2}{\tau \rho} d_k e_{k4}^2 $.
    The other terms $\widetilde{13}$-$\widetilde{18}$ are calculated by first identifying the non-zero expectation terms, and then calculated as follows: 
    $\widetilde{13}
        =
            2 L^2 |f_{k}(\mathbf{A},\mathbf{\Theta}) |^2 c_k \delta \epsilon_k \gamma_k e_{k1}
            +
            2 L^2 (N-a) |f_{k}(\mathbf{A},\mathbf{\Theta}) |^2 c_k \delta \gamma_k e_{k1} e_{k2}
        +
            2 L^2 (N-a) |f_{k}(\mathbf{A},\mathbf{\Theta}) |^2 c_k (\epsilon_k + 1) \gamma_k e_{k1}^2$,
    $\widetilde{14}
        =
            2 L a |f_{k}(\mathbf{A},\mathbf{\Theta}) |^2 c_k d_k \delta \epsilon_k^2
            +
            2 L a (N-a) c_k d_k \delta \epsilon_k e_{k2}
            +
            2 L a (N-a) c_k d_k \epsilon_k^2 e_{k1}
            +
            2 L a (N-a) c_k d_k \epsilon_k e_{k1}$,
    $\widetilde{15} 
        =
            2 L a |f_{k}(\mathbf{A},\mathbf{\Theta}) |^2 c_k d_k \delta \epsilon_k e_{k4}
            +
            2 L a (N-a) c_k d_k \delta e_{k2} e_{k4}
        +
            2 L a (N-a) c_k d_k \epsilon_k e_{k1} e_{k4}
            +
            2 L a (N-a) c_k d_k e_{k1} e_{k4}$,
    $\widetilde{16}
        =
            2 L a d_k \gamma_k \epsilon_k e_{k1}$,
    $\widetilde{17}
        =
            2 L a d_k \gamma_k e_{k1} e_{k4}$,
    $\widetilde{18}
        =
           2 a^2 d_k^2 \epsilon_k e_{k4}$.
    Therefore, we have completed the calculation of the required statistics for the relevant terms. This thus completes the proof.

        \newcounter{Leakage_Appendix}
        \begin{figure*}[!htb]
        \scriptsize
        % ensure that we have normalsize text
        % \normalsize
        % Store the current equation number.
        \setcounter{Leakage_Appendix}{\value{equation}}
        \setcounter{equation}{44}
        \begin{align} 
        &\mathbb{E} [|\hat{\mathbf{q}}_k^{H} \mathbf{q}_k|^2]
        =
                \underbrace{
                    \mathbb{E} \bigg[ 
                        \bigg| 
                            \hat{\underline{\mathbf{h}}}_{B,k}^{H} \underline{\mathbf{h}}_{B,k}
                        \bigg|^2
                    \bigg]
                }_{\widetilde{1}}
            +
                \underbrace{
                    \mathbb{E} \bigg[ 
                        \bigg| 
                            \hat{\underline{\mathbf{h}}}_{B,k}^{H} 
                            \mathbf{d}_k
                        \bigg|^2
                    \bigg]
                }_{\widetilde{2}}
            +   
                \underbrace{
                    \mathbb{E} \bigg[ 
                        \bigg| 
                            \mathbf{d}_k^{H}
                            \mathbf{E}_k^{H}
                            \underline{\mathbf{h}}_{B,k} 
                        \bigg|^2
                    \bigg]
                }_{\widetilde{3}}
            +
                \underbrace{
                    \mathbb{E} \bigg[ 
                        \bigg| 
                            \mathbf{d}_k^{H}
                            \mathbf{E}_k^{H}
                            \mathbf{d}_k
                        \bigg|^2
                    \bigg]
                }_{\widetilde{4}}
            +
                \underbrace{
                    \frac{1}{\tau p_p}
                    \mathbb{E} \bigg[ 
                        \bigg| 
                            \mathbf{s}_k^{H} \mathbf{N}_{B}^{H}
                            \mathbf{E}_k^{H} \underline{\mathbf{h}}_{B,k}
                        \bigg|^2
                    \bigg]
                }_{\widetilde{5}}
        +
                \underbrace{
                    \frac{1}{\tau p_p}
                    \mathbb{E} \bigg[ 
                        \bigg| 
                            \mathbf{s}_{k}^{H} \mathbf{N}_{B}^{H}
                            \mathbf{E}_{k}^{H} \mathbf{d}_k
                        \bigg|^2
                    \bigg]
                }_{\widetilde{6}}
            %%%%
        \notag
        \\
            &+
                \underbrace{
                    d_k^2 \epsilon_k^2
                    \mathbb{E} \bigg[ 
                        \bigg| 
                            \overline{\mathbf{h}}_k^{H} 
                            \mathbf{A}^{H} \mathbf{A}
                            \overline{\mathbf{h}}_k
                        \bigg|^2
                    \bigg]
                }_{\widetilde{7}}
            +
                \underbrace{
                    d_k^2 \epsilon_k 
                    \mathbb{E} \bigg[ 
                        \bigg| 
                            \overline{\mathbf{h}}_k^{H} 
                            \mathbf{A}^{H} \mathbf{A}
                            \widetilde{\mathbf{h}}_k
                        \bigg|^2
                    \bigg]
                }_{\widetilde{8}}
            +
                \underbrace{
                    d_k^2 \epsilon_k e_{k4}^2 
                    \mathbb{E} \bigg[ 
                        \bigg| 
                            \widetilde{\mathbf{h}}_k^{H} 
                            \mathbf{A}^{H} \mathbf{A}
                            \overline{\mathbf{h}}_k
                        \bigg|^2
                    \bigg]
                }_{\widetilde{9}}
        +
                \underbrace{
                    d_k^2 e_{k4}^2 
                    \mathbb{E} \bigg[ 
                        \bigg| 
                            \widetilde{\mathbf{h}}_k^{H} 
                            \mathbf{A}^{H} \mathbf{A}
                            \widetilde{\mathbf{h}}_k
                        \bigg|^2
                    \bigg]
                }_{\widetilde{10}}
            %%%%
        +
                \underbrace{
                    \frac{1}{\tau p_p} d_k \epsilon_k e_{k4}^2 
                    \mathbb{E} \bigg[ 
                        \bigg| 
                                \mathbf{s}_k^{H}
                                \mathbf{N}_R^{H}
                                \mathbf{A}
                                \overline{\mathbf{h}}_k
                        \bigg|^2
                    \bigg]
                }_{\widetilde{11}}
    \notag
    \\
        &+
                \underbrace{
                    \frac{1}{\tau p_p} d_k e_{k4}^2 
                    \mathbb{E} \bigg[ 
                        \bigg| 
                                \mathbf{s}_k^{H}
                                \mathbf{N}_R^{H}
                                \mathbf{A}
                                \widetilde{\mathbf{h}}_k
                        \bigg|^2
                    \bigg]
                }_{\widetilde{12}}
        +  
                \underbrace{
                    2 {\rm{Re}} \bigg\{
                        \mathbb{E} \bigg[ 
                                \hat{\underline{\mathbf{h}}}_{B,k}^{H} 
                                \underline{\mathbf{h}}_{B,k}
                                \mathbf{d}_k^{H}
                                \mathbf{E}_k^{H}
                                \mathbf{d}_k
                        \bigg]
                    \bigg\}
                }_{\widetilde{13}}
            +
                \underbrace{
                    2 {\rm{Re}} \bigg\{ 
                        d_k \epsilon_k
                        \mathbb{E} \bigg[ 
                                \hat{\underline{\mathbf{h}}}_{B,k}^{H} 
                                \underline{\mathbf{h}}_{B,k}
                                \overline{\mathbf{h}}_k^{H}
                                \mathbf{A}^{H} \mathbf{A}
                                \overline{\mathbf{h}}_k
                        \bigg]
                    \bigg\}
                }_{\widetilde{14}}
            +
                \underbrace{
                    2 {\rm{Re}} \bigg\{ 
                        d_k e_{k4}
                        \mathbb{E} \bigg[ 
                                \hat{\underline{\mathbf{h}}}_{B,k}^{H} 
                                \underline{\mathbf{h}}_{B,k}
                                \widetilde{\mathbf{h}}_k^{H}
                                \mathbf{A}^{H} \mathbf{A}
                                \widetilde{\mathbf{h}}_k
                        \bigg]
                    \bigg\}
                }_{\widetilde{15}}
        \notag
            \\
        &+
                \underbrace{
                    2 {\rm{Re}} \bigg\{ 
                        d_k \epsilon_k
                        \mathbb{E} \bigg[ 
                                \mathbf{d}_k^{H}
                                \mathbf{E}_k^{H}
                                \mathbf{d}_k
                                \overline{\mathbf{h}}_k^{H}
                                \mathbf{A}^{H} \mathbf{A}
                                \overline{\mathbf{h}}_k
                        \bigg]
                    \bigg\}
                }_{\widetilde{16}}
            +
                \underbrace{
                    2 {\rm{Re}} \bigg\{ 
                        d_k e_{k4}
                        \mathbb{E} \bigg[ 
                                \mathbf{d}_k^{H}
                                \mathbf{E}_k^{H}
                                \mathbf{d}_k
                                \widetilde{\mathbf{h}}_k^{H}
                                \mathbf{A}^{H} \mathbf{A}
                                \widetilde{\mathbf{h}}_k
                        \bigg]
                    \bigg\}
                }_{\widetilde{17}}
        +   
                \underbrace{
                    2 {\rm{Re}} \bigg\{ 
                        d_k^2 \epsilon_k e_{k4}
                        \mathbb{E} \bigg[ 
                                \overline{\mathbf{h}}_k^{H}
                                \mathbf{A}^{H}\mathbf{A}
                                \overline{\mathbf{h}}_k
                                \widetilde{\mathbf{h}}_k^{H}
                                \mathbf{A}^{H} \mathbf{A}
                                \widetilde{\mathbf{h}}_k
                        \bigg]
                    \bigg\}
                }_{\widetilde{18}}
            \tag*{\normalsize (45)}
            \label{Leakage_Appendix}
    \end{align}
        % Restore the current equation number.
        \setcounter{equation}{\value{Leakage_Appendix}}
        % The IEEE uses as a separator
        \hrulefill
        \end{figure*}
        \addtocounter{equation}{1}

    \vspace{-10pt}  
    \section{Proof of \textbf{Lemma 3}}

    To provide a simple expression for the gradient of $f_q$ w.r.t. $\mathbf{\theta}_{\widetilde{\mathcal{A}}}$, the objective function is first converted into the following compact form as
    \begin{align}
    \begin{footnotesize}
    \begin{aligned}
    f_q(\boldsymbol{\theta}_{\widetilde{\mathcal{A}}},\mathbf{p},\boldsymbol{\eta},\boldsymbol{\chi})  
        &=
            (\boldsymbol{\theta}_{\widetilde{\mathcal{A}}}^{*} \otimes \boldsymbol{\theta}_{\widetilde{\mathcal{A}}})^{H}
            \widetilde{\mathbf{J}}
            (\boldsymbol{\theta}_{\widetilde{\mathcal{A}}}^{*} \otimes \boldsymbol{\theta}_{\widetilde{\mathcal{A}}})
            +
    \boldsymbol{\theta}_{\widetilde{\mathcal{A}}}^{H}\widetilde{\mathbf{K}}\boldsymbol{\theta}
_{\widetilde{\mathcal{A}}}   + {\rm{const}},
            \end{aligned}
            \end{footnotesize}
    \end{align}
where {\rm{const}} is the constant irrelevant to $\boldsymbol{\theta}_{\widetilde{\mathcal{A}}}$. The definitions of $\widetilde{\mathbf{J}}$ and $\widetilde{\mathbf{K}}$ are provided in {\textbf{Lemma 3}}, and the other relevant constants are defined as $\mathbf{C}_k \triangleq {\rm{diag}}(\overline{\mathbf{h}}_k^{H}) \mathbf{a}_N \mathbf{a}_N^{H} {\rm{diag}}(\overline{\mathbf{h}}_k)$, $\mathbf{D}_{k,i} \triangleq {\rm{diag}}(\overline{\mathbf{h}}_k^{H})\mathbf{a}_N \mathbf{a}_N^{H}{\rm{diag}}(\overline{\mathbf{h}}_i)$, $\mathbf{F}_{k,i} \triangleq{\rm{diag}}(\overline{\mathbf{h}}_k^{H}){\rm{diag}}(\overline{\mathbf{h}}_i)$, and $m_{k,i} \triangleq\overline{\mathbf{h}}_k^{H} \overline{\mathbf{h}}_i$. Also, $s_{k,i}$, $n_{k,i}$, $l_{k,i}$, and $u_{k,i}$ are defined by extracting corresponding terms and omitted here for brevity. 
To derive the gradient w.r.t. $\boldsymbol{\theta}_{\widetilde{\mathcal{A}}}$, we first resort to the differentiation of $f_q$ w.r.t. $\boldsymbol{\theta}_{\widetilde{\mathcal{A}}}^{*}$. Accordingly, we have $d \big[ 
            (\boldsymbol{\theta}_{\widetilde{\mathcal{A}}}^{*} \otimes \boldsymbol{\theta}_{\widetilde{\mathcal{A}}})^{H}
            \widetilde{\mathbf{J}}
            (\boldsymbol{\theta}_{\widetilde{\mathcal{A}}}^{*} \otimes \boldsymbol{\theta}_{\widetilde{\mathcal{A}}})
        \big]
        =
            \big[
                (\boldsymbol{\theta}_{\widetilde{\mathcal{A}}}^{*}\otimes\boldsymbol{\theta}_{\widetilde{\mathcal{A}}})^{T}\widetilde{\mathbf{J}}^{T}(\mathbf{I}\otimes\mathbf{I})(\boldsymbol{\theta}_{\widetilde{\mathcal{A}}}\otimes\mathbf{I})
                +
                (\boldsymbol{\theta}_{\widetilde{\mathcal{A}}}^{*}\otimes\boldsymbol{\theta}_{\widetilde{\mathcal{A}}})^{H}\widetilde{\mathbf{J}}(\mathbf{I}\otimes\mathbf{I})(\mathbf{I}\otimes\boldsymbol{\theta}_{\widetilde{\mathcal{A}}})
            \big]^{T} d \boldsymbol{\theta}_{\widetilde{\mathcal{A}}}^{*}$, and
    $d \boldsymbol{\theta}_{\widetilde{\mathcal{A}}}^{H}\widetilde{\mathbf{K}}\boldsymbol{\theta}_{\widetilde{\mathcal{A}}}
        =
        \big[ 
            \boldsymbol{\theta}_{\widetilde{\mathcal{A}}}^{T}\widetilde{\mathbf{K}}^{T}
        \big]^{T} d \boldsymbol{\theta}_{\widetilde{\mathcal{A}}}^{*}$. Finally, by using 
    $\nabla f(\mathbf{z}) = 2 \frac{df(\mathbf{z})}{d\mathbf{z}^{*}}$ \cite{gradient}, we obtain the gradient w.r.t. $\boldsymbol{\theta}_{\widetilde{\mathcal{A}}}$ in \ref{RGA_gradient} which completes the proof.

\vspace{-10pt}  
\section{Proof of \textbf{Proposition 3}}

        Following the updating rule in \eqref{BCD_rule}, we have
        \begin{align} \label{non_increasing1}
        \begin{footnotesize}
        \begin{aligned}
                f_q(\boldsymbol{\theta}_{t}, \mathbf{p}_{t}, \boldsymbol{\eta}_{t}, \boldsymbol{\chi}_{t}) 
            \leq 
                f_q(\boldsymbol{\theta}_{t}, \mathbf{p}_{t+1}, \boldsymbol{\eta}_{t+1}, \boldsymbol{\chi}_{t+1}).
        \end{aligned}
        \end{footnotesize}
        \end{align}
        This holds since the closed-form updates of  $\boldsymbol{\eta}_{t+1}$, $\boldsymbol{\chi}_{t+1}$ and $\mathbf{p}_{t+1}$, are optimal when other variables are fixed as derived in \eqref{Optimal_eta}, \eqref{Optimal_chi} and \eqref{Optimal_power}, respectively.
        For simplicity, denote the parameter set as $\mathbf{W}_{t+1} = \{\mathbf{p}_{t+1}, \boldsymbol{\eta}_{t+1}, \boldsymbol{\chi}_{t+1} \}$. During the optimization w.r.t. $\boldsymbol{\theta}_t$ using \textbf{Algorithm 2} with other variables fixed, denoting the maximum number of iterations as $r_{m}$, we have $\boldsymbol{\theta}_{t,0} = \boldsymbol{\theta}_{t}$ and $\boldsymbol{\theta}_{t,r_{m}-1} = \boldsymbol{\theta}_{t+1}$. Besides, the following relationship holds
        \begin{subequations}
            \begin{align}
            \footnotesize 
            g_0(\boldsymbol{\theta}|\mathbf{W}_{t})
            &\leq
            \footnotesize
            g_1(\boldsymbol{\theta}; \boldsymbol{\theta}_t| \mathbf{W}_{t}), \label{a}
            \\
            \footnotesize 
            g_0(\boldsymbol{\theta}_{t}|\mathbf{W}_{t})
            &=
            \footnotesize 
            g_1(\boldsymbol{\theta}_t; \boldsymbol{\theta}_t| \mathbf{W}_{t}), \label{b}
            \\
            \footnotesize 
            g_1(\boldsymbol{\theta}; \boldsymbol{\theta}_{t}| \mathbf{W}_{t})
            &\leq
            \footnotesize 
            g_2(\boldsymbol{\theta}; \boldsymbol{\theta}_{t,r}| \mathbf{W}_{t}), \label{c}
            \\
            \footnotesize 
            g_1(\boldsymbol{\theta}_{t,r}; \boldsymbol{\theta}_{t}| \mathbf{W}_{t})
            &=
            \footnotesize 
            g_2(\boldsymbol{\theta}_{t,r}; \boldsymbol{\theta}_{t,r}| \mathbf{W}_{t}). \label{d}
            \end{align}
        \end{subequations}

        Accordingly, we have $-f_q(\boldsymbol{\theta}_{t}, \mathbf{p}_{t+1}, \boldsymbol{\eta}_{t+1}, \boldsymbol{\chi}_{t+1})
            =
            g_0(\boldsymbol{\theta}_{t}| \mathbf{W}_{t+1})
        \overset{a}{=}
            g_1(\boldsymbol{\theta}_{t}; \boldsymbol{\theta}_{t}| \mathbf{W}_{t+1})
            \overset{b}{=}
            g_2(\boldsymbol{\theta}_{t,0}; \boldsymbol{\theta}_{t,0}| \mathbf{W}_{t+1})
        \overset{c}{\geq}
            g_2(\boldsymbol{\theta}_{t+1}; \boldsymbol{\theta}_{t,r_m-2}| \mathbf{W}_{t+1})
            \overset{d}{\geq}
            g_1(\boldsymbol{\theta}_{t+1}; \boldsymbol{\theta}_{t}| \mathbf{W}_{t+1})
        \overset{e}{\geq}
            g_0(\boldsymbol{\theta}_{t+1}| \mathbf{W}_{t+1})
            =
            -f_q(\boldsymbol{\theta}_{t+1}, \mathbf{p}_{t+1}, \boldsymbol{\eta}_{t+1}, \boldsymbol{\chi}_{t+1}),
        $ \noindent where $(a)$, $(b)$, $(d)$, $(e)$ hold due to \eqref{b}, \eqref{d}, \eqref{c}, \eqref{a}, respectively. $(c)$ holds due to the update rule in \eqref{optimal_theta_MM}. Therefore, with the fact in \eqref{non_increasing1}, the objective value of $f_q(\boldsymbol{\theta}, \mathbf{p}, \boldsymbol{\eta}, \boldsymbol{\chi})$ is monotonically non-decreasing in the proposed \textbf{Algorithm 3} when \textbf{Algorithm 2} is applied for phase shift design. In addition, it can be inferred that $f_q(\boldsymbol{\theta}, \mathbf{p}, \boldsymbol{\eta}, \boldsymbol{\chi})$ is bounded due to power constraints, i.e., $p_i \leq p_{\max}, \forall i \in \mathcal{K}$. Thus, the proposed \textbf{Algorithm 3} with \textbf{Algorithm 2} for phase shift design finally converges to a stationary solution to problem $\mathcal{P}_1$ \cite{BCD_convergence}. For \textbf{Algorithm 1}, the monotonicity is guaranteed by the backtracking search method. The convergence is established following a similar procedure as mentioned above and is omitted here briefly. Thus, we complete the proof for \textbf{Proposition 3}.

        % \begin{subequations}
        % \begin{align}
        %     & \footnotesize 
        %     -f_q(\boldsymbol{\theta}_{t}, \mathbf{p}_{t+1}, \boldsymbol{\eta}_{t+1}, \boldsymbol{\chi}_{t+1})
        %     =
        %     g_0(\boldsymbol{\theta}_{t}| \mathbf{W}_{t+1})
        % \\ 
        %     & \footnotesize
        %     \overset{a}{=}
        %     g_1(\boldsymbol{\theta}_{t}; \boldsymbol{\theta}_{t}| \mathbf{W}_{t+1})
        %     \overset{b}{=}
        %     g_2(\boldsymbol{\theta}_{t,0}; \boldsymbol{\theta}_{t,0}| \mathbf{W}_{t+1})
        % \\ 
        %     & \footnotesize
        %     \overset{c}{\geq}
        %     g_2(\boldsymbol{\theta}_{t+1}; \boldsymbol{\theta}_{t,r_m-2}| \mathbf{W}_{t+1})
        %     \overset{d}{\geq}
        %     g_1(\boldsymbol{\theta}_{t+1}; \boldsymbol{\theta}_{t}| \mathbf{W}_{t+1})
        % \\ 
        %     & \footnotesize
        %     \overset{e}{\geq}
        %     g_0(\boldsymbol{\theta}_{t+1}| \mathbf{W}_{t+1})
        %     =
        %     -f_q(\boldsymbol{\theta}_{t+1}, \mathbf{p}_{t+1}, \boldsymbol{\eta}_{t+1}, \boldsymbol{\chi}_{t+1}),
        % \end{align}
        % \end{subequations}

\end{appendices}

\bibliographystyle{IEEEtran}
\bibliography{main}

\end{document}